\documentclass[12pt]{article}
\usepackage[intlimits]{amsmath}
\usepackage{amssymb,amsthm,slashed,mathabx}

\usepackage[T1]{fontenc}
\usepackage[cp1250]{inputenc}

\newcommand{\<}{\langle}
\renewcommand{\>}{\rangle}

\newcommand{\Sc}{\mathcal{S}}
\newcommand{\F}{\mathcal{F}}
\newcommand{\B}{\mathcal{B}}
\newcommand{\C}{\mathcal{C}}
\newcommand{\Hc}{\mathcal{H}}
\newcommand{\hM}{\widehat{M}}
\newcommand{\mR}{\mathbb{R}}

\newcommand{\clc}{\overline{V_+}}
\newcommand{\lc}{C_+}
\newcommand{\lct}{C_+^t}
\newcommand{\vq}{\vec{q}}
\newcommand{\vp}{\vec{p}}
\newcommand{\vx}{\vec{x}}

\newcommand{\vl}{\vec{l}}
\newcommand{\vz}{\vec{z}}
\newcommand{\hq}{\hat{q}}
\newcommand{\hn}{\hat{n}}
\newcommand{\hp}{\hat{p}}
\newcommand{\hx}{\hat{x}}

\newcommand{\df}{\dot{f}}

\newcommand{\nrm}[1]{{\left\vert\kern-0.25ex\left\vert\kern-0.25ex\left\vert #1
    \right\vert\kern-0.25ex\right\vert\kern-0.25ex\right\vert}}

\newcommand{\out}{\mathrm{out}}
\newcommand{\al}{\alpha}
\newcommand{\ga}{\gamma}
\newcommand{\ep}{\epsilon}
\newcommand{\la}{\lambda}

\newcommand{\dsp}{\displaystyle}

\newcommand{\vep}{\varepsilon}
\newcommand{\vph}{\varphi}
\newcommand{\vth}{\vartheta}
\newcommand{\ti}{\widetilde}
\newcommand{\w}{\omega}
\newcommand{\W}{\Omega}

\newcommand{\wh}{\widehat}
\newcommand{\wch}{\widecheck}
\newcommand{\p}{\partial}
\newcommand{\con}{\mathrm{const}}
\newcommand{\be}{\begin{equation}}
\newcommand{\ee}{\end{equation}}

\usepackage[showonlyrefs]{mathtools}
\mathtoolsset{showonlyrefs=true}
\usepackage{fixltx2e}
\MakeRobust{\eqref}

\usepackage{bbm}
\DeclareMathOperator{\id}{\mathbbm{1}}
\newcommand{\mN}{\mathbb{N}}
\newcommand{\Sdeg}[1]{\Sc^{#1}}

\DeclareMathOperator{\supp}{supp}

\DeclareMathOperator{\sgn}{sgn}
\newtheorem{thm}{Theorem}
\newtheorem{pr}[thm]{Proposition}
\newtheorem{lem}[thm]{Lemma}

\newtheorem{ass}{Assumption}
\newtheorem{dfn}{Definition}
\setcounter{secnumdepth}{4}

\title{Massless asymptotic fields\\ and Haag-Ruelle theory}
\author{Pawe{\l} Duch\thanks{e-mail: pawel.duch@uj.edu.pl} {} and Andrzej Herdegen\thanks{e-mail: andrzej.herdegen@uj.edu.pl}\\
{\it Institute of Physics, Jagiellonian University,}\\
{\it ul.\,S.\,{\L}ojasiewicza 11, 30-348  Krak\'{o}w, Poland}}
\date{}

\begin{document}

\maketitle

\begin{abstract}\noindent
 We revisit the problem of the existence of asymptotic massless boson fields in quantum field theory. The well-known construction of such fields by Buchholz \cite{bu77}, \cite{bu82} is based on locality and the existence of vacuum vector, at least in regions spacelike to spacelike cones. Our analysis does not depend on these assumptions and supplies a more general framework for fields only very weakly decaying in spacelike directions. In this setting the existence of appropriate null asymptotes of fields is linked with their spectral properties in the neighborhood of the lightcone. The main technical tool is one of the results of a~recent analysis by one of us \cite{he14'}, which allows application of the null asymptotic limit separately to creation/annihilation parts of a wide class of non-local fields. In vacuum representation the scheme allows application of the methods of the Haag-Ruelle theory closely analogous to those of the massive case. In local case this Haag-Ruelle procedure may be combined with
the Buchholz method, which leads to significant simplification.

\vspace{1ex}
\noindent
keywords: quantum field theory, nonlocal fields, scattering theory, Haag-Ruelle theory

\vspace{1ex}
\noindent
MSC2010: 81T05, 81U99

\end{abstract}

\section{Introduction}\label{int}

This work is a sequel to an earlier article by one of us \cite{he14}, in which the infraparticle problem was investigated as the question of the existence of asymptotic fields with energy-momentum transfer on the mass hyperboloid. The idea was tested on a model of asymptotic fields in electrodynamics.

In the present paper we turn attention to massless fields. In quantum electrodynamics photons do not carry electric charge, so they are not plagued by the infraparticle problem. As long as their energy-content stays away from zero (or at least vanishes fast enough in the neighborhood of zero), they pose no problems and may be interpreted as decent zero-mass particles described by Fock representation built in standard way on the vacuum state. However, this picture, although accepted without much ado in most standard textbooks on quantum field theory, breaks down as soon as interaction with charged particles is turned on.

A symptom of the problem appears already at the classical level: if a~particle carrying charge $q$ is scattered from an incoming four-velocity $v$ to an outgoing four-velocity $u$, then the radiation (retarded $-$ advanced) field produced in this process has a long-range tail $F_\mathrm{l.r.}$ of the Coulomb decay rate:
\begin{equation*}
  F_\mathrm{l.r.}(x)=2q\,x\wedge\bigg(\frac{v}{[(v\cdot x)^2-x^2]^{3/2}}
 -\frac{u}{[(u\cdot x)^2-x^2]^{3/2}}\bigg)\,,\quad x^2<0\,.
\end{equation*}
However, such nonzero tail produces in the momentum space the small-energy content which excludes the possibility of representing the field, upon quantization, as a photonic state in the standard Fock space. This is reflected in the appearance of infrared divergencies in the perturbative quantum electrodynamics. On a more elevated level of general structural analysis, as considered within the algebraic approach to QFT, one realizes that the Fock space picture is insufficient for an adequate description of state space of quantum electrodynamics.

In this article we obtain, in a wide context, a relation between null asymptotic behavior of fields and their spectral properties in the neighborhood of the lightcone in the energy-momentum space. Fields are assumed to satisfy some weak condition on the decay of their commutators in spacelike directions, but need not be local. We believe that the limits of strict locality may be to narrow in a constrained theory like QED; we refer the reader to our earlier papers \cite{he14}, \cite{he14'} for more comments on nonlocality and our specific choice of commutator laws. The smearing applied here to the fields still cuts the infrared-singular degrees of freedom, but no further assumptions beside relativistic positivity of energy spectrum in the Hilbert space are needed. In particular, we do not assume the existence of vacuum vector. The main tool for this analysis is a recently obtained estimate on the norms of creation/annihilation components of fields in the assumed class~\cite{he14'}. If a nonzero asymptotic
limit of a field exists, it defines a field with energy-momentum transfer on the lightcone, satisfying the wave equation. Its creation/annihilation components describe particle excitations, but the particle interpretation need not be complete, even in regions spacelike to spacelike cones, as considered in \cite{bu82}.

Next, we show that in vacuum representation the scheme allows the application of the Haag-Ruelle procedure in close analogy with the massive case. One needs here a spectral condition of the Herbst type, as well as some weak clustering condition (satisfied in local case). In strictly local case the spectral condition is redundant, if one follows the method used by Buchholz~\cite{bu77}. The proof of the existence of asymptotic fields and scattering states is then greatly simplified as compared to \cite{bu77}, due to the independent bounds on creation/annihilation components.

Our notation, mostly standard, is as in the article \cite{he14'}, which is a prerequisite for the present analysis. Here we recall only a few basic conventions: $M$ is the Minkowski vector space (`configuration space' with fixed central point), $\hM$ is the `momentum space' isometric with $M$. The unit, future-pointing vector of a chosen time-axis is denoted by $t$; $\vx$ is the $3$-space part of the vector $x$, and $|x|^2=|x^0|^2+|\vx|^2$. Moreover, for $x\in M$, $p\in\hM$, we shall denote
\begin{equation}\label{hats}
 \hx=\frac{\vx}{|\vx|}\,, \qquad \hp=\frac{\vp}{|\vp|}\,.
\end{equation}
Our conventions and notation for Fourier transforms are:
\begin{equation*}
\begin{gathered}
 \ti{f}(\w)=(\F_1f)(\w)=\frac{1}{2\pi}\int e^{i\w s} f(s)ds\,,\\
 (\F_3{g})(\vp)=\frac{1}{2\pi}\int e^{-i\vp\cdot\vx} g(\vx)d^3x\,,\\
 \wh{\chi}(p)=(\F{\chi})(p)=\frac{1}{(2\pi)^2}\int e^{ip\cdot x} \chi(x)dx\,,\quad
 \wch{\vph}=\F^{-1}\vph\,.
\end{gathered}
\end{equation*}
Throughout the article $\la$ is a~fixed parameter of the physical dimension of length.

We devote a few words to the lightcone geometry. The term `lightcone' itself is ambiguous, thus we specify: the solid, closed future lightcone will be denoted by $\clc$, and for the set of (nonzero) future null vectors we shall write:
\begin{equation}
 \lc=\{\,l\mid l\cdot l=0,\ l^0>0\,\}\,.
\end{equation}
It will also prove useful to denote
\begin{equation}
 \lct=\lc\cap\{\,l\mid t\cdot l=1\,\}\,,
\end{equation}
which may be interpreted as $t+S^2$, where $S^2$ is the unit sphere in the space orthogonal to $t$. We also recall that if $f$ is a complex measurable function on $\lc$, which in addition is homogeneous of degree $-2$: $f(\ga l)=\ga^{-2}f(l)$, then the integral defined by
\begin{equation}
 \int f(l)\,d^2l=\int_{\lct} f(l)\,d\W_t(l)\,,
\end{equation}
where $d\W_t(l)$ is the angle measure on the unit sphere, does not depend on the choice of the vector $t$ (Lorentz invariance; see, e.g., \cite{pr84}, \cite{he95}). Finally, we shall denote
\begin{equation}
 L_{ab}=l_a\frac{\p}{\p l^b}-l_b\frac{\p}{\p l^a}
\end{equation}
-- intrinsic differential operators on the lightcone, and recall that
\[
 \int L_{ab}f(l)\,d^2l=0\,.
\]

\section{Null asymptotes}

In what follows the following functional spaces will play important role.
\begin{dfn} Let $\ep>0$. We define the following vector spaces: \\
 (i) $\Sc_\ep$: the space of smooth complex functions on $\mR\times\lct$ satisfying the bounds
\begin{equation}
 |L_{a_1b_1}\ldots L_{a_kb_k}\p_s^mf(s,l)|\leq\frac{\con_{k,m}}{(\la+|s|)^{m+1+\ep}}\,,\qquad a_i,b_i\in\{1,2,3\}\,.
\end{equation}
This space depends on the choice of $t$ (not to burden notation we do not make this dependence explicit).\\
(ii) $\Sc_\ep^n$: the space of complex functions on $\mR\times\lc$ which are homogeneous of degree $n$: $f(\ga s,\ga l)=\ga^nf(s,l)$, and which satisfy the bounds
\begin{equation}
 |L_{a_1b_1}\ldots L_{a_kb_k}\p_s^mf(s,l)|\leq\frac{\con_{t,k,m}\,(t\cdot l)^{n-m}}{(\la+|s|/t\cdot l)^{m+1+\ep}}\,,\qquad a_i,b_i\in\{0,1,2,3\}\,.
\end{equation}
This space does not depend on the choice of vector $t$ (only bounding constants change).

Each function in $\Sc_\ep$ may be extended to a function in $\Sc_\ep^n$ by homogeneity. Conversely, each function in $\Sc_\ep^n$ belongs to $\Sc_\ep$ when restricted to $\mR\times\lct$; such restriction will be called the $t$-gauge.
\end{dfn}

We recapitulate some facts on classical fields satisfying the wave equation (in the form presented in \cite{he95}). Solutions which may be obtained as radiation fields (i.e.\ retarded minus  advanced fields) of some matter source (point particles or massive fields) stabilizing in remote past and future may be written in a convenient integral representation as
\be\label{wave}
 B(x)=-\frac{1}{2\pi}\int \dot{b}(x\cdot l,l)\,d^2l\,,
\ee
where $\dot{b}(s,l)\in\Sc_\ep^{-2}$ for some $\ep>0$ (in electrodynamics this applies to the potential in Lorenz gauge). It follows that $\dot{b}(s,l)=\p b(s,l)/\p s$, where $b$ is smooth, homogeneous of degree $-1$, and has finite limits $b(\pm\infty,l)$ (an overdot will always denote the $s$-derivative).  Field $B(x)$ has well-defined null asymptotes:
\be\label{null}
 \lim_{r\to \infty}r B(x\pm rl)=\pm b(x\cdot l,l)\mp b(\pm\infty,l)\equiv b_{\begin{smallmatrix}\mathrm{out}\\ \mathrm{in}\end{smallmatrix}}(x\cdot l,l)
\ee
and its long-range spacelike behavior is given by ($y^2<0$)
\be\label{rtail}
 \lim_{r\to\infty}r B(x+ry)=-\frac{1}{2\pi}\int\Delta b(l)\,\delta(y\cdot l)\,d^2l\,,
\ee
where $\Delta b(l)=b(+\infty,l)-b(-\infty,l)$ and $\delta$ is the Dirac measure. Function $b(s,l)$ is defined up to an addition of an $s$-independent term; specifying $b(-\infty,l)=-b(+\infty,l)$ one makes it unique.

As the above useful representation does not seem to be widely known, we make some additional comments on its relation to more standard knowledge. Let $B(x)$ be a solution of the wave equation. Choose any time axis $st$, $s\in\mR$, $t$~a~unit, future-pointing vector, and draw the past lightcone with vertex in $s_0t$ on the axis. For $x$ inside this cone the Kirchhoff formula (see, e.g., \cite{pr84}) gives the value of $B(x)$ in terms of the values of this field on the cone (solution of the null Cauchy problem). If the field has well defined future null asymptotes, then taking the limit $s_0\to\infty$ one arrives at Eq.\,\eqref{wave} with $b$ as defined in \eqref{null}. We draw attention of the reader to the fact that the spacelike $1/r$-tail of such fields is even in the argument. One shows that fields with odd spacelike $1/r$-tail do exist and may also be represented by the integral of the form \eqref{wave}, but with $b$ not satisfying the decay properties; these fields do not have well-defined null
asymptotes. Fortunately, such fields are not produced in scattering processes. (It is easy to see that the Lorentz potential of the field $F_\mathrm{l.r.}$ given in Introduction is even.) The whole picture may be reflected in time.

Another way to view representation \eqref{wave}, \eqref{null} is by its relation to the Fourier representation. Let $\ti{\dot{b}}(\w,l)$ be the Fourier transform of $\dot{b}(s,l)$ in $s$, as defined in Introduction. The $(-2)$-homogeneity of $\dot{b}$ implies the scaling property
 $\ti{\dot{b}}(\ga^{-1}\w,\ga\, l)=\ga^{-1}\ti{\dot{b}}(\w,l)$, therefore the relation $c(\w l)=-\ti{\dot{b}}(\w,l)/\w$ is a consistent definition of a function $c(k)$ on the lightcone. Expressing $\dot{b}$ in \eqref{wave} in terms of its transform $\ti{\dot{b}}$ one can obtain the usual Fourier representation of the field:
\be
 B(x)=\frac{1}{\pi}\int c(k)\delta(k^2)\sgn(k^0)e^{-ik\cdot x}d^4k\,.
\ee
Note that $\Delta b(l)=\ti{\dot{b}}(0,l)/(2\pi)$; infrared-singular fields are those for which this function does not vanish, and produces a $1/r$-spacelike tail (as given by~\eqref{rtail}). In those cases $c(k)$ is singular in $k=0$, but $\w c(\w l)$ is a regular function of $\w$. Again, $c(k)$ becomes more singular for fields with odd spacelike tails.

Relations \eqref{wave} and \eqref{null} show that the field $B(x)$ may be reconstructed from its null (future \emph{or} past) asymptote . This is to be compared with the massive case, where a (sufficiently regular) free field $\psi(x)$ may be reconstructed from timelike (future \emph{or} past) asymptotic (i.e. for $\la\to\infty$) behavior  of $\la^{3/2}\psi(\pm\la v)$, functions of $v$ on the unit hyperboloid \mbox{$H_+=\{v\mid v^2=1, v^0>0\}$}. For interacting fields one expects that in remote future (and past) they separate into free fields, so that their appropriate asymptotes may serve to define free asymptotic fields. Although this separation is in fact not complete in a constrained theory like electrodynamics, it holds true for infrared-regular components.

For quantum fields, even if they are regular functions of $x$ as is the case for translations of a large class of bounded operators, limiting is singular and one needs preparatory smearing. In general, if $B(x)$ is a bounded and continuous classical or quantum field and $\nu$ is a complex Borel measure, then we denote
\be
 B(\nu)=\int B(x)\,d\nu(x)\,.
\ee
In particular, if $\chi$ is an integrable function (with respect to $dx$), then $B(\chi)=\int B(x)\chi(x)dx$; for future use we also note that  $B(\chi)(\nu)=B(\nu)(\chi)=B(\chi*\nu)$.

The above discussion of classical asymptotes supplies now possible choices for asymptotic smearings. The extraction of massive contributions may be expected to result from limiting in $\la$ of the operator $\la^{3/2}\int \psi(\la v)g(v)\,d\mu(v)$, where $g$ is a test function on $H_+$ and $d\mu(v)$ is the standard Lorentz-invariant measure on $H_+$ -- in this case the measure $\nu$ is supported on the hyperboloid $x^2=\la^2$, $x^0>0$. Discussion along these lines of fields (anti-)commuting asymptotically in spacelike directions was the subject of \cite{he14}. We found a~general relation between asymptotic behavior of this limit and spectral properties of the field in the inside of the lightcone in momentum space; no further assumptions were needed. In case of vacuum representation the scheme yields a~Lorentz-invariant formulation of the Haag-Ruelle theory. Moreover, the scheme has been shown to work in a model of
asymptotic electrodynamics.

In the present article we analyze smeared null asymptotes. We choose a~time axis crossing the origin, with unit future vector $t$. For $(s,l)\in\mR\times\lct$  the set of points $st+rl$ represents a timelike cylinder, product of a sphere of radius $r$ and the time axis. Parameter $s$ is the retarded time of a point on this cylinder. Let $r$ be large enough for this cylinder to be well outside the interaction region. Then smearing a field over a patch on this cylinder may be interpreted as measuring `radiation' going out into a chosen solid angle over a~defined retarded time-span (similarly -- incoming radiation and advanced time for a picture reversed in time; in both cases $\nu$ is supported on the cylinder).

Motivated by this discussion we introduce:
\begin{dfn}\label{smeared}
 Let $B(x)$ be a bounded classical or quantum continuous field. Choose $t$ and let $f\in\Sc_\ep$.  Then we denote
\be\label{smearf}
 B[r,f]=\frac{r}{2\pi}\int B(st+rl)f(s,l)\,ds\,d\W_t(l)\,.
\ee
Moreover, let $g$ be a real continuous function with compact support in $(0,\infty)$. Then we shall write
\be\label{smearg}
 B[g,f]=\int g(r)B[r,f]\,dr\,.
\ee
\end{dfn}\noindent
Both $B[r,f]$ and $B[g,f]$ depend on $t$ (not to burden notation we do not indicate this explicitly), but the limit asymptote for classical wave equation solution is invariant:
\begin{pr}
 If $B(x)$ is the classical free field \eqref{wave} and $f\in\Sc_\ep^{-2}$, then
\be
 \lim_{r\to\infty}B[r,f]=\int b_\mathrm{out}(s,l)f(s,l)ds\,d^2l\,,
\ee
which is a Lorentz-invariant quantity.
\end{pr}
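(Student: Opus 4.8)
The plan is to convert the explicit factor $r$ in \eqref{smearf} into an integration variable that renders both the pointwise limit and a uniform bound transparent, then to pass to the limit by dominated convergence; the Lorentz invariance will come from a homogeneity count.

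First I would insert \eqref{wave} into \eqref{smearf}. Working in the $t$-gauge, so that $t\cdot l=t\cdot l'=1$ and hence $(st+rl)\cdot l'=s+r\,l\cdot l'$, and writing $\ga$ for the angular separation of $l$ and $l'$ on $S^2$ (so $l\cdot l'=1-\cos\ga$), I substitute $w=r\,l\cdot l'=r(1-\cos\ga)$. Its Jacobian, $\propto r^{-1}$, exactly cancels the explicit factor $r$, turning the $l'$ solid-angle integral into an integral in $(w,\phi)$. Since at fixed $w$ the separation obeys $1-\cos\ga=w/r\to0$, so that $l'\to l$, as $r\to\infty$, the inner integral converges pointwise to a constant multiple of $\int_0^\infty\dot b(s+w,l)\,dw=b(+\infty,l)-b(s,l)$, that is, to the null asymptote $b_\mathrm{out}(s,l)$; this is precisely \eqref{null} evaluated at $x=st$, where $x\cdot l=s$.

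The step I expect to carry the real weight is the interchange of this limit with the $ds\,d\W_t(l)$ integration, because the convergence $r\,B(st+rl)\to b_\mathrm{out}(s,l)$ is not uniform in $s$ and one must rule out escape of mass. The substitution above furnishes an $r$-independent majorant at once: using the $\Sc_\ep^{-2}$ bound $|\dot b(s,l)|\le\con\,(\la+|s|)^{-1-\ep}$, uniform in $l'$ over the compact $\lct$, one gets
\be
 |r\,B(st+rl)|\le\con\int_0^\infty(\la+|s+w|)^{-1-\ep}\,dw=:G(s)\,,
\ee
and $G(s)$ is bounded on $\mR$, decaying like $(\la+s)^{-\ep}$ as $s\to+\infty$. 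Multiplying by $|f(s,l)|\le\con\,(\la+|s|)^{-1-\ep}$ and integrating over the compact sphere produces a dominating function of order $(\la+|s|)^{-1-\ep}$, integrable in $s$ precisely because $\ep>0$; the decay of both $\dot b$ and $f$ enters here in an essential way. Dominated convergence then yields
\be
 \lim_{r\to\infty}B[r,f]=\int b_\mathrm{out}(s,l)\,f(s,l)\,ds\,d^2l\,.
\ee

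Finally, for the Lorentz invariance I would count homogeneities. Since $b$ is homogeneous of degree $-1$, so is $b_\mathrm{out}$, while $f\in\Sc_\ep^{-2}$ is homogeneous of degree $-2$; hence the integrand $b_\mathrm{out}f$ is homogeneous of degree $-3$ on $\mR\times\lc$. Passing from the gauge $t$ to a second gauge $t'$ along each null ray via $l=(t'\cdot l)\,l'$ with $l'\in\lc^{t'}$, and rescaling $s=(t'\cdot l)\,\sigma$, one has $ds=(t'\cdot l)\,d\sigma$, while the cited $t$-independence of $\int\cdot\,d^2l$ forces the angle measure to transform as $d\W_t(l)=(t'\cdot l)^2\,d\W_{t'}(l')$. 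Together with the degree $-3$ scaling of the integrand the powers of $(t'\cdot l)$ cancel, $-3+1+2=0$, so the integral takes the same value in both gauges, which is the asserted invariance.
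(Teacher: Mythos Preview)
Your argument is correct and follows the same route as the paper's one-line sketch: use the pointwise null asymptote \eqref{null} to identify the limit, and the degree count to get Lorentz invariance. You go further by actually supplying the dominated-convergence majorant (which the paper leaves implicit) and by re-deriving \eqref{null} through the substitution $w=r(1-\cos\ga)$ rather than merely citing it; this is a welcome completion rather than a different method.
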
\noindent
An easy proof of this fact is based on relation \eqref{null} and on the following fact: if $F\in\Sc_\ep^{-3}$, then the integral $\int F(s,l)ds$ is a function of $l$ homogeneous of degree~$-2$. Therefore, the integral $\int F(s,l)ds\,d^2l$ is a Lorentz-invariant quantity. We shall see later, that similar invariance holds in quantum case.

\section{Asymptotic relations}\label{asymp}

We assume that a QFT is defined in terms of a field *-algebra of bounded operators acting in a Hilbert space $\Hc$. The algebra includes, beside observables, also operators interpolating between inequivalent representations of observables, such as creators/annihilators of electric charge. Spacetime translations are performed by a unitary, continuous representation $U(a)$ of the translation group acting in $\Hc$, and the spectrum of its generators is contained in $\clc$ (relativistic energy positivity). However, we do \emph{not} assume the existence of the vacuum vector state, nor the action of a Lorentz group representation in $\Hc$ . For each bounded operator $B$ acting in $\Hc$ one defines the field $B(x)=U(x)BU(-x)$. We shall write $B\in\C^n$ (resp.\ $B\in\C_t^n$) if all derivatives $D^\alpha B(x)$ with $|\alpha|\leq n$ (resp. $\p_0^lB(x^0)$ with $l\leq n$, in the Minkowski basis in which $t$ is the timelike basis vector) exist and are continuous in the norm sense.

Our analysis will be based on the following decay property.
\begin{dfn}[\cite{he14'}]\label{com}
We shall say that the commutator $[B_1,B_2]$ of bounded operators \mbox{$B_1,B_2$} is of $\kappa$-type, $\kappa>0$, if the following bound is satisfied:
\begin{equation}\label{fcom}
\begin{aligned}
 &\|[B_1,B_2(a)]\|\leq c D_{\kappa}(a)\,,\\
 &D_{\kappa}(a)\equiv \begin{cases}1 & a^2\geq0\\
                          \dfrac{\la^\kappa}{(\la+|\vec{a}|-|a^0|)^{\kappa}} & a^2<0\,.
             \end{cases}
\end{aligned}
\end{equation}
with some constant $c$ depending on $B_i$. The assumption is covariant: if the bound holds in any particular reference system, it is valid in all other, with some other constants $c$.

We shall say that $[B_1,B_2]$ is of $\kappa^\infty$-type (resp.\ $\kappa_t^\infty$-type), if \mbox{$B_i\in\C^\infty$} (resp.\ $B_i\in\C_t^\infty$) and all $[D^{\al_1}B_1,D^{\al_2}B_2]$  (resp.\ all $[\p_0^{n_1}B_1,\p_0^{n_2}B_2]$) are of $\kappa$-type.
\end{dfn}

For any real $k>0$ and $B\in\C_t^\infty$ operators $B^k_\pm$ introduced by
\begin{equation}\label{Bkdef}
 \wch{B^k_\pm}(p)=e^{\mp ik\pi/2}\theta(\pm p^0)\,|p^0|^k\,\wch{B}(p)
\end{equation}
are well defined, bounded and in $\C^\infty_t$ \cite{he14'} (Fourier transform conventions as defined in Introduction).
We shall denote by $G_\pm(E)$ any real functions $\<0,+\infty)\mapsto\<0,+\infty\>$ which satisfy the following conditions:
\begin{equation}\label{GE}
\begin{split}
 \text{(i)}\quad &G_\pm(E)\ \text{are nonincreasing}\,,\\
 \text{(ii)}\quad &G_\pm\in L^2(\<0,+\infty))\,,\\
 \text{(iii)}\quad  &G_+(E)\leq\con\,
\end{split}
\end{equation}
(note that $G_-(0)$ may take the value $+\infty$). Then the following holds.
\begin{thm}[\cite{he14'}]\label{GB}\ \\
(i) If $[B,B^*]$ is of $\kappa_t^\infty$-type, then for $k>(\kappa+1)/2$  one has the bound
\begin{equation}
 \|B^k_\pm(\nu)G_\pm(P^0)\|^2\leq\con\int D_\kappa(x-y)\,d|\nu|(x)\,d|\nu|(y)\,,\label{boundpm}\\
\end{equation}
with the bounding constant depending on $B$, $\kappa$, $k$ and $G_\pm$, and where $|\nu|$ is the variation measure of the complex Borel measure $\nu$.\\
(ii) If $[B_1,B_2]$ is of $\kappa_t^\infty$-type, then for $k\geq\kappa$ also $[B_1,B^k_{2\pm}]$ and $[B^k_{1\pm},B^k_{2\pm}]$ (uncorrelated signs) are of $\kappa_t^\infty$-type.
\end{thm}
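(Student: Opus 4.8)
The two parts are linked: the norm bound \eqref{boundpm} in (i) rests on the fact, supplied by (ii), that $[(B^k_\pm)^*,B^k_\pm]$ is again of $\kappa_t^\infty$-type. The plan is therefore to prove (ii) first and then to feed it into (i).

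For (ii) I would realize the passage $B_2\mapsto B^k_{2\pm}$ of \eqref{Bkdef} as a convolution of $B_2$ along the time axis $t$. Since it is Fourier multiplication in the energy variable by $e^{\mp ik\pi/2}\theta(\pm\w)|\w|^k$, in configuration space one has
\[
 B^k_{2\pm}(a)=\int B_2(a+st)\,K_\pm(s)\,ds, \qquad K_\pm=\F_1^{-1}\big[e^{\mp ik\pi/2}\theta(\pm\w)|\w|^k\big],
\]
where $K_\pm$ is the homogeneous distribution of degree $-(k+1)$ on the line obtained as the boundary value of $(s\mp i0)^{-(k+1)}$, the phase being tuned precisely so that this identification is clean; in particular $|K_\pm(s)|\sim|s|^{-(k+1)}$ away from the origin. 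Hence
\[
 \|[B_1,B^k_{2\pm}(a)]\|\le\int\|[B_1,B_2(a+st)]\|\,|K_\pm(s)|\,ds\le c\int D_\kappa(a+st)\,|K_\pm(s)|\,ds,
\]
and everything reduces to the convolution inequality $\int D_\kappa(a+st)\,|K_\pm(s)|\,ds\le\con\,D_\kappa(a)$. For $a$ spacelike with $|\vec a|-|a^0|=\rho$ large the binding contribution is the range of $s$ for which $a+st$ becomes timelike: there $D_\kappa\equiv1$ while $|s|\gtrsim\rho$, so the tail is $\int_\rho^\infty s^{-(k+1)}ds\sim\rho^{-k}$, which is $\le\con\,\rho^{-\kappa}=\con\,D_\kappa(a)$ exactly when $k\ge\kappa$ -- the stated hypothesis. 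The non-integrable singularity of $K_\pm$ at $s=0$ is absorbed by integrating by parts and moving $s$-derivatives onto $B_2$; this is licit because $B_2\in\C^\infty_t$ and, by assumption, every $[B_1,\p_0^nB_2]$ is again of $\kappa$-type, and the same bookkeeping shows that the time derivatives of $[B_1,B^k_{2\pm}]$ stay of $\kappa$-type. Running the argument a second time in the first slot yields the statement for $[B^k_{1\pm},B^k_{2\pm}]$.

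For (i) I would expand the square as a bilinear integral against $\nu$,
\[
 \|B^k_\pm(\nu)G_\pm(P^0)\Psi\|^2=\iint\big\<B^k_\pm(x)G_\pm(P^0)\Psi,\,B^k_\pm(y)G_\pm(P^0)\Psi\big\>\,d\bar\nu(x)\,d\nu(y),
\]
so that the whole problem is the pointwise two-point bound $\|G_\pm(P^0)\,B^k_\pm(x)^*B^k_\pm(y)\,G_\pm(P^0)\|\le\con\,D_\kappa(x-y)$; integrating it against $d|\nu|(x)\,d|\nu|(y)$ then gives \eqref{boundpm}. To prove this two-point estimate I would reorder the two factors through the commutator $[B^k_\pm(x)^*,B^k_\pm(y)]$, which by (ii) is of $\kappa$-type and thus carries exactly the decay $D_\kappa(x-y)$; the remaining normal-ordered term is where relativistic energy positivity (spectrum in $\clc$, hence no spectrum below the apex) enters. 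The crucial point is that the sign of the energy transfer carried by $B^k_\pm$ is matched to the side on which $G_\pm$ acts, so that pushing the lowering factor down the spectrum runs into the bottom of the spectrum and the normal-ordered contribution is controlled there; the weight is only ever evaluated where it is finite, using the decomposition of $B^k_\pm$ into its energy-transfer content (Fourier transform in $s$), whose small-transfer part is suppressed by the factor $|\w|^k$ from \eqref{Bkdef}. Summation over the energy transfers is carried out by Cauchy--Schwarz using $G_\pm\in L^2$ and, for the $+$ sign, the boundedness \eqref{GE}(iii).

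The main obstacle is this infrared regime of (i). In the massless theory there is no spectral gap, the weight $G_-$ is even allowed to diverge at the origin, and the estimate survives only through the interplay of three facts: the $|\w|^k$ suppression of small energy transfers, the localization forced by energy positivity, and the $L^2$-integrability of $G_\pm$. The decay exponent $\kappa$ of the commutator $[B,B^*]$ controls the strength of the small-$\w$ singularity of the energy-transfer density of $B$, and confronting this singularity with the weight $|\w|^{2k}$ shows that the relevant energy integral converges precisely under $2k>\kappa+1$, i.e.\ $k>(\kappa+1)/2$, which is why this is the threshold in (i). A secondary, more technical difficulty is the borderline of the convolution inequality in (ii) together with the distributional handling of the kernel $K_\pm$ near $s=0$.
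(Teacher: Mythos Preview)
First, note that the present paper does not contain a proof of this theorem at all: it is quoted verbatim from \cite{he14'} as a prerequisite tool, so there is no ``paper's own proof'' to compare against here.

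That said, your sketch for part (ii) is essentially the right mechanism: realizing $B\mapsto B^k_\pm$ as a one-dimensional convolution along the time axis with a kernel behaving like $|s|^{-(k+1)}$ at infinity, and then checking that convolving $D_\kappa$ with such a kernel reproduces $D_\kappa$ precisely when $k\ge\kappa$, with the local singularity of the kernel absorbed by the assumed $\C^\infty_t$-smoothness. This is how one expects the argument in \cite{he14'} to run.

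Your strategy for part (i), however, has a genuine gap. You aim for the \emph{pointwise} two-point bound
\[
 \|G_\pm(P^0)\,B^k_\pm(x)^*B^k_\pm(y)\,G_\pm(P^0)\|\le\con\,D_\kappa(x-y),
\]
obtained by splitting $B^k_\pm(x)^*B^k_\pm(y)$ into commutator plus normal-ordered term. The commutator part is fine, but the normal-ordered piece $B^k_\pm(y)\,B^k_\pm(x)^*$ sandwiched between $G_\pm(P^0)$ has no reason to decay in $x-y$: energy positivity only tells you it is \emph{uniformly bounded}, not that it vanishes as $|x-y|\to\infty$. (There is no vacuum assumed in this part of the paper, so you cannot argue that the energy-lowering factor annihilates anything.) Consequently the pointwise bound is simply false in general, and integrating it against $d|\nu|\,d|\nu|$ cannot be the route to \eqref{boundpm}.

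The argument in \cite{he14'} does not pass through a pointwise estimate; it works with the smeared object $B^k_\pm(\nu)$ from the start and controls the contribution of each energy-transfer shell separately. Energy positivity is used to bound, for a fixed energy-transfer window, the operator norm of the corresponding piece of $B$ acting on energy-bounded states in terms of the commutator function; only after this shell-wise control does one integrate over the transfer variable, and it is that integral whose convergence forces $k>(\kappa+1)/2$ and uses $G_\pm\in L^2$. Your closing paragraph gestures in this direction, but it is incompatible with the pointwise strategy you set up two paragraphs earlier: you need to abandon the pointwise two-point bound and carry out the energy-transfer decomposition on $B^k_\pm(\nu)G_\pm(P^0)$ directly.
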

\noindent
We note that the classes of operators $G_\pm(P^0)$ include $(1+\la P^0)^{-1}$, which will be of particular interest below, starting from Eq.\,\eqref{PBP}.

Using this tool we shall now obtain the asymptotic behavior of $B$ smeared according to \eqref{smearf} in Definition~\ref{smeared}. Moreover, it is a well-known fact that one can usually regularize asymptotic behavior of fields by additionally smearing them in limiting parameter. Thus we assume that $g$ is smooth, such that
\begin{equation}\label{gsupint}
 \supp g\subseteq\<\tau_1,\tau_2\>\subset (0,\infty)\,,\qquad \int g(r) dr=1\,,
\end{equation}
and for $\eta\in(0,1\>$, $w=w(R)=\la(R/\la)^\eta$, we define
\be\label{geta}
 g^\eta_R(r)=w^{-1}g(w^{-1}(r-R)+1)\,,\quad g_R(r)=g_R^1(r)=R^{-1}g(R^{-1}r)\,.
\ee
These functions serve to smear $B$ as in Eq.\,\eqref{smearg}.
\begin{thm}\label{limit}
Let $[B,B^*]$ and $[B_1,B_2]$ be of $\kappa_t^\infty$-type, and let $f,f_1,f_2\in\Sc_\ep$ with $\ep>2$. Then\\
(i) for $\kappa<2$, $k>(\kappa+1)/2$:
\begin{equation}
  \|B^k_\pm[r,f]G_\pm(P^0)\|^2\leq\con(\la+r)^{2-\kappa}\,;
\end{equation}
(ii) for $\kappa>2$, $k>3/2$:
\begin{multline}\label{flimit}
 \limsup_{R\to\infty}\|B^k_\pm[g_R^\eta,f]G_\pm(P^0)\|^2\leq\con\,\limsup_{r\to\infty}\|B^k_\pm[r,f]G_\pm(P^0)\|^2\\
 \leq\con\int|f(s_1,l)f(s_2,l)|[(s_1-s_2)^2+\la^2]\,ds_1ds_2\,d\W_t(l)\,.
\end{multline}
(iii) For $k\geq\kappa>2$ let the supports of the functions
\be
 \lct\ni l\mapsto\|f_i(.,l)\|_\infty=\sup_{s\in\mR}|f_i(s,l)|\,,\quad i=1,2
\ee
be disjoint. Then for $\beta=\min\{\kappa,\ep\}$ and any $c\geq1$ the bound
\begin{equation}\label{disj}
 \big\|\big[B^k_{1\pm}{}[r_1,f_1],B^k_{2\pm}[r_2,f_2]\big]\big\|\leq\frac{\con}{(r_1r_2)^{(\beta-2)/2}}
\end{equation}
(uncorrelated signs) holds uniformly for $1/c\leq r_1/r_2\leq c$. Moreover, for any functions $g^{\eta_i}_i$, $i=1,2$:
\be\label{disjR}
  \big\|\big[B^k_{1\pm}{}[g_{1R}^{\eta_1},f_1],B^k_{2\pm}[g_{2R}^{\eta_2},f_2]\big]\big\|=O(R^{-(\beta-2)})\quad (R\to\infty)\,.
\ee
\end{thm}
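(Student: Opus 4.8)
The plan is to reduce all three parts to estimates on integrals of the kernel $D_\kappa$ over the timelike cylinders $\{st+rl : s\in\mR,\ l\in\lct\}$, by feeding the smearing measures into the two halves of Theorem~\ref{GB}. For parts (i) and (ii) I regard $B^k_\pm[r,f]$ as $B^k_\pm(\nu_r)$, where $\nu_r$ is the complex measure of density $\tfrac{r}{2\pi}f(s,l)$ on this cylinder, so that Theorem~\ref{GB}(i) yields
\begin{equation}
 \|B^k_\pm[r,f]G_\pm(P^0)\|^2\leq\con\,r^2\!\int D_\kappa(a)\,|f(s_1,l_1)f(s_2,l_2)|\,ds_1\,ds_2\,d\W_t(l_1)\,d\W_t(l_2)\,,
\end{equation}
with $a=(s_1-s_2)t+r(l_1-l_2)$. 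In the $t$-gauge this difference has $a^0=s_1-s_2$ and $|\vec{a}|=2r\sin(\theta/2)$, where $\theta$ is the angle between $l_1$ and $l_2$, so $D_\kappa(a)$ is governed entirely by the geometry on the sphere and by $s_1-s_2$. Everything then rests on a scaling analysis of this double integral as $r\to\infty$.

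For part (i) I would split the angular integration into a neighbourhood of the diagonal $l_1=l_2$ and its complement. On the complement $\theta$ is bounded away from $0$, the points are deeply spacelike ($|\vec a|\sim r$ while $|a^0|$ is controlled by the $s$-decay of $f$), and $D_\kappa$ is of order $(\la/r)^\kappa$; with the $r^2$ prefactor this produces the stated $(\la+r)^{2-\kappa}$. Near the diagonal I pass to the tangent coordinate $\vec u$ on the sphere ($|\vec a|\approx r|\vec u|$, $d\W_t(l_2)\approx d^2u$); the substitution $\vec\xi=r\vec u$ turns the angular element into $r^{-2}d^2\xi$, cancelling the prefactor, and the radial integral $\int_{\mR^2}D_\kappa\,d^2\xi$ converges exactly when $\kappa>2$. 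This is the source of the dichotomy: for $\kappa<2$ the diagonal part is subdominant and the off-diagonal term governs the bound, while for $\kappa>2$ the diagonal part is what survives in the limit. For part (ii), since $k>3/2$ gives $2k-1>2$, I first replace $\kappa$ by some $\kappa'\in(2,\min\{\kappa,2k-1\})$ (legitimate because $\kappa$-type implies $\kappa'$-type and $k>(\kappa'+1)/2$ then lets Theorem~\ref{GB}(i) apply), and evaluate
\begin{equation}
 \int_{\mR^2}D_{\kappa'}(a)\,d^2\xi=2\pi\Big[\tfrac12(s_1-s_2)^2+\tfrac{\la\,|s_1-s_2|}{\kappa'-1}+\tfrac{\la^2}{(\kappa'-1)(\kappa'-2)}\Big]\,,
\end{equation}
which is bounded by $\con\,[(s_1-s_2)^2+\la^2]$. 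Dominated convergence—using continuity of $f$ in $l$ and $\ep>2$ to secure an integrable majorant and finiteness of the limit—collapses the double sphere integral onto the diagonal and gives the second inequality of \eqref{flimit}. The first inequality is the routine regularisation step: from $B^k_\pm[g^\eta_R,f]=\int g^\eta_R(r)\,B^k_\pm[r,f]\,dr$, the $R$-independence of $\int|g^\eta_R(r)|\,dr=\int|g(r)|\,dr$, and the triangle inequality, $\|B^k_\pm[g^\eta_R,f]G_\pm(P^0)\|\leq\con\sup_{r\in\supp g^\eta_R}\|B^k_\pm[r,f]G_\pm(P^0)\|$, and $\supp g^\eta_R$ recedes to infinity as $R\to\infty$.

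For part (iii) I use Theorem~\ref{GB}(ii): since $k\geq\kappa$, the commutator $[B^k_{1\pm},B^k_{2\pm}]$ is again $\kappa_t^\infty$-type, so $\|[B^k_{1\pm}(x),B^k_{2\pm}(y)]\|\leq c\,D_\kappa(x-y)$. Bounding the commutator of the smeared operators by $\con\,r_1r_2\int D_\kappa(a)\,|f_1f_2|$, now with $a^0=s_1-s_2+r_1-r_2$ and $|\vec a|^2=r_1^2+r_2^2-2r_1r_2\cos\theta$, the disjointness of the angular supports forces $\theta\geq\theta_{\min}>0$ (disjoint closed sets on the compact sphere $\lct$ are positively separated). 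Then $-a^2=2r_1r_2(1-\cos\theta)-(s_1-s_2)^2-2(s_1-s_2)(r_1-r_2)$ has leading term of order $r_1r_2$, so for comparable radii $|\vec a|-|a^0|=(-a^2)/(|\vec a|+|a^0|)$ is of order $r$ throughout the bulk $|s_1-s_2|=O(\la)$, whence $D_\kappa=O(r^{-\kappa})$ there. Writing the uniform bound $D_\kappa(a)\leq\la^\kappa/(\la+(|\vec a|-|a^0|)_+)^\kappa$, the $s$-integral splits into this bulk, of order $r^{-\kappa}$, and a tail where $|s_1-s_2|$ is of order $r$ (so that $a$ approaches or crosses the lightcone and $D_\kappa$ ceases to be small), which the decay $|f_i(s)|\leq\con(\la+|s|)^{-1-\ep}$ suppresses to order $r^{-\ep}$. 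The two combine to $O(r^{-\beta})$, $\beta=\min\{\kappa,\ep\}$; multiplying by the $r_1r_2\sim r^2$ prefactor gives \eqref{disj} uniformly for $1/c\leq r_1/r_2\leq c$. Finally \eqref{disjR} follows by integrating \eqref{disj} against $g^{\eta_1}_{1R}(r_1)g^{\eta_2}_{2R}(r_2)$, whose supports sit near $R$ with bounded ratio $r_1/r_2$, so that $(r_1r_2)^{-(\beta-2)/2}=O(R^{-(\beta-2)})$ uniformly.

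The main obstacle I anticipate is the limit extraction in part (ii): justifying the diagonal localisation of the double sphere integral uniformly in $r$ and the interchange of limit and integration requires a single integrable majorant valid for all large $r$, and the near-diagonal change of variables must be controlled together with the sphere's curvature and the $l$-dependence of $f$; the hypothesis $\ep>2$ is precisely what renders the limiting diagonal integral finite. The geometric lower bound $|\vec a|-|a^0|$ of order $r$ in part (iii), together with its uniformity over $1/c\leq r_1/r_2\leq c$ and over the large tail values of $s_1-s_2$, is the other delicate point.
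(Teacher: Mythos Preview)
Your strategy is essentially the paper's: feed the cylinder measures into Theorem~\ref{GB} and then estimate the resulting $D_\kappa$-integral by separating the angular diagonal from the rest; for (iii) the same bulk/tail split in $|\Delta s|$ appears. Two points of execution are worth flagging.

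First, in (i) your claim that the near-diagonal piece is ``subdominant'' when $\kappa<2$ is not quite right: after your substitution the near-diagonal integral is $\int_{|\vec\xi|\leq r\theta_0}D_\kappa\,d^2\xi\sim r^{2-\kappa}$, the \emph{same} order as the off-diagonal term, so both contribute to the stated bound rather than one dominating. The paper avoids this near/far split altogether: it passes to the scalar variable $\rho=2r\xi$ (with $\xi^2=l_1\cdot l_2/2$, so $d\W_t(l_2)=d\xi^2\,d\vph$) and splits the $\rho$-integral at $\rho=|\Delta s|$, using $D_\kappa\leq1$ on $\rho\leq|\Delta s|$ and $D_\kappa\leq\la^\kappa(\la+\rho-|\Delta s|)^{-\kappa}$ on the complement. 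For $\kappa<2$ the second region contributes $\con(\la+r)^{2-\kappa}$ directly.

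Second, this same $\rho$-splitting also resolves the obstacle you anticipate in (ii). No dominated-convergence/limit-interchange is needed: the bound on region $\rho\leq|\Delta s|$ is finite for all $r$ and tends to $\con\int(\Delta s)^2|f(s_2,l_1)|\,ds_2$, while on $\rho\geq|\Delta s|$ the bound is finite for $\kappa>2$ and tends to $\con\int(\la+|\Delta s|)|f(s_2,l_1)|\,ds_2$; summing and integrating against $|f(s_1,l_1)|\,ds_1\,d\W_t(l_1)$ gives exactly \eqref{flimit}. Your explicit evaluation of $\int_{\mR^2}D_{\kappa'}\,d^2\xi$ is correct and recovers the same limit, but the paper's route is cleaner because each region already carries a uniform (in $r$) bound before taking $r\to\infty$.

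For (iii) the paper's geometry is the same as yours but made explicit: with $\xi\geq\xi_0>0$ forced by disjoint supports and $|\Delta r|\leq dr$, one shows $[4r^2\xi^2+(\Delta r)^2]^{1/2}-|\Delta s+\Delta r|\geq br\xi$ on the bulk and $|\Delta s|\geq br\xi$ on the tail, with $b=[4+(d/\xi_0)^2]^{-1/2}$; the final $\xi$-integral $\int_{\xi_0}^1 r^2\,d\xi^2/(\la+br\xi)^\beta$ then gives the decay $r^{-(\beta-2)}$.
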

\begin{proof}
We denote $r^2=r_1r_2$, $\Delta r=r_2-r_1$, $\Delta s=s_2-s_1$, $\xi^2=l_1\cdot l_2/2$; as $l_1,l_2\in C_+^t$, there is $\xi\in\<0,1\>$. Then $|r_2\vec{l}_2-r_1\vec{l}_1|^2=(\Delta r)^2+4r^2\xi^2$. In case (ii) there is $2k-1>2$, so without restricting generality we can restrict attention to $\kappa$ such that $2k-1>\kappa>2$. Now we can use Theorem~\ref{GB}: for cases (i) and (ii) point (i) of the Theorem, and for case (iii) point (ii). Thus in each of these cases we have to find bounds on $I=\int I(s_1,l_1)|f_1(s_1,l_1)|\,ds_1d\W_t(l_1)$, with
\begin{equation}\label{I1}
 I(s_1,l_1)=r^2\int D_\kappa(\Delta s+\Delta r,r_2\vec{l}_2-r_1\vec{l}_1)|f_2(s_2,l_2(\xi,\vph))|d\xi^2 d\vph ds_2\,,
\end{equation}
where $\vph$ is the azimuthal angle of $\vec{l}_2$ in the plane orthogonal to $\vec{l}_1$, and where $f_1=f_2=f$, $r_1=r_2=r$ in case (i) and (ii), and $|\Delta r|\leq dr$ with $d=c-1/c$ in case~(iii).\\
Case (i) and (ii). We change the integration variable in \eqref{I1} by $\rho=2r\xi$ and split integration region into two regions: (a)~$\rho\leq|\Delta s|$, and (b) the rest. In region (a) we use $D_\kappa\leq1$ and find
\begin{equation}
 I(s_1,l_1)_{(a)}\leq \con\, \int_{(a)}|f(s_2,l_2(\rho/2r,\vph))|d\rho^2 d\vph ds_2\,.
\end{equation}
The rhs is bounded and for $r\to \infty$ tends to $\con\int (\Delta s)^2|f(s_2,l_1)|ds_2$. In region (b) there is $2r\geq\rho\geq|\Delta s|$ and we find
\begin{equation}
 I(s_1,l_1)_{(b)}\leq \con\, \int_{(b)}\frac{|f(s_2,l_2(\rho/2r,\vph))|}{(\la+\rho-|\Delta s|)^\kappa}d\rho^2 d\vph ds_2\,.
\end{equation}
In case (i) the rhs is bounded by $\con(\la+r)^{2-\kappa}$, which closes the proof of this case. In case (ii) the limit of the rhs for $r\to \infty$ is bounded by \mbox{$\con\int (\la+|\Delta s|)|f(s_2,l_1)|ds_2$}, which closes the proof of~(ii) for $B^k_\pm[r,f]$.\linebreak Statement on $B^k_\pm[g^\eta_R,f]$ is obvious.\\
Case (iii). The separation of supports is reflected in the restriction $\xi\geq\xi_0$ for some $\xi_0>0$. Here we split integration into regions (a) $|\Delta s+\Delta r|\leq[2r^2\xi^2+(\Delta r)^2]^{1/2}$ and (b) the rest. In region (a) we have
\begin{multline}
 [4r^2\xi^2+(\Delta r)^2]^{1/2}-|\Delta s+\Delta r|\geq[4r^2\xi^2+(\Delta r)^2]^{1/2}-[2r^2\xi^2+(\Delta r)^2]^{1/2}\\
 \geq\frac{r^2\xi^2}{\sqrt{4r^2\xi^2+(\Delta r)^2}}
 \geq\frac{r\xi}{\sqrt{4+(d/\xi)^2}}\geq br\xi\,,
\end{multline}
where $b=[4+(d/\xi_0)^2]^{-1/2}$ and in the second to last step we used the bound $|\Delta r|\leq dr$. Thus in this region the function $D_\kappa$ in the integrand of $I$ is bounded by $\con(\la+br\xi)^{-\kappa}$. In region (b) we have
\be
 |\Delta s|\geq[2r^2\xi^2+(\Delta r)^2]^{1/2}-|\Delta r|\geq\frac{r^2\xi^2}{\sqrt{2r^2\xi^2+(\Delta r)^2}}\geq br\xi\,.
\ee
Therefore, in this region integration over $ds_1ds_2$  gives a term bounded by $\con(\la+br\xi)^{-\ep}$ (put $D_\kappa\leq1$ and use Lemma \ref{lem:bound_convolution} in Appendix \ref{sec:convolution}). Thus we obtain
\begin{equation}
 I\leq \con\int_{\xi_0}^1\frac{r^2d\xi^2}{(\la+br\xi)^\beta}\leq\frac{\con}{b^2(\la+b\xi_0r)^{\beta-2}}\,,
\end{equation}
with $\beta$ given in the thesis.
\end{proof}

\section{Spectral properties}\label{foutr}

In this section we shall obtain a general relation between the asymptotic behavior of the operators $B^k_\pm[r,f]G_\pm(P^0)$ and the spectral properties of the Fourier transform $\wch{B^k_\pm}(p)G_\pm(P^0)$ in the neighborhood of the lightcone. We recall that all conventions on Fourier transforms are summarized in Introduction; in particular, $\ti{f}(\w,l)$ below is the $1$-dimensional transform of $f(s,l)$ in $s$, as defined there.

Our main tool in this section will be a partial result of Proposition 9 in \cite{he14'}. We rewrite the necessary result with the use of the norms $\|.\|_{p,1}$ defined in Appendix \ref{spacesp1} and ask the reader to find out their properties there.
\begin{pr}[\cite{he14'}\label{Bfibound}\footnote{Proposition 9 in \cite{he14'} assumes $[B,B^*]$ of $\kappa^\infty$ type, but in fact $\kappa_t^\infty$ is sufficient for the proof in the case of $B^k_\pm$.}]
Let $[B,B^*]$ be of $\kappa_t^\infty$-type, $\kappa\in(0,3)$, and let $k>(\kappa+1)/2$. Then the following bound holds
\begin{equation}\label{Bfi}
 \|B^k_\pm(\vph)G_\pm(P^0)\|\leq\con\|\vph\|_{p,1}\,,
\end{equation}
with $p=6/(6-\kappa)$ and the bounding constant depending on $B$, $\kappa$, $k$, and $G_\pm$.
\end{pr}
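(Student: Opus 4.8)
The plan is to reduce the operator bound to a purely classical integral inequality for the kernel $D_\kappa$ and then to prove that inequality by a Hardy--Littlewood--Sobolev type argument. First I would apply Theorem~\ref{GB}(i): since $k>(\kappa+1)/2$ and $[B,B^*]$ is of $\kappa_t^\infty$-type, the bound \eqref{boundpm} holds for every complex Borel measure $\nu$. Specializing to the absolutely continuous measure $d\nu(x)=\vph(x)\,dx$, whose variation measure is $d|\nu|(x)=|\vph(x)|\,dx$, this yields at once
\begin{equation*}
 \|B^k_\pm(\vph)G_\pm(P^0)\|^2\leq\con\int D_\kappa(x-y)\,|\vph(x)|\,|\vph(y)|\,dx\,dy\,.
\end{equation*}
Thus it suffices to establish the classical estimate
\begin{equation*}
 \int D_\kappa(x-y)\,|\vph(x)|\,|\vph(y)|\,dx\,dy\leq\con\,\|\vph\|_{p,1}^2\,,\qquad p=\frac{6}{6-\kappa}\,,
\end{equation*}
which no longer involves the operator $B$ and carries all the remaining content.

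For this inequality I would read off the relevant features of the kernel directly from its definition \eqref{fcom}: $D_\kappa\leq1$ everywhere, $D_\kappa$ is constant ($=1$) throughout the timelike/null region $a^2\geq0$, and in spacelike directions it decays like $\la^\kappa(\la+|\vec a|-|a^0|)^{-\kappa}$, i.e.\ like $|\vec a|^{-\kappa}$ once $|\vec a|$ dominates $|a^0|$. The exponent $p=6/(6-\kappa)$ is precisely the Hardy--Littlewood--Sobolev critical exponent attached to a decay rate $\kappa$ in the three spatial dimensions (the symmetric balance $\tfrac1p+\tfrac1p+\tfrac{\kappa}{3}=2$), and the hypothesis $\kappa\in(0,3)$ is exactly the admissible range $0<\kappa<3$ for that inequality. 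The core of the argument is therefore an HLS estimate in the spacelike directions, controlled by the $L^p$-part of $\|\vph\|_{p,1}$, combined with the weaker ($L^1$-type) control available in the direction along which $D_\kappa$ fails to decay; here I would invoke the properties of $\|\cdot\|_{p,1}$ recalled in Appendix~\ref{spacesp1}, together with the observation that, away from the cone, the truncated kernel $D_\kappa$ is pointwise dominated by a multiple of the homogeneous kernel $|\vec a|^{-\kappa}$, to which the standard HLS inequality applies by positivity of $|\vph|$.

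The hard part is the anisotropy of $D_\kappa$. Because it does not decay in timelike and null directions, one cannot simply apply an isotropic HLS inequality on $\mR^4$, and a naive factorization -- spatial HLS at each fixed time followed by an $L^1$-integration in the time separation -- is far too lossy near and inside the lightcone, where $D_\kappa\approx1$ over spatial balls whose radius grows with the time separation. The whole purpose of the norm $\|\cdot\|_{p,1}$ is to be adapted to this geometry, so that the contribution of the region where $x-y$ is near or inside the cone is absorbed without loss. Matching the truncated kernel to the homogeneous HLS kernel in the spacelike regime, while handling the near-cone region by the structural properties of $\|\cdot\|_{p,1}$ established in the appendix, is where the genuine work lies; once those properties are granted, the estimate and hence the Proposition follow.
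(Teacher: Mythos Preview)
Your first step is sound: Theorem~\ref{GB}(i) does give
\[
 \|B^k_\pm(\vph)G_\pm(P^0)\|^2\leq\con\int D_\kappa(x-y)\,|\vph(x)|\,|\vph(y)|\,dx\,dy\,.
\]
The trouble is the second step. The classical inequality you set out to prove,
\[
 \int D_\kappa(x-y)\,|\vph(x)|\,|\vph(y)|\,dx\,dy\leq\con\,\|\vph\|_{p,1}^2\,,\qquad p=\tfrac{6}{6-\kappa}\,,
\]
is \emph{false}. A counterexample: pick $g\geq0$ supported in a ball of radius $R$ and set $\vph(x)=h_\epsilon(x^0)\,g(\vx)+h_\epsilon(x^0-T)\,g(\vx)$, with $h_\epsilon\geq0$ a narrow bump of unit integral and $T>2R+2\epsilon$. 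Then $\|\vph\|_{p,1}=2\|g\|_p$. In the cross term the separation $x-y$ is timelike for all relevant $x,y$, hence $D_\kappa=1$ there and that term equals $2\|g\|_1^2$. The putative inequality would force $\|g\|_1\leq\con\,\|g\|_p$; but since $p>1$ for every $\kappa>0$, the ratio $\|g\|_1/\|g\|_p$ is unbounded (take $g=|B_R|^{-1}\mathbf{1}_{B_R}$ and let $R\to\infty$, choosing $T$ correspondingly large). Thus no constant works. Your own diagnosis was accurate --- the near-cone/timelike region is the obstruction --- but the norm $\|\cdot\|_{p,1}$ is \emph{not} adapted to absorb it after passing to $|\vph|$; the region where $D_\kappa\equiv1$ is simply too large once all phase information has been discarded.

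The upshot is that Theorem~\ref{GB}(i) alone is too coarse a starting point: replacing $\vph$ by $|\vph|$ throws away exactly the cancellation that makes the Proposition true. The paper does not give a proof here (it is quoted from \cite{he14'}, Proposition~9), but the argument there necessarily uses more than the $D_\kappa$ bound --- in particular the fact that $B^k_\pm$ has energy--momentum transfer restricted to $\pm p^0\geq0$, together with the spectrum condition encoded in $G_\pm(P^0)$. That one-sided frequency support is what breaks the time-symmetry of the crude kernel $D_\kappa$ and allows control by $\|\vph\|_{p,1}$; without invoking it, the reduction you propose cannot close.
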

\noindent
Thus $B^k_\pm(\vph)G_\pm(P^0)$, originally defined for integrable $\vph$, extends also to the space~$L^{p,1}$, as defined in Appendix \ref{spacesp1}.

Let $f(s,l)\in\Sc_\ep$. Then it is easy to show that
\begin{equation}
 B^k_\pm[r,f]=B^k_\pm(\nu_r)\,,
\end{equation}
where $d\nu_r(x)=(2\pi r)^{-1}\delta(|\vx|-r)f(x^0-r,t+\hx)\,dx$ (recall notation introduced in Introduction). Three-dimensional Fourier transform of this measure is a smooth function
\begin{equation}\label{FnuW}
 (\F_3\nu_r)(x^0,\vp)=\frac{r}{(2\pi)^2}\int e^{-ir\vp\cdot\vl}f(x^0-r,l)\,d\W_t(l)\,.
\end{equation}
Using the expansion \eqref{expS} given in Appendix~\ref{regwave} one finds
\begin{multline}\label{Fnu}
 (\F_3\nu_r)(x^0,\vp)=\frac{i}{2\pi|\vp|}\big[e^{-ir|\vp|}f(x^0-r,t+\hp)-e^{+ir|\vp|}f(x^0-r,t-\hp)\big]\\
 +\frac{r}{(2\pi)^2}(\F_3R)(x^0-r,r\vp)\,.
\end{multline}
The inverse Fourier $3$-transform of the rest in the second line above is
\begin{equation}
 j_r(x)=\frac{1}{(2\pi r)^2}R(x^0-r,\vx/r)\,,\quad\text{so}\quad \|j_r\|_{p,1}=\frac{r^{(3/p)-2}}{2\pi}\int\|R(x^0,.)\|_pdx^0\,.
\end{equation}
Suppose that $f\in\Sc_\ep$, $\ep>2$. Then using the bound \eqref{Rzest} in Appendix~\ref{regwave} one finds that the integral on the rhs above is finite for $p>3/2$. Therefore, for such $p$ we have
\be
 \|j_r\|_{p,1}=\con\, r^{-(2-3/p)}\to0\quad \text{for}\quad r\to\infty.
\ee

Let $[B,B^*]$ be of $\kappa^\infty_t$-type with $\kappa>2$ and $k>3/2$. Then one can always find $\kappa'\in(2,3)$ such that $\kappa'\leq\kappa$ and $k>(\kappa'+1)/2$. Then by Proposition \ref{Bfibound}
\be
 \|B^k_\pm(j_r)G_\pm(P^0)\|\leq\con\,\|j_r\|_{6/(6-\kappa'),1}=\con\,r^{-(\kappa'-2)/2}\,.
\ee

As $B^k_\pm(\vph)G(P^0_\pm)$ extends both to $\vph=\nu_r$ as well as $\vph=j_r$ in the assumed case, the operator
\be
 B^k_\pm(\nu_r{-}j_r)G(P^0_\pm)=\wch{B^k_\pm}(\wh{\nu_r{-}j_r})G(P^0_\pm)
\ee
is well defined, the extension on the rhs defined in terms of the lhs. Taking time-transform of Eq.\,\eqref{Fnu} we find
\be
 \wh{(\nu_r{-}j_r)}(p)=\frac{i}{2\pi|\vp|}\Big[e^{irp^-}\ti{f}(p^0,t+\hp)-e^{irp^+}\ti{f}(p^0,t-\hp)\Big]\,.
\ee
Introducing notation
\begin{equation}\label{ppm}
 p^\pm=p^0\pm|\vp|\,,\qquad \hp^\pm=t\pm\hp\ \ (\in\lct)\,,
\end{equation}
which we shall use from now on, we summarize and extend the result of our preceding discussion as follows.
\begin{pr}\label{asmom} Let $\kappa>2$, $k>3/2$, and let $f\in\Sc_\ep$, $\ep>2$.\\
(i) If $[B,B^*]$ is of $\kappa_t^\infty$-type, then
\begin{multline}\label{Brp}
 B^k_\pm[r,f]G_\pm(P^0)\\
 =\frac{i}{2\pi}\int\Big[e^{irp^-}\ti{f}(p^0,\hp^+)-e^{irp^+}\ti{f}(p^0,\hp^-)\Big]\wch{B^k_\pm}(p)G_\pm(P^0)|\vp|^{-1}dp +O_{\|.\|}(r^{-\beta})\,.
\end{multline}
(ii) If in addition $[\p_aB,\p_bB^*]$ is of $\kappa^\infty_t$-type, then
\begin{multline}\label{dBrp}
 (\p_aB)^k_\pm[r,l^af]G_\pm(P^0)\\
 =\frac{-1}{2\pi}\int\Big[e^{irp^-}p^-\ti{f}(p^0,\hp^+)
 -e^{irp^+}p^+\ti{f}(p^0,\hp^-)\Big]\wch{B^k_\pm}(p)G_\pm(P^0)|\vp|^{-1}dp\\
 +O_{\|.\|}(r^{-\beta})\,,
\end{multline}
\begin{multline}\label{BBrp}
 B^k_\pm[r,\df]G_\pm(P^0)+(\p_aB)^k_\pm[r,l^af]G_\pm(P^0)\\
 =\frac{1}{2\pi}\int\Big[e^{irp^-}\ti{f}(p^0,\hp^+)+e^{irp^+}\ti{f}(p^0,\hp^-)\Big]\wch{B^k_\pm}(p)G_\pm(P^0)dp
 +O_{\|.\|}(r^{-\beta})\,,
\end{multline}
where $\beta>0$ and  $O_{\|.\|}$~indicates a bound in norm. Both sides of these equalities are bounded in norm uniformly with respect to $r$.
\end{pr}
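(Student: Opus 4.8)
The plan is to assemble part (i) from the pieces already prepared in the preceding discussion, and then to derive part (ii) by feeding suitable auxiliary fields and smearing functions into part (i). For (i), I would start from the identity $B^k_\pm[r,f]=B^k_\pm(\nu_r)$ and split the measure as $\nu_r=(\nu_r-j_r)+j_r$. The term $B^k_\pm(j_r)G_\pm(P^0)$ is controlled directly: since $j_r$ is a genuine function with $\|j_r\|_{6/(6-\kappa'),1}=\con\,r^{-(\kappa'-2)/2}$ for a fixed $\kappa'\in(2,3)$ with $\kappa'\le\kappa$ and $k>(\kappa'+1)/2$, Proposition \ref{Bfibound} gives $\|B^k_\pm(j_r)G_\pm(P^0)\|=O(r^{-(\kappa'-2)/2})$, which is the claimed $O_{\|.\|}(r^{-\beta})$ with $\beta=(\kappa'-2)/2>0$. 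For the remaining piece I would invoke the Parseval-type relation $B^k_\pm(\nu_r-j_r)G_\pm(P^0)=\wch{B^k_\pm}(\wh{\nu_r-j_r})G_\pm(P^0)$ and substitute the explicit transform $\wh{(\nu_r-j_r)}(p)$ computed above; writing out the pairing against $\wch{B^k_\pm}(p)$ and recalling $\hp^\pm=t\pm\hp$ reproduces the main integral in \eqref{Brp}. Uniform boundedness in $r$ of the left-hand side follows from Theorem \ref{limit}(ii), and hence that of the right-hand side as well.

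For part (ii) I would apply part (i) to modified data rather than re-run the whole analysis. To get \eqref{dBrp}, apply (i) with $B$ replaced by $\p_aB$ (legitimate since $[\p_aB,\p_bB^*]$ is assumed $\kappa_t^\infty$-type) and $f$ replaced by $l^af$, which still lies in $\Sc_\ep$ because $l^a$ and its $L_{ab}$-derivatives are bounded on $\lct$, and then sum over $a$. Two elementary Fourier facts close the computation: because $l^a$ is independent of $s$ one has $\ti{(l^af)}(p^0,\hp^\pm)=(\hp^\pm)^a\ti{f}(p^0,\hp^\pm)$, and from the Fourier conventions differentiation of the field produces $\wch{(\p_aB)^k_\pm}(p)=ip_a\wch{B^k_\pm}(p)$. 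The contraction then reduces to the geometric identity $p\cdot\hp^+=p^0-|\vp|=p^-$ and $p\cdot\hp^-=p^0+|\vp|=p^+$, i.e.\ $p_a(\hp^\pm)^a=p^\mp$; the two factors of $i$ combine to the overall sign in \eqref{dBrp}. For \eqref{BBrp}, apply (i) to $B$ with $f$ replaced by $\df=\p_sf\in\Sc_\ep$, using $\ti{\df}(p^0,\hp^\pm)=-ip^0\ti{f}(p^0,\hp^\pm)$, and add \eqref{dBrp}. After collecting, the weight multiplying $e^{irp^-}\ti{f}(p^0,\hp^+)$ is $(p^0-p^-)|\vp|^{-1}=1$ and that multiplying $e^{irp^+}\ti{f}(p^0,\hp^-)$ is $(p^+-p^0)|\vp|^{-1}=1$, so the factor $|\vp|^{-1}$ cancels and \eqref{BBrp} results.

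I do not expect a serious obstacle, since the heavy lifting — the $L^{p,1}$ bound of Proposition \ref{Bfibound}, the control of the remainder $j_r$, and the explicit form of $\wh{(\nu_r-j_r)}(p)$ — is already in place. The points requiring care are bookkeeping rather than analysis: fixing a single $\kappa'\in(2,3)$, hence a single $\beta$, valid for all three estimates; verifying that $l^af$ and $\df$ genuinely satisfy the $\Sc_\ep$ bounds with the same $\ep>2$ so that part (i) applies verbatim; and tracking the signs and metric contractions in the momentum factors $ip_a$ and $-ip^0$. The only mild subtlety is justifying the Parseval pairing for the singular measure $\nu_r$: this is precisely where the split $\nu_r=(\nu_r-j_r)+j_r$ is essential, the function $\wh{(\nu_r-j_r)}(p)$ being regular enough for the momentum-space pairing to be defined through the extension furnished by Proposition \ref{Bfibound} and Theorem \ref{GB}.
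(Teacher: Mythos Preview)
Your proposal is correct and follows essentially the same route as the paper: part (i) is exactly the summary of the preceding discussion (the split $\nu_r=(\nu_r-j_r)+j_r$, the $L^{p,1}$-bound on $j_r$ via Proposition~\ref{Bfibound}, and the explicit transform $\wh{(\nu_r-j_r)}$), and part (ii) is obtained precisely by the substitutions $\ti{f}(p^0,\hp^\pm)\to(\hp^\pm)^a\ti{f}(p^0,\hp^\pm)$, $\wch{B}(p)\to ip_a\wch{B}(p)$ together with $\hp^\pm\cdot p=p^\mp$, then adding the $\df$-version of \eqref{Brp} to \eqref{dBrp}. Your write-up is in fact more explicit than the paper's about the cancellation of $|\vp|^{-1}$ and about the membership $l^af,\df\in\Sc_\ep$, but the argument is the same.
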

\begin{proof}
By Theorem~\ref{limit} the lhs of \eqref{Brp} is bounded in norm uniformly with respect to $r$, and the formula itself summarizes the result of the preceding discussion. Relation \eqref{dBrp} follows directly from~(i): one substitutes on the rhs of Eq.\,\eqref{Brp}
\be\label{subst}
\ti{f}(p^0,\hp^\pm)\rightarrow (\hp^\pm)^a\ti{f}(p^0,\hp^\pm)\,,\quad \wch{B}(p)\rightarrow ip_a\wch{B}(p)\,,
\ee
and notes that $\hp^\pm\cdot p=p^\mp $. Relation \eqref{BBrp} follows from the former two formulas.
\end{proof}
We note the following identity for further use:
\be\label{derB}
 \p_rB^k_\pm[r,f]=(\p_aB)^k_\pm[r,l^af]+r^{-1}B^k_\pm[r,f]\,,
\ee
which follows from the Definition~\ref{smeared} for $B\in\C^1$.

The results of Proposition~\ref{asmom} are strengthened by smearing in $r$. We recall that $w$ is a function of $R$ as given before Eq.\,\eqref{geta}, and also note that
\be
 \ti{g^\eta_R}(u)=e^{i(R-w)u}\ti{g}(wu)\,.
\ee
\begin{pr}\label{asmomg}
 Let $\kappa>2$, $k>3/2$, and let $f\in\Sc_\ep$, $\ep>2$.\\
(i) If $[B,B^*]$ is of $\kappa_t^\infty$-type, then
\begin{multline}\label{Brpg}
 B^k_\pm[g_R,f]G_\pm(P^0)
 =\pm i\int \ti{g}(Rp^\mp)\ti{f}(p^0,\hp^\pm)\wch{B^k_\pm}(p)G_\pm(P^0)|\vp|^{-1}dp\\
 +O_{\|.\|}(R^{-\gamma_1})\,,
\end{multline}
(ii) If in addition $[\p_aB,\p_bB^*]$ is of $\kappa^\infty_t$-type, then
\begin{equation}\label{dBrpg}
 (\p_aB)^k_\pm[g_R^\eta,l^af]G_\pm(P^0)=O_{\|.\|}(w^{-1})\,,
\end{equation}
\begin{multline}\label{BBrpgg}
 B^k_\pm[g_R^\eta,\df]G_\pm(P^0)
 =\int\Big[\ti{g_R^\eta}(p^-)\ti{f}(p^0,\hp^+)+\ti{g_R^\eta}(p^+)\ti{f}(p^0,\hp^-)\Big]\wch{B^k_\pm}(p)G_\pm(P^0)dp\\ +O_{\|.\|}(R^{-\gamma_2})\,,
\end{multline}
\begin{equation}\label{BBrpg}
 B^k_\pm[g_R,\df]G_\pm(P^0)=\int \ti{g}(Rp^\mp)\ti{f}(p^0,\hp^\pm)\wch{B^k_\pm}(p)G_\pm(P^0)dp +O_{\|.\|}(R^{-\gamma_3})\,,
\end{equation}
where $\gamma_i$ are some positive numbers. Both sides of these equalities are bounded in norm uniformly with respect to $R$.
\end{pr}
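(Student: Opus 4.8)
The plan is to read off each of the four identities by substituting the pointwise-in-$r$ formulas of Proposition~\ref{asmom} into the defining integral \eqref{smearg} and evaluating the resulting $r$-integrals of the oscillatory exponentials. I would begin with \eqref{Brpg}: inserting \eqref{Brp} into $B^k_+[g_R,f]G_+(P^0)=\int g_R(r)\,B^k_+[r,f]G_+(P^0)\,dr$, the remainder $O_{\|.\|}(r^{-\beta})$ contributes $O_{\|.\|}(R^{-\beta})$, since $\supp g_R\subseteq\langle\tau_1 R,\tau_2 R\rangle$ and $\int g_R=1$. For the principal part I would interchange the $dr$- and $dp$-integrations (legitimate as a Bochner integral, the $p$-integrand being a bounded operator uniformly in $r$ by Proposition~\ref{Bfibound} and $g_R\,dr$ a finite measure) and use $\int g_R(r)e^{irp^\mp}dr=2\pi\ti g(Rp^\mp)$. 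This yields the two-term expression
\[
 i\int\big[\ti g(Rp^-)\ti f(p^0,\hp^+)-\ti g(Rp^+)\ti f(p^0,\hp^-)\big]\wch{B^k_+}(p)G_+(P^0)|\vp|^{-1}dp\,,
\]
of which \eqref{Brpg} retains only the first (the \emph{resonant}) term.

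The single non-routine step is to show that the second, \emph{anti-resonant}, term is $O_{\|.\|}(R^{-\gamma_1})$, and I expect this to be the main obstacle. The mechanism is the factor $\theta(\pm p^0)$ built into \eqref{Bkdef}: $\wch{B^k_+}(p)$ is carried by $p^0>0$, where $p^+=p^0+|\vp|$ vanishes only at the apex $p=0$, whereas $\ti g(Rp^+)$ concentrates, as $R\to\infty$, on the locus $p^+=0$. Intersecting the two conditions forces both $p^+$ and $p^-$ to be $O(1/R)$, i.e. confines the effective density to an $O(1/R)$ ball about $p=0$. To make this quantitative I would use that $B^k_+$ annihilates the $p^0<0$ part, so the anti-resonant term equals $B^k_+(\psi_R)G_+(P^0)$ for the $\theta(p^0)$-truncated representative $\wch{\psi_R}(p)=\theta(p^0)\ti g(Rp^+)\ti f(p^0,\hp^-)|\vp|^{-1}$, and then invoke Proposition~\ref{Bfibound}, $\|B^k_+(\psi_R)G_+(P^0)\|\leq\con\|\psi_R\|_{p,1}$. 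The rapid decay of $\ti g$ controls the region $|p|\gtrsim R^{-1/2}$ (there $Rp^+$ is large), while on the complementary shrinking ball the three-dimensional local integrability of $|\vp|^{-1}$ keeps the mass finite; transcribing these momentum-space bounds into the configuration-space norm $\|.\|_{p,1}$ is where the real work lies, and a scaling count shows it produces a positive power $R^{-\gamma_1}$ for $\kappa>2$.

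For part (ii) I would avoid differentiating \eqref{dBrp} and use instead the identity \eqref{derB}, which reads $(\p_aB)^k_\pm[r,l^af]=\p_r B^k_\pm[r,f]-r^{-1}B^k_\pm[r,f]$. Smearing with $g_R^\eta$ and integrating the first term by parts transfers the derivative onto $g_R^\eta$; since $\int|(g_R^\eta)'(r)|\,dr=O(w^{-1})$ and $r^{-1}=O(R^{-1})=O(w^{-1})$ on $\supp g_R^\eta$, while $B^k_\pm[r,f]G_\pm(P^0)$ is bounded uniformly in $r$ by Theorem~\ref{limit}, relation \eqref{dBrpg} follows at once and cleanly, without reference to the remainders of \eqref{dBrp}. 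Relation \eqref{BBrpgg} is then obtained by smearing \eqref{BBrp} with $g_R^\eta$, evaluating the $r$-integrals to $2\pi\ti{g^\eta_R}(p^\mp)$ through $\ti{g^\eta_R}(u)=e^{i(R-w)u}\ti g(wu)$, and moving the $(\p_aB)^k_\pm[g_R^\eta,l^af]G_\pm(P^0)=O(w^{-1})$ contribution into the error by \eqref{dBrpg}; here both terms survive, the measure carrying no factor $|\vp|^{-1}$. Finally \eqref{BBrpg} is the specialization $\eta=1$ of \eqref{BBrpgg} (so $w=R$ and $\ti{g^\eta_R}(u)=\ti g(Ru)$) with the anti-resonant term removed exactly as in part (i), now more easily since $|\vp|^{-1}$ is absent.

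Throughout, the uniform-in-$R$ norm boundedness asserted in the Proposition is inherited directly: each left-hand side is an average, against the probability density $g_R$ or $g_R^\eta$, of operators bounded uniformly in $r$ by Theorem~\ref{limit}, and each right-hand side is controlled by Proposition~\ref{Bfibound}. The exponents $\gamma_1,\gamma_2,\gamma_3$ are positive, each being the minimum of $\beta$, the suppression rate of the dropped anti-resonant term, and (for \eqref{BBrpgg}) the rate $w^{-1}\sim R^{-\eta}$ coming from \eqref{dBrpg}.
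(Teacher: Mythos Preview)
Your overall architecture is correct and coincides with the paper's: smear the pointwise identities of Proposition~\ref{asmom} with $g_R$ (or $g_R^\eta$), identify the `wrong' oscillatory term, and kill it with Proposition~\ref{Bfibound}. Your treatment of part~(ii) is exactly what the paper does: \eqref{dBrpg} via the identity \eqref{derB} and integration by parts, then \eqref{BBrpgg} by smearing \eqref{BBrp} and absorbing the $(\p_aB)$-piece into the error, and \eqref{BBrpg} as the $\eta=1$ specialization with the anti-resonant term removed as in~(i).

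The gap is in your estimate of the anti-resonant term. You propose to write it as $B^k_+(\psi_R)G_+(P^0)$ with
\[
 \wh{\psi_R}(p)=\theta(p^0)\,\ti g(Rp^+)\,\ti f(p^0,\hp^-)\,|\vp|^{-1}
\]
and bound it by $\con\,\|\psi_R\|_{p,1}$. But $\|\psi_R\|_{p,1}=\int\|\psi_R(x^0,\cdot)\|_p\,dx^0$ is infinite: the hard cutoff $\theta(p^0)$ is a jump in the $p^0$-variable, and since $\ti g(Rp^+)\ti f(p^0,\hp^-)$ does not vanish at $p^0=0$, the inverse transform in $p^0$ only decays like $|x^0|^{-1}$, which is not integrable. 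Your momentum-space heuristic (support squeezed into a ball of radius $\sim R^{-1}$) is qualitatively right but does not transcribe into a finite $\|\cdot\|_{p,1}$ norm. Dropping the $\theta(p^0)$ does not help either: then the density is no longer concentrated near $p=0$ but near the whole sheet $p^0=-|\vp|$, and no $R$-decay is visible.

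The paper's remedy is to borrow a small power of $|p^0|$ from $B^k_+$: choose $\kappa'\in(2,3)$ with $\kappa'\leq\kappa$ and $\delta>0$ with $(\kappa'+1)/2+\delta<k$ (possible since $k>3/2$), and rewrite the anti-resonant term as $B^{k-\delta}_+(\chi_R)G_+(P^0)$ with
\[
 \wh{\chi_R}(p)=\theta(p^0)\,|p^0|^\delta\,|\vp|^{-1}\,\ti g(Rp^+)\,\ti f(p^0,\hp^-)\,.
\]
Now $k-\delta>(\kappa'+1)/2$, so Proposition~\ref{Bfibound} still applies (with exponent $q=6/(6-\kappa')$), while the extra factor $|p^0|^\delta$ upgrades the $x^0$-decay of $\chi_R$ to $(1+|x^0|/R)^{-1-\delta}$, which \emph{is} integrable. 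A scaling/convolution estimate (this is the content of Appendix~\ref{nest}) then gives $\|\chi_R\|_{q,1}\leq\con\,R^{-(\kappa'-2)/2-\delta}$, which is the missing positive power $\gamma_1$. The same trick handles the anti-resonant term in \eqref{BBrpg}. Once you insert this $k\mapsto k-\delta$ shift, the rest of your outline goes through.
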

\begin{proof}
(i) Smearing relation \eqref{Brp} with $g_R$ one obtains formula which differs from \eqref{Brpg} by the `wrong' term
\begin{equation}\label{wrong}
 \mp i\int \ti{g}(Rp^\pm)\ti{f}(p^0,\hp^\mp)\wch{B^k_\pm}(p)G_\pm(P^0)|\vp|^{-1}dp
\end{equation}
on the rhs. We choose numbers $\kappa'\in(2,3)$, $\kappa'<\kappa$, and $\delta>0$ such that
\be\label{numbers}
 (\kappa'+1)/2+\delta<k\,.
\ee
We write the term \eqref{wrong} as $\mp ie^{\mp i\delta\pi/2}B^{k-\delta}_\pm(\chi^\pm_R)$, where $\chi^\pm_R$ is the inverse transform of
\be
 \wh{\chi^\pm_R}(p)=\theta(\pm p^0)|p^0|^\delta|\vp|^{-1} \ti{g}(Rp^\pm)\ti{f}(p^0,\hp^\mp)\,.
\ee
We note that $k-\delta>(\kappa'+1)/2$, so the use of Proposition~\ref{Bfibound} gives then the first estimate in the following sequence:
\be\label{chiest}
 \|B^{k-\delta}_\pm(\chi_R^\pm)\|\leq\con\,\|\chi_R^\pm\|_{q,1}\leq \con\,R^{-(\kappa'-2)/2-\delta}\,,
\ee
with $q=6/(6-\kappa')$, and the proof of the second inequality is a~more technical point, which we shift to Appendix~\ref{nest}. \\
(ii) We smear identity \eqref{derB} with $g_R^\eta(r)$. The term on the lhs may be integrated by parts, which yields $-w^{-1}B[g'{}_{\!R}^\eta,f]$. As the second term on the rhs is $O_{\|.\|}(R^{-1})$ upon smearing, the relation \eqref{dBrpg} follows. Next, we smear identity~\eqref{BBrp} with $g_R^\eta(r)$, which yields \eqref{BBrpgg}. For $\eta=1$ the smeared `wrong' term on the rhs is estimated by $R^{-\gamma_3}$ with the use of the method applied in (i) and one obtains \eqref{BBrpg}.
\end{proof}

We end this section with a remark on further smearing of operators of the form $[B^k_\pm(\nu_r{-}j_r)G(P^0_\pm)](\vx)$ with an integrable function $h(\vx)$, which will be needed in the next section. As a result one obtains $B^k_\pm(\nu_r{-}j_r)(h)G(P^0_\pm)$. However, we note that $\nu_r{\stackrel{(3)}{*}}h$ is integrable (with {${\stackrel{(3)}{*}}$} denoting the $3$-space convolution), while $j_r{\stackrel{(3)}{*}}h$ is in $L^{p,1}$ (due to the relation \eqref{young13}). Therefore the smeared operator may be further written as
\be\label{3fourierp1}
 B^k_\pm(\nu_r{-}j_r)(h)G(P^0_\pm)=B^k_\pm((\nu_r{-}j_r){\stackrel{(3)}{*}}h)G(P^0_\pm)=2\pi\wch{B^k_\pm}(\wh{(\nu_r{-}j_r)}\F_3h)G(P^0_\pm)\,.
\ee

\section{Asymptotic fields}\label{asfields}

In this section we shall find that if the asymptotic limit for $r\to\infty$ of the operators $B^k_\pm[r,f]G_\pm(P^0)$ exists, it defines a massless field.
Our investigation is thus based on the following supposition.
\begin{ass}\label{kout}
 Let $k>3/2$ and $[B,B^*]$ be of $\kappa^\infty$-type with $\kappa>2$. We~assume that for all $f\in\Sc_\ep$ with $\ep>2$ there exist weak limits
\begin{equation}\label{weak}
 B^{k\,\out}_\pm[f]G_\pm(P^0)=\mathrm{w}\!-\!\lim_{r\to\infty}B^k_\pm[r,f]G_\pm(P^0)\,.
\end{equation}
\end{ass}
\begin{dfn}\label{out}
 Let $f(s,l)=b^{(n)}(s,l)\equiv\p^nb(s,l)/\p s^n$, with $n\geq2$, $b\in\Sc_\ep$, $\ep>2$. Then we denote
\be
 B^\out_\pm[f]G_\pm(P^0)=B^{n\,\out}_\pm[b]G_\pm(P^0)\,.
\ee
\end{dfn}
\noindent
We note that for positive integer $m$ there is $B^{k+m}_\pm[r,b]=B^k_\pm[r,b^{(m)}]$, so the definition is consistent.
\begin{thm}
 Let the terms of Assumption~\ref{kout} and Definition~\ref{out} be satisfied. Then
\begin{gather}
 (\p_aB)^\out_\pm[l^af]G_\pm(P^0)=0\,,\label{plBout}\\[2ex]
 B^\out_\pm[\df]G_\pm(P^0)
 =\mathrm{w}\!-\!\lim_{r\to\infty}\frac{1}{2\pi}\int e^{irp^\mp}\ti{f}(p^0,\hp^\pm)\wch{B}(p)G_\pm(P^0)\theta(\pm p^0)dp\,,\label{Bout}\\[2ex]
 \Box\, B^\out_\pm[\df]G_\pm(P^0)=0\,.\label{Bwave}
\end{gather}
\end{thm}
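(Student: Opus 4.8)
The plan is to read all three identities off the momentum-space representations of Proposition~\ref{asmom}, using that every object carrying the superscript $\out$ is, by Assumption~\ref{kout} and Definition~\ref{out}, a weak limit as $r\to\infty$ of the uniformly norm-bounded operators furnished by Theorem~\ref{limit}. Three mechanisms do the work: (a) a Riemann--Lebesgue decay of the oscillatory factors $e^{irp^\pm}$ whose phase has nowhere-vanishing gradient; (b) the explicit prefactors $p^\mp$ in \eqref{dBrp}, which vanish on the relevant branch of the lightcone and so suppress exactly the piece that otherwise survives in \eqref{Bout}; and (c) the spectral identity \eqref{Bkdef} rewriting $\wch{B^k_\pm}(p)$ as $e^{\mp ik\pi/2}\theta(\pm p^0)|p^0|^k\wch B(p)$.

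For \eqref{plBout} I start from \eqref{dBrp}, which presents $(\p_aB)^k_\pm[r,l^af]G_\pm(P^0)$, up to $O(r^{-\beta})$, as an oscillatory $p$-integral whose integrand carries the prefactors $p^\mp$. In the $+$ case the $e^{irp^+}p^+$ piece is removed by Riemann--Lebesgue, while in the $e^{irp^-}p^-$ piece the factor $p^-$ vanishes on the positive lightcone $p^0=|\vp|$ and thereby cancels precisely the part that would otherwise outlive the oscillation; hence the whole expression tends weakly to zero. The same conclusion is confirmed quantitatively by smearing the differential identity \eqref{derB}: integrating the $\p_r$-term by parts against $g_R^\eta$ gives $-w^{-1}B^k_\pm[(g_R^\eta)',f]G_\pm(P^0)=O(w^{-1})$ while the $r^{-1}$-term is $O(R^{-1})$, which is exactly \eqref{dBrpg}.

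Granted \eqref{plBout}, identity \eqref{Bout} follows from \eqref{BBrp}. Its left-hand side converges weakly, the $(\p_aB)$-term to zero by \eqref{plBout} and the first term to $B^\out_\pm[\df]G_\pm(P^0)$ by hypothesis. On the right-hand side the exponential $e^{irp^+}$ (in the $+$ case, $e^{irp^-}$ in the $-$ case) has a phase with nowhere-vanishing gradient, so it keeps oscillating even on the branch $p^\mp=0$ where the spectral content concentrates and its contribution is annihilated by Riemann--Lebesgue; the complementary exponential $e^{irp^\mp}$, stationary exactly on $p^\mp=0$, retains the limit, and rewriting $\wch{B^k_\pm}(p)$ through \eqref{Bkdef} puts it in the asserted form. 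For \eqref{Bwave} I realize $B^\out_\pm[\df]G_\pm(P^0)$ as a spacetime field by translations $x\mapsto U(x)\,\cdot\,U(-x)$, which commute with $G_\pm(P^0)$ since $P^0$ commutes with $U(a)$; then $\Box$ acts in the representation \eqref{Bout} as multiplication by $-p^2=-p^-p^+$. The new factor $p^\mp$ vanishes on the surviving branch of the lightcone, and the now-tamed integrand is killed by the oscillation $e^{irp^\mp}$ exactly as in \eqref{plBout}, yielding $\Box\, B^\out_\pm[\df]G_\pm(P^0)=0$.

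The step I expect to be the real obstacle is making the Riemann--Lebesgue reasoning rigorous: the $p$-integrals are only conditionally or distributionally convergent, because $p\mapsto\wch{B^k_\pm}(p)G_\pm(P^0)\Psi$ carries no a priori $L^1$ control and may be genuinely singular on the lightcone $p^2=0$ — this concentration being the very spectral phenomenon the paper isolates. One therefore cannot invoke Riemann--Lebesgue naively; the vanishing of the oscillatory pieces must be drawn from the quantitative $\|\cdot\|_{p,1}$ estimate of Proposition~\ref{Bfibound} together with the smeared norm bounds of Proposition~\ref{asmomg}, in which the prefactors $p^\mp$ and the averaging against $\ti g(Rp^\mp)$ are precisely what turns oscillation into honest decay. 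Controlling this interplay between oscillation and lightcone singularity — rather than the bookkeeping of powers and phases in \eqref{Bkdef} — is where the essential work lies.
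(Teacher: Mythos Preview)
Your outline for \eqref{plBout} is correct and matches the paper: the weak limit exists by Assumption~\ref{kout}, and since averaging with $g_R^\eta$ preserves that weak limit while \eqref{dBrpg} forces the smeared version to zero in norm, the limit itself is zero. The Riemann--Lebesgue gloss you present first is unnecessary and, as you acknowledge, unjustified.

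The genuine gap is in \eqref{Bout}. From \eqref{BBrp} and \eqref{plBout} you correctly conclude that the \emph{sum} of the two oscillatory terms has a weak limit, but you then want the ``wrong'' term (with phase $e^{irp^\pm}$) to vanish by Riemann--Lebesgue, and you concede you cannot justify this. The estimate \eqref{chiest} in the proof of Proposition~\ref{asmomg} does show the wrong term vanishes \emph{after} smearing with $g_R$, but that alone does not force its unsmeared weak limit to zero: for that you must first know the wrong term has a \emph{separate} weak limit, and nothing in your argument establishes this. The paper supplies the missing step by a device you do not have: it rewrites \eqref{BBrp} with $B\to(1-\la^2\Delta)^2B$, $k\to n-\delta$, $f\to\vec{l}\,b$, and then smears in $\vx$ against an integrable vector function $\vec{h}$ whose Fourier transform carries a factor $\hp$. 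The contraction with $\pm\hp$ produces opposite signs on the two terms, so now both the sum and the difference of the two oscillatory pieces have weak limits, hence each separately; only then does the $g_R$-smearing argument kill the wrong one.

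For \eqref{Bwave} your argument has the same defect: inserting $-p^2=-p^-p^+$ under the integral in \eqref{Bout} produces a factor $p^\mp$, and you again appeal to oscillation. The paper avoids this by writing $-p^2=(p^\mp)^2-2p^0p^\mp$ and recognizing the resulting expression, via the substitutions \eqref{subst} applied to \eqref{Bout}, as
\[
-(\p_a\p_bB)^\out_\pm[l^al^b\df]G_\pm(P^0)-2(\p_aB)^\out_\pm[l^af^{(2)}]G_\pm(P^0)\,,
\]
both of which vanish by \eqref{plBout} without any further spectral analysis.
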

\begin{proof}
The operator on the lhs of \eqref{plBout} exists as a weak limit of $(\p_aB)^n_\pm[r,l^ab]$. But according to \eqref{dBrpg} after smearing this tends to zero, so \eqref{plBout} follows.

Relation \eqref{BBrp} gives now for $k=n$ and with $b$ replacing $f$
\begin{multline}\label{Boutpm}
 B^{n\,\out}_\pm[r,\dot{b}]G_\pm(P^0)\\
 =\mathrm{w}\!-\!\lim_{r\to\infty}\frac{1}{2\pi}\int\Big[e^{irp^-}\ti{b}(p^0,\hp^+)
 +e^{irp^+}\ti{b}(p^0,\hp^-)\Big]\wch{B^n_\pm}(p)G_\pm(P^0)dp\,.
\end{multline}
We need to show that the two terms on the rhs of Eq.\,\eqref{Boutpm} have separate limits. To this end we denote
\be
 (\F_3\vec{h})(\vp)=(2\pi)^{-1}e^{\mp i\delta\pi/2}\hp\,|\vp|^\delta(1+\la^2|\vp|^2)^{-2}\,,
\ee
with \mbox{$\delta\in(0,1)$}, $\delta<n-(3/2)$. This function satisfies the assumptions of Lemma~14 in \cite{he14'} with $\gamma=\delta$, so $\vec{h}(\vx)$ is integrable. Consider relation \eqref{BBrp} with the replacements
\be
 B\to C=(1-\la^2\Delta)^2B\,,\quad k\to n-\delta\,,\quad f(s,l)\to\vec{l}\,b(s,l)\,.
\ee
As $n-\delta>3/2$ the lhs still has a finite week limit for $r\to\infty$ (by Assumption~\ref{kout}). Take the translation of the resulting formula by $\vx$, take the scalar product with $\vec{h}(\vx)$ and integrate over $d^3x$. As $\vec{h}$ is integrable, the lhs again has a finite weak limit for $r\to\infty$. Also, by integrability of $\vec{h}$ the contribution of the rest vanishes in the limit. We turn to the integral on the rhs of \eqref{BBrp}. Taking into account the remark closing Section \ref{foutr} the effect of the described operations obeys the scheme \eqref{3fourierp1} and amounts to the replacements:
\be
 f(p^0,\hp^\pm)\to 2\pi(\F_3\vec{h})(\vp)\cdot(\pm\hp) b(p^0,\hp^\pm)\,,\quad \wch{B^k_\pm}(p)\to (1+\la^2|\vp|^2)\wch{B^{k-\delta}_\pm}(p)\,.
\ee
Substituting here the transform $(\F_3\vec{h})(\vp)$, using $\hp\cdot\hp=-1$ (Lorentz product) and recalling the definition \eqref{Bkdef} we obtain expression similar to the rhs of \eqref{Boutpm}, but with minus sign in front of the first term in brackets. Thus separate limits of the two terms exist. But from the proof of Proposition~\ref{asmomg} (ii) we know, that after smearing with $g_R(r)$ the `wrong' term vanishes in the limit, so the limit before smearing is also zero. Transferring the derivatives from $B$ to $b$ we arrive at the relation~\eqref{Bout}.

To prove \eqref{Bwave} we note that acting with $\Box$ on \eqref{Bout} one produces on the rhs additional factor $-p^2$ under the integral. On the other hand, using \eqref{Bout} we also find:
\begin{equation}
 (\p_aB)^\out_\pm[l^a\df]G_\pm(P^0)
 =\mathrm{w}\!-\!\lim_{r\to\infty}\frac{i}{2\pi}\int e^{irp^\mp}p^\mp\ti{f}(p^0,\hp^\pm)\wch{B}(p)G_\pm(P^0)\theta(\pm p^0)dp\,;
\end{equation}
this is shown as in the proof of Proposition \ref{asmom}, by replacements \eqref{subst} applied to the rhs of \eqref{Bout}.
Therefore, noting that $-p^2=(p^0\mp|\vp|)^2-2p^0(p^0\mp|\vp|)$, we can write
\be
 \Box\, B^\out_\pm[\df]G_\pm(P^0)=-(\p_a\p_bB)^\out_\pm[l^al^b\df]G_\pm(P^0) - 2(\p_aB)^\out_\pm[l^af^{(2)}]G_\pm(P^0)\,.
\ee
But this vanishes according to \eqref{plBout}.
\end{proof}

We now turn to the question of the (in)dependence of the asymptotic fields on the choice of time axis. To address this question we shall assume that our  Assumption~\ref{kout} is valid in all frames -- we recall that up to now the time axis vector $t$ was kept fixed.

We first rewrite definition \eqref{smearf} in a way free from the assumption on particular gauge of vectors $l$. We note that if $F(s,l)$ is a function homogeneous of degree $-m$, then the integral $\int F(s,l)ds$ is a~function of $l$ homogeneous of degree $-m+1$. Let $\df(s,l)$ be homogeneous function of degree $-2$. Then Definition~\ref{smearf} (where we use $\df$ in place of $f$) may be written as
\be\label{smearhom}
 B[r,\df]=\frac{1}{2\pi}\int \frac{r}{t\cdot l}B\Big(\frac{st+rl}{t\cdot l}\Big)\df(s,l)\,ds\,d^2l\,;
\ee
clearly in $t$-gauge this reduces to the former form, but has the advantage that the dependence on vector $t$ is now explicit. We differentiate the $t$-dependent expression under the integral with respect to $t^a$ and find:
\begin{multline}
 \frac{\p}{\p t^a}\Big[\frac{r}{t\cdot l}B\Big(\frac{st+rl}{t\cdot l}\Big)\Big]\\
 =\frac{rs}{(t\cdot l)^3}\
 \big(t\cdot l\, \delta^b_a-l_at^b\big)(\p_bB)\Big(\frac{st+rl}{t\cdot l}\Big)
 -r\p_r\Big[\frac{rl_a}{(t\cdot l)^2}B\Big(\frac{st+rl}{t\cdot l}\Big)\Big]\,.
\end{multline}
Thus restoring on the rhs the $t$-gauge we can write
\be
 (\p/\p t^a)B[r,\df]=(\p_bB)[r,(\delta_a^b-l_at^b)s\df]-r\p_rB[r,l_af]\,.
\ee
Smearing this with $g_R(r)$ and integrating the far right element by parts we obtain
\be\label{dtg}
 (\p/\p t^a)B[g_R,\df]=(\p_bB)[g_R,(\delta_a^b-l_at^b)s\df]+B[h_R,l_a\df]\,,
\ee
where $h_R(r)=R^{-1}h(R^{-1}r)$, $h(u)=d[ug(u)]/du$.
\begin{thm}
 Let the terms of Assumption~\ref{kout} be satisfied with respect to all vectors $t$, and let $f(s,l)=b^{(3)}(s,l)$, with $b\in\Sc^2_\ep$, $\ep>2$. Then operators $B^\out_\pm[\df]$ do not depend on the choice of time axis used for their definition.
\end{thm}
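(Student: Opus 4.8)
The plan is to run the $t$-derivative identity \eqref{dtg} through the asymptotic limit. Since the admissible time axes (future unit vectors) form a connected set, it suffices to show that the right-hand side of \eqref{dtg}, taken in the creation/annihilation components and multiplied by $G_\pm(P^0)$, tends to zero as $R\to\infty$; integrating \eqref{dtg} along a path of time axes then gives $B^\out_\pm[\df]_{t_2}G_\pm(P^0)=B^\out_\pm[\df]_{t_1}G_\pm(P^0)$ and avoids differentiating under the limit. First I would record that the $g_R$-smeared operator has the same asymptote as the unsmeared one: from $B^k_\pm[g_R,\df]G_\pm(P^0)=\int g(u)B^k_\pm[Ru,\df]G_\pm(P^0)\,du$, $\int g=1$, and the $r$-uniform bound of Theorem~\ref{limit}(ii), its weak limit equals $B^\out_\pm[\df]G_\pm(P^0)$. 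The null split $\pm$ refers to the sign of $p^0$ on the lightcone, which is Lorentz invariant, so the labels are unambiguous; the regulator $G_\pm(P^0)$ enters only to make the limits well defined, and its mild frame dependence is controlled by the same uniform bounds.

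The term $B^k_\pm[h_R,l_a\df]G_\pm(P^0)$ vanishes for an elementary reason. The hypothesis $f=b^{(3)}$, $b\in\Sc^2_\ep$, is exactly what guarantees that every auxiliary test function produced in \eqref{dtg} — here $l_a\df$ — still lies in some $\Sc_{\ep'}$ with $\ep'>2$, hence falls under Assumption~\ref{kout}. Writing $B^k_\pm[h_R,l_a\df]G_\pm(P^0)=\int h(u)B^k_\pm[Ru,l_a\df]G_\pm(P^0)\,du$ and using the uniform bound to pass to the limit, one obtains $\big(\int h\big)\,B^{k\,\out}_\pm[l_a\df]G_\pm(P^0)$; since $h(u)=d[ug(u)]/du$ has $\int h=0$, this is zero.

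The real content is the term $(\p_bB)^k_\pm[g_R,(\delta_a^b-l_at^b)s\df]G_\pm(P^0)$. I would pass to the momentum representation as in Proposition~\ref{asmom}: performing the substitution \eqref{subst} on the contracted index $b$, with $l\to\hp^\pm$ and $t\cdot p=p^0$, converts $(\delta_a^b-l_at^b)$ into the coefficient $i\big(p_a-p^0(\hp^\pm)_a\big)$ multiplying $\widetilde{s\df}(p^0,\hp^\pm)$ and the phase $e^{irp^\mp}$, in the sign-pairing of \eqref{Brp}. After smearing with $g_R$ the phase becomes $\ti g(Rp^\mp)$, which concentrates on the cone $p^\mp=0$ as $R\to\infty$. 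The decisive fact is that on that cone $p=p^0\hp^\pm$, so the coefficient $p_a-p^0(\hp^\pm)_a$ vanishes there identically; writing it as $p^\mp$ times a smooth factor and using $p^\mp\ti g(Rp^\mp)=R^{-1}(Rp^\mp)\ti g(Rp^\mp)$ exhibits the whole term as $O(R^{-1})$. This is where the particular tensor $\delta_a^b-l_at^b$, produced by differentiating the cylinder geometry in $t$, meets lightcone kinematics; it is also the step most in need of care, since the $|\vp|^{-1}$ weight and the angular singularity of $\hp^\pm$ at $\vp=0$ must be tamed by the $L^{p,1}$-bound of Proposition~\ref{Bfibound}.

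Finally I would collect the two estimates: both terms on the right of \eqref{dtg} go to zero in norm, with constants depending continuously on $t$ and hence uniform along a compact path of time axes. This uniformity lets me carry $\lim_{R\to\infty}$ under the $t$-integral and conclude the asserted equality of the endpoint operators. Beyond the kinematic identity of the previous paragraph, I expect the genuinely delicate analytic point to be precisely this interchange of the asymptotic limit with the integration over time axes (equivalently, the commutation of $\p/\p t^a$ with $\lim_{R\to\infty}$), which must rest on the $r$- and $R$-uniform bounds already furnished by Theorem~\ref{limit}(ii) and Proposition~\ref{asmom}.
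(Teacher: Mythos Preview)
Your architecture matches the paper's: both arguments run the identity \eqref{dtg} through the limit and treat the two terms separately, and your handling of the $B^3_\pm[h_R,l_a\dot b]$ contribution via $\int h=0$ is exactly the paper's. The interesting divergence is in the first term $(\p_bB)^k_\pm[g_R,(\delta_a^b-l_at^b)s\df]$.

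You keep the tensor $(\delta_a^b-l_at^b)$ and observe that its momentum-space contraction $p_a-p^0(\hp^\pm)_a$ vanishes on the cone $p^\mp=0$, then factor out $p^\mp$ to get $O(R^{-1})$. This is viable, with two caveats. First, the ``smooth factor'' you extract is $\pm\hp^i$ (spatial), which is not smooth in $p$ at $\vp=0$; however it \emph{is} a smooth function of the angular variable $\hp^\pm\in\lct$, so the product $\hp^i\,\ti{(s\df)}(p^0,\hp^\pm)$ is again of the form $\ti h(p^0,\hp^\pm)$ with $h\in\Sc_\ep$. Second, after replacing $p^\mp\ti g(Rp^\mp)$ by $R^{-1}\ti{g_2}(Rp^\mp)$ with $g_2=ig'$, you must still justify that the remaining integral is uniformly bounded in $R$; the cleanest route is not Proposition~\ref{Bfibound} directly but to recognize it (up to the same ``wrong term'' error as in Proposition~\ref{asmomg}(i)) as the main part of $B^k_\pm[(g_2)_R,h]G_\pm(P^0)$, bounded by Theorem~\ref{limit}(ii).

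The paper instead uses \eqref{plBout} to add to $(\delta_a^b-l_at^b)$ a term $(l_a-t_a)l^b$ whose asymptote vanishes, obtaining the modified tensor $\delta_a^b-t_at^b+(l_a-t_a)(l^b-t^b)$. Its contraction with $p_b$ (with $l_a-t_a\to\pm\hp_a$) is $p_a-t_ap^0-\hp_a|\vp|$, which vanishes \emph{identically} in $p$, not merely on the cone. So the integrand in the \eqref{Bout}-representation is zero and no decay estimate is needed at all. This buys algebraic cleanliness at the price of invoking \eqref{plBout}; your approach trades that for an extra (routine) analytic bound. Your care about integrating along a path of time axes, with uniformity in $t$ supplied by Theorem~\ref{limit}(ii), is a point the paper leaves implicit.
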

\begin{proof}
We have $s\,\df(s,l)=b_1^{(3)}(s,l)$, with $b_1(s,l)=s\dot{b}(s,l)-3b(s,l)\in\Sc^2_\ep$.
Calculating in $t$-gauge one finds that $B^3_\pm[r,\dot{b}]=B[r,\dot{b}_\pm^{(3)}]$, where $\ti{b_\pm}(\w,l)=\theta(\pm\w)\ti{b}(\w,l)$, and similarly for other terms in Eq.\,\eqref{dtg}. Therefore,
\be
 (\p/\p t^a)B^3_\pm[g_R,\dot{b}]=(\p_bB)^2_\pm[g_R,(\delta_a^b-l_at^b)\dot{b}_1]+B^3_\pm[h_R,l_a\dot{b}]\,.
\ee
The limits of the operators on the rhs, superposed with $G_\pm(P^0)$, exist before smearing (for $r\to\infty$). But the second operator is smeared with the derivative of a function vanishing at the end points, so its limit for $R\to\infty$ is zero. Next, we note that using \eqref{plBout} we can write the limit of the first term as:
\begin{multline}
 \mathrm{w}\!-\!\lim_{r\to\infty}(\p_bB)^2_\pm[r,(\delta_a^b-l_at^b)\dot{b}_1]G_\pm(P^0)\\
 =\mathrm{w}\!-\!\lim_{r\to\infty}(\p_bB)^2_\pm\big[r,\big(\delta_a^b-t_at^b+(l_a-t_a)(l^b-t^b)\big)\dot{b}_1\big]G_\pm(P^0)\,.
\end{multline}
The limit on the rhs is obtained with the use of \eqref{Bout}, where one has to substitute
\[
 \wch{B}(p)\to ip_b\wch{B}(p)\quad \text{and}\quad \ti{f}(p^0,\hp^\pm)\to(\delta^b_a-t_at^b+\hp_a\hp^b)\ti{b_1}(p^0,\hp^\pm)\,.
\]
But the contraction of these terms vanishes, so the limit of $B^3_\pm[g_R,\dot{b}]$ does not depend on $t$. As this limit exists before smearing with $g_R$, the thesis follows.
\end{proof}

We shall now want to be able to compose asymptotic fields. With that in mind we make a stronger supposition.
\begin{ass}\label{koutstrong}
 Let the terms of Assumption~\ref{kout} and Definition~\ref{out} be satisfied. We assume that in this context the limits exist as strong limits after smearing with $g_R(r)$; i.e.\ the limit in \eqref{weak} is equal to
\begin{equation}\label{strong}
 \mathrm{s}\!-\!\lim_{R\to\infty} B^k_\pm[g_R,f]G_\pm(P^0)\,.
\end{equation}
\end{ass}

From now on we choose $G_\pm(P^0)=(1+\la P^0)^{-1}$ and note that if $B\in\C^\infty_t$, then
\begin{equation}\label{PBP}
 (1+\la P^0)^m B(1+\la P^0)^{-m-1}=\sum_{l=0}^m\tbinom{m}{l}(-i\la\p_0)^lB\,(1+\la P^0)^{-l-1}\,;
\end{equation}
this is easily obtained by writing
\[
 \exp{[i\tau(1+\la P^0)]}\,B=B(\tau\la)\exp{[i\tau(1+\la P^0)]}
\]
and comparing $\tau^m$-terms. Therefore, under the conditions of Assumption~\ref{koutstrong} the operators $(1+\la P^0)^m B^k_\pm[g_R,f](1+\la P^0)^{-m-1}$ are bounded and have strong limits.
\begin{pr}\label{outBB}
Let $B$, $B_i$ ($i=1,\ldots,n$) satisfy the conditions of Assumption~\ref{koutstrong}. Then
\begin{gather}
 B^{k\out}_\pm[f] (1+\la P^0)^{-m-1}\Hc\subseteq (1+\la P^0)^{-m}\Hc\,,\label{mdom}\\[1ex]
\begin{aligned}
 \mathrm{s}\!-\!\lim_{R\to\infty}(1+\la P^0)^m B^k_\pm&[g_R,f](1+\la P^0)^{-m-1}\\
 &=(1+\la P^0)^m B^{k\out}_\pm[f](1+\la P^0)^{-m-1}\,,\label{mmlim}
\end{aligned}\\[1ex]
\begin{aligned}
 \mathrm{s}\!-\!\lim_{R\to\infty}B^k_{1\pm}[g_R,f_1]&\ldots B^k_{n\pm}[g_R,f_n](1+\la P^0)^{-n-1}\\
 &=B^{k\out}_{1\pm}[f_1]\ldots B^{k\out}_{n\pm}[f_n](1+\la P^0)^{-n-1}\label{nBn}
\end{aligned}
\end{gather}
(uncorrelated signs).
\end{pr}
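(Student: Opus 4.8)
The plan is to reduce all three statements to Assumption~\ref{koutstrong} (strong convergence of $B^k_\pm[g_R,f](1+\la P^0)^{-1}$) combined with the commutation formula \eqref{PBP}, using a single integration by parts to trade time derivatives of the smeared operator for $s$-derivatives of the test function. The key observation is that, since $B^k_\pm(st+rl)$ depends on $s$ only through the time component of its argument, one has
\be
 \p_0 B^k_\pm[g_R,f]=-B^k_\pm[g_R,\df]\,,
\ee
whence $(-i\la\p_0)^l B^k_\pm[g_R,f]=(i\la)^l B^k_\pm[g_R,f^{(l)}]$ with $f^{(l)}=\p_s^l f$ (so $f^{(1)}=\df$). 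Because $f\in\Sc_\ep$ implies $f^{(l)}\in\Sc_\ep$ with the \emph{same} $\ep>2$, each of these operators still falls under Assumption~\ref{koutstrong} applied to the same $B$; this is what lets me avoid imposing any separate asymptotic hypothesis on $\p_0^l B$. In particular $B^k_\pm[g_R,f]\in\C^\infty_t$, so \eqref{PBP} is available for it.

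To prove \eqref{mdom} and \eqref{mmlim} together I would apply \eqref{PBP} with $B$ replaced by $B^k_\pm[g_R,f]$, giving
\be
 (1+\la P^0)^m B^k_\pm[g_R,f](1+\la P^0)^{-m-1}=\sum_{l=0}^m\tbinom{m}{l}(i\la)^l B^k_\pm[g_R,f^{(l)}](1+\la P^0)^{-l-1}\,.
\ee
Denote the left side by $A^{(m)}_R$. Each summand is $\big(B^k_\pm[g_R,f^{(l)}](1+\la P^0)^{-1}\big)$ times the fixed bounded operator $(1+\la P^0)^{-l}$ (bounded since $P^0\geq0$), so by Assumption~\ref{koutstrong} it converges strongly; hence $A^{(m)}_R\to A^{(m)}:=\sum_l\tbinom{m}{l}(i\la)^l B^{k\,\out}_\pm[f^{(l)}](1+\la P^0)^{-l-1}$, a bounded operator. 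On the other hand, $(1+\la P^0)^{-m}A^{(m)}_R=B^k_\pm[g_R,f](1+\la P^0)^{-m-1}\to B^{k\,\out}_\pm[f](1+\la P^0)^{-m-1}$ strongly (right-multiply Assumption~\ref{koutstrong} by the bounded $(1+\la P^0)^{-m}$), while the same left side tends to $(1+\la P^0)^{-m}A^{(m)}$. Equating limits yields $B^{k\,\out}_\pm[f](1+\la P^0)^{-m-1}=(1+\la P^0)^{-m}A^{(m)}$; boundedness of $A^{(m)}$ gives \eqref{mdom}, and applying $(1+\la P^0)^m$ identifies $A^{(m)}$ with the right-hand side of \eqref{mmlim}.

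For the product formula \eqref{nBn} I would use a dressing device that distributes exactly one inverse power of $(1+\la P^0)$ to each factor, keeping every intermediate operator bounded. Setting $\wh{A}_{jR}:=(1+\la P^0)^{j-1}B^k_{j\pm}[g_R,f_j](1+\la P^0)^{-j}$ and $\wh{A}_j:=(1+\la P^0)^{j-1}B^{k\out}_{j\pm}[f_j](1+\la P^0)^{-j}$ (well defined and bounded by \eqref{mdom} and \eqref{mmlim} with $m=j-1$), the adjacent powers cancel and one checks
\be
 B^k_{1\pm}[g_R,f_1]\cdots B^k_{n\pm}[g_R,f_n](1+\la P^0)^{-n-1}=\big(\wh{A}_{1R}\cdots\wh{A}_{nR}\big)(1+\la P^0)^{-1}\,,
\ee
with the identical telescoping for the limit operators. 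By \eqref{mmlim} each $\wh{A}_{jR}\to\wh{A}_j$ strongly, and strong convergence forces uniform boundedness (Banach--Steinhaus, or directly Theorem~\ref{limit}(ii)); the elementary fact that $S_RT_R\to ST$ strongly whenever $S_R\to S$, $T_R\to T$ strongly with $\sup_R\|S_R\|<\infty$ then propagates through the finite product. Multiplying on the right by the fixed bounded $(1+\la P^0)^{-1}$ and telescoping back gives \eqref{nBn}, the composition $B^{k\out}_{1\pm}[f_1]\cdots B^{k\out}_{n\pm}[f_n](1+\la P^0)^{-n-1}$ being well defined on all of $\Hc$ by iterating \eqref{mdom} down the chain $(1+\la P^0)^{-n-1}\Hc\to(1+\la P^0)^{-n}\Hc\to\cdots\to\Hc$.

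The main obstacle is operator-theoretic rather than analytic: the factor $(1+\la P^0)^m$ is unbounded and does not commute with the fields, and the limit operators $B^{k\,\out}_\pm[f]$ live only on the dense domain $(1+\la P^0)^{-1}\Hc$, so the products cannot be manipulated naively. The entire role of the dressing above, and of transferring $\p_0$ onto $f$ via \eqref{PBP} in the first two parts, is to arrange that every operator actually written is bounded and strongly convergent, with all domain bookkeeping controlled by \eqref{mdom}. The quantitative input — uniform-in-$R$ norm bounds — is already furnished by Theorem~\ref{limit}(ii), so no new estimates are needed here.
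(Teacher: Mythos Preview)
Your proof is correct and follows essentially the same route as the paper's: both rely on \eqref{PBP} (together with the identity $\p_0 B^k_\pm[g_R,f]=-B^k_\pm[g_R,\df]$, which the paper leaves implicit in the ``remarks preceding the proposition'') to see that $(1+\la P^0)^m B^k_\pm[g_R,f](1+\la P^0)^{-m-1}$ is bounded and strongly convergent. The only cosmetic difference is that the paper phrases the passage to \eqref{mdom}--\eqref{mmlim} via closedness of the graph of the selfadjoint $(1+\la P^0)^m$, whereas you identify the limit directly as $(1+\la P^0)^{-m}A^{(m)}$, and the paper simply declares \eqref{nBn} an ``immediate consequence'' where you spell out the telescoping.
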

\begin{proof}
 For any $\psi\in\Hc$ there is
\begin{gather*}
 \mathrm{s}\!-\!\lim_{R\to\infty}B^k_\pm[g_R,f](1+\la P^0)^{-m-1}\psi=B^{k\out}_\pm[g_R,f](1+\la P^0)^{-m-1}\psi\,,\\
 \mathrm{s}\!-\!\lim_{R\to\infty}(1+\la P^0)^mB^k_\pm[g_R,f](1+\la P^0)^{-m-1}\psi=\varphi
\end{gather*}
for some $\varphi\in\Hc$, the second relation by remarks preceding the proposition. But as $(1+\la P^0)^m$ is selfadjoint, its graph is closed and equations \eqref{mdom} and \eqref{mmlim} follow. Eq.\,\eqref{nBn} is an immediate consequence.
\end{proof}

\begin{thm}\label{outcom}
Let $f_i(s,l)=b_i^{(3)}(s,l)$, $b_i\in\Sc_\ep$ ($i=1,2,3$), $\ep>2$ and let the supports of the functions
\be
 \lct\ni l\mapsto\|f_i(.,l)\|_\infty=\sup_{s\in\mR}|f_i(s,l)|\,,\quad i=1,2
\ee
be disjoint. If $B^\out_\pm[f_i](1+\la P^0)^{-1}$ exist as strong limits according to Assumption~\ref{koutstrong}, then
\begin{gather}
 \big[B^\out_\pm[f_1],B^\out_\pm[f_2]\big](1+\la P^0)^{-2}=0\,,\label{comtwo}\\
 \Big[B^\out_\pm[f_1],\big[B^\out_\pm[f_2],B^\out_\pm[f_3]\big]\Big](1+\la P^0)^{-3}=0\,\label{comthree}
\end{gather}
(uncorrelated signs).
\end{thm}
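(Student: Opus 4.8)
The plan is to establish \eqref{comtwo} by a finite-scale approximation and then to deduce \eqref{comthree} from \eqref{comtwo} by a partition-of-unity trick on $\lct$. For \eqref{comtwo} I would approximate $B^\out_\pm[f_i]$ by the operators $B_{iR}=B^3_{i\pm}[g_R,b_i]$ (recall $f_i=b_i^{(3)}$), whose smeared strong limits are the asymptotic fields by Assumption~\ref{koutstrong}. Writing $B_{1R}B_{2R}(1+\la P^0)^{-2}=\big[B_{1R}(1+\la P^0)^{-1}\big]\big[(1+\la P^0)B_{2R}(1+\la P^0)^{-2}\big]$ and using that each bracket is uniformly bounded and strongly convergent (by \eqref{mmlim} with $m=0$ and $m=1$), the product converges strongly to $B^\out_\pm[f_1]B^\out_\pm[f_2](1+\la P^0)^{-2}$, and likewise in the opposite order. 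The difference of the two orders is $[B_{1R},B_{2R}](1+\la P^0)^{-2}$, whose norm is $O(R^{-(\beta-2)})$ by \eqref{disjR} (the $l$-supports of $b_1,b_2$ inherit the disjointness assumed for $f_1,f_2$, since an $s$-decaying triple antiderivative of a function vanishing in $l$ also vanishes there, and $\beta=\min\{\kappa,\ep\}>2$); as $(1+\la P^0)^{-2}$ is bounded this tends to $0$. Hence the two orders share the same strong limit, which is \eqref{comtwo}.

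For \eqref{comthree} the naive attempt fails: expanding $[B^\out_\pm[f_1],[B^\out_\pm[f_2],B^\out_\pm[f_3]]]$ by the Jacobi identity as $[[B^\out_\pm[f_1],B^\out_\pm[f_2]],B^\out_\pm[f_3]]+[B^\out_\pm[f_2],[B^\out_\pm[f_1],B^\out_\pm[f_3]]]$ only removes the first term via \eqref{comtwo}, leaving $[B^\out_\pm[f_2],[B^\out_\pm[f_1],B^\out_\pm[f_3]]]$, which need not vanish because $f_3$ is disjoint from neither $f_1$ nor $f_2$. The decisive idea is to generate the missing disjointness. Let $K_1,K_2\subset\lct$ be the disjoint compact supports of $l\mapsto\|f_i(.,l)\|_\infty$; then $\lct\setminus K_2$ and $\lct\setminus K_1$ form an open cover of $\lct$, and a subordinate smooth partition of unity $1=\psi_1+\psi_2$ (with $\psi_i$ functions of $l$ alone) yields a splitting $f_3=\psi_1f_3+\psi_2f_3\equiv f_3^{(1)}+f_3^{(2)}$. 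Since $\psi_i$ is $s$-independent, $f_3^{(i)}=(\psi_ib_3)^{(3)}$ with $\psi_ib_3\in\Sc_\ep$, so Assumption~\ref{koutstrong} applies to each piece; by construction the $l$-support of $f_3^{(1)}$ is disjoint from $K_2$ and that of $f_3^{(2)}$ is disjoint from $K_1$.

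With this decomposition the double commutator equals $[B^\out_\pm[f_1],[B^\out_\pm[f_2],B^\out_\pm[f_3^{(1)}]]]+[B^\out_\pm[f_1],[B^\out_\pm[f_2],B^\out_\pm[f_3^{(2)}]]]$. In the first summand the inner commutator vanishes by \eqref{comtwo} (disjoint $f_2,f_3^{(1)}$). In the second I expand the outer commutator by Jacobi as above; now \emph{both} resulting terms die, because $[B^\out_\pm[f_1],B^\out_\pm[f_2]]$ and $[B^\out_\pm[f_1],B^\out_\pm[f_3^{(2)}]]$ each vanish by \eqref{comtwo} (disjoint $f_1,f_2$ and disjoint $f_1,f_3^{(2)}$ respectively). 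Every such step is read as an identity between operators multiplied on the right by $(1+\la P^0)^{-3}$: whenever a two-fold commutator is set to zero one first redistributes a spare damping so that the companion factor maps $\Hc$ into $(1+\la P^0)^{-2}\Hc$ (using the domain inclusion \eqref{mdom}), precisely the subspace on which \eqref{comtwo} annihilates it.

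The main obstacle is conceptual rather than computational: under the stated hypothesis only $f_1$ and $f_2$ are disjoint, and the Jacobi identity combined with the microcausality relation \eqref{comtwo} is genuinely insufficient, nor can any purely kinematical estimate rescue it, since a double commutator need not be small merely because $B_1$ and $B_2$ are spacelike separated. The partition-of-unity splitting of $f_3$ is exactly what converts the single available disjointness into the two disjointness relations that feed \eqref{comtwo}. The only remaining difficulty is the routine energy bookkeeping needed to obtain \eqref{comtwo} with the power $(1+\la P^0)^{-2}$ rather than the $(1+\la P^0)^{-3}$ that \eqref{nBn} supplies directly, which is handled by the interspersing of dampings furnished by \eqref{mmlim} (equivalently \eqref{PBP}).
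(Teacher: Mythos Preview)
Your proof is correct and follows essentially the same route as the paper: \eqref{comtwo} from the strong convergence of products together with the disjoint-support commutator estimate \eqref{disjR}, and \eqref{comthree} by splitting $f_3=f_3^{(1)}+f_3^{(2)}$ so that each piece is $l$-disjoint from one of $f_1,f_2$, then applying \eqref{comtwo} directly to one summand and via Jacobi to the other. Your treatment is in fact slightly more explicit than the paper's, which simply cites \eqref{nBn}; your interspersing argument via \eqref{mmlim} makes transparent why the damping $(1+\la P^0)^{-2}$ (rather than the $(1+\la P^0)^{-3}$ that \eqref{nBn} literally provides for $n=2$) suffices in \eqref{comtwo}.
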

\begin{proof}
Eq.\,\eqref{comtwo} is an immediate consequence of relation \eqref{nBn} and Theorem~\ref{limit} (iii). For Eq.\,\eqref{comthree} one needs in addition to decompose $f_3=f_{31}+f_{32}$, with $\supp f_i\cap\supp f_{3i}=\varnothing$ ($i=1,2$). For contribution of $f_{32}$ the equality follows directly, and for contribution of $f_{31}$ -- by Jacobi's identity.
\end{proof}

\section{Haag-Ruelle case}\label{hrtheory}

In this section we supplement our basic assumptions on the decay of commutators, Definition~\ref{com}, and on relativistic positivity of energy, by the existence of vacuum and one-particle massless states. In this framework we present the construction of scattering states which closely parallels the one used in the Haag-Ruelle scattering theory in case of massive particles (cf.\ the exposition in \cite{he14}). Neither of the Assumptions 1, 2 introduced in the preceeding section is needed in this construction. To prove the existence of the asymptotic states we suppose instead that an energy-momentum condition of the type similar to the one used in \cite{he14} is fulfilled. Similarly as in the massive case, the full Fock structure of states needs a clustering property introduced in this reference.

For completeness, we also give a short account of the case of local fields. Local fields satisfy clustering property mentioned above. Moreover, the Cook me\-thod used for the construction of asymptotic states (which needs the spectrum condition) may be replaced by an independent proof of the existence of asymptotic fields. The idea of this proof is due to Buchholz \cite{bu77}, but our construction is significantly simpler, compared to exposition in \cite{bu77}, due to the bound on the norm of the operators $B^k_\pm[g_R,f]G_\pm(P^0)$ given in Theorem~\ref{limit}~(ii). The possibility of similar simplification was already noticed by Buchholz in \cite{bu90}.

\paragraph{}
\label{hrpsi}
From now on, we consider only operators $B$ belonging to some \mbox{${}^*$-sub}\-al\-ge\-bra of $\B(\Hc)$ which we denote by $\F$ and interpret as an algebra of quantum fields. By assumption, operators from $\F$ are infinitely differentiable and their commutators are (pairwise) of $\kappa^\infty$ type with $\kappa>2$. Moreover, we assume throughout that $f=b^{(3)}$ for some $b\in\Sdeg{1}_\ep$, $\ep>2$. We recall that function~$g$ satisfies \eqref{gsupint}, which implies $\ti{g}(0)=\frac{1}{2\pi}$. Also, we put $G_\pm(E) = G(E)=(1+\lambda E)^{-1}$. Since $b\in\Sdeg{1}_\ep$, the function $f$ is homogeneous of degree~$-2$:
\begin{equation}\label{eq:hom}
  f(\mu s,\mu l)=\mu^{-2}f(s,l)\,,\qquad \ti{f}(\mu^{-1}\w,\mu l)=\mu^{-1} \ti{f}(\w,l)
\end{equation}
for all $\mu>0$. We recall notation introduced in Eq.\,\eqref{ppm}, which will also be used for energy-momentum \emph{operators}:
\begin{equation}
 P^\pm=P^0\pm |\vec{P}|\,,\qquad \hat{P}^\pm=t\pm\hat{P}\,.
\end{equation}

Let us define the following operators:
\begin{align}
 &B^\pm_R[f](1+\lambda P^0)^{-1}\phantom{'}= B^3_\pm[g_R,b](1+\lambda P^0)^{-1},\label{eq:hr_B_def}\\
 &B'^\pm_R[f](1+\lambda P^0)^{-1}=\pm i \int \ti{g}(Rp^\mp)\ti{f}(\pm 1,|p^0|\hp^\pm)\wch{B}(p)(1+\lambda P^0)^{-1}\theta(\pm p^0)dp.\label{eq:hr_B'_def}
\end{align}
Taking into account that $\ti{f}(p^0,\hp^\pm)/p^0=\sgn p^0\, \ti{f}(\sgn p^0,|p^0|\hp^\pm)$ (by Eq.\,\eqref{eq:hom}) and using the result of Theorem~\ref{asmomg}~(ii), equation \eqref{BBrpg}, we obtain
\begin{equation}\label{eq:diff_B_and_B'}
 \|(B^\pm_R[f] - B'^\pm_R[f])(1+\lambda E)^{-1}\|=O(R^{-\gamma})
\end{equation}
for some $\gamma>0$. This estimate will allow us to use operators $B^\pm_R[f]$ and $B'^\pm_R[f]$ interchangeably. In particular, all bounds listed in this and the following paragraph apply to both of these operators.

As a result of Theorem~\ref{limit}~(ii) the norm
\begin{equation}
 \|B^\pm_{R}[f](1+\la P^0)^{-1}\|=\|B^3_\pm[g_R,b](1+\la P^0)^{-1}\|
\end{equation}
is bounded by a constant independent of $R$. Using equation \eqref{PBP} we show that
\begin{equation}
 \|(1+\la P^0)^{m} B^\pm_{R}[f] (1+\la P^0)^{-1-m}\| \leq \con_{m}
\end{equation}
from which it follows that
\begin{equation}\label{eq:norm_G_many}
 \|(1+\la P^0)^{m} B^\pm_{1,R}[f_1]\ldots B^\pm_{n,R}[f_n] (1+\la P^0)^{-n-m}\| \leq \con_{m,n}
\end{equation}
for $m,n\in\mN$.

\paragraph{}
\label{hrcom}
Let the supports of $\|f_i(\cdot,l)\|_\infty=\sup_{s\in\mR}|f_i(s,l)|$ ($i=1,2$) be disjoint. Then:
\begin{gather}
 \big\|(1+\la P^0)^{m}\big[B^\pm_{1,R}[f_1],B^\pm_{2,R}[f_2]\big](1+\la P^0)^{-2-m}\big\|=O(R^{-\gamma}),\label{eq:bound_com_2}\\[1ex]
 \big\|(1+\la P^0)^{m}\big[\tfrac{d}{dR} B^\pm_{1,R}[f_1],B^\pm_{2,R}[f_2] \big] (1+\la P^0)^{-2-m}\big\|=O(R^{-(1+\gamma)}),\label{eq:bound_com_2d}\\[1ex]
 \big\|(1+\la P^0)^{m}\big[B^\pm_{1,R}[f_1],\big[B^\pm_{2,R}[f_2],B^\pm_{3,R}[f_3]\big]\big](1+\la P^0)^{-3-m}\big\|=O(R^{-\gamma})\,\,\label{eq:bound_com_3}
\end{gather}
(uncorrelated signs) for some $\gamma>0$. For $m=0$,  the first bound follows directly from Theorem~\ref{limit}, the second bound is the consequence of the first one since
\begin{equation}\label{eq:R_derivative_B}
 \frac{d}{dR} B^\pm_{R}[f] = \frac{1}{R} B^3_\pm[h_R,b]\,,
\end{equation}
where $h_R(r)=\frac{1}{R}h(r/R)$, $h(r)=-g(r)- r g'(r)$. The third bound may be obtained by the method used in the proof of Theorem~\ref{outcom}. To generalize the above result for any $m\in\mN$ we use the the identity \eqref{PBP}.

\paragraph{}
\label{hrspec}
Now one assumes the existence of the vacuum - the unique up to a phase, unit vector $\W\in\Hc$ which is invariant under the action of translation operators $U(x)$. Let $E(A)$, $A\subseteq\hM$,  be the spectral family of the four-momentum operators. Then for $\mu\geq0$ we denote
\begin{equation}\label{eq:def_E_mu}
  E_\mu=E\big(\{p\mid 0\leq p^2\leq\mu^2\,,\ p^0\geq0 \}\big)\,.
\end{equation}
In particular $E_0 \Hc\subset \Hc$ is the subspace of massless one-particle states.

We say that the operator $B\in\F$ fulfils the (energy-momentum) spectral condition if for some $\ep>0$
\begin{equation}\label{eq:spectral_condition}
 \int_0^\ep\|(E_\mu-E_0)B\W\|\frac{d\mu}{\mu}<\infty.
\end{equation}
For such operators the following integrability condition holds
\begin{equation}\label{eq:dB_dR_bound}
 \int_0^\infty\Big\|(1+\la P^0)^{n}\frac{d B'^+_R[f]\W}{d R}\Big\|\,d R <\infty\,,
\end{equation}
which will be used to show the existence of asymptotic states. To prove this, let us note that, since $U(x)\Omega=\Omega$,
\begin{equation}\label{eq:B'_Omega}
 B'^+_R[f]\Omega= i (2\pi)^2 \ti{g}(R P^-)\ti{f}(1,P^0\hat{P}^+)B\Omega\,.
\end{equation}
Thus, because $\ti{f}(1,\omega l)=\ti{f}(\omega,l)/\omega=(-i)^3\omega^2 \ti{b}(\omega,l)$ for $\omega>0$,
\[
\frac{d B'^+_R[f]\W}{d R} =-(2\pi)^2P^-\,\ti{g}'(R P^-)(P^0)^2 \ti{b}(P^0,\hat{P}^+) B\W\equiv TR^{-1}\phi(RP^2)B\W\,,
\]
where $\ti{g}'(\w)=d\ti{g}(\w)/d\w$, $\phi(u)=u/(1+\la u)^2$ and
\[
 T=-(2\pi)^2 P^0(P^+)^{-1} \ti{g}'(R P^-)P^0\ti{b}(P^0,\hat{P}^+)(1+\la RP^2)^2\,,
\]
and we have used relation $P^2=P^-P^+$. We observe that $\|(1+\la P^0)^nT\|$ is bounded by a constant independent of $R$: this follows from the estimates
\begin{align}
 &\|P^0(P^+)^{-1}\|\leq1\,,\ \|P^+(P^0)^{-1}\|\leq2\,,\ \|(1+\la P^0)^{n+2}P^0\ti{b}(P^0,\hat{P}^+)\|\leq\con\,,\\
 &\|\ti{g}'(R P^-)(1+\la P^0)^{-2}(1+\la RP^2)^2\|\\
 &\hspace{5em}\leq\|\ti{g}'(R P^-)(1+R P^-)^2 \| \|(1+\la P^0)^{-2}(1+\la P^+)^2 \|\leq\con\,.
 \end{align}
Therefore,
\begin{multline}
 \left\|(1+\la P^0)^{n}\frac{d B'^+_R[f]\W}{d R}\right\|\leq \con \frac{1}{R} \|(\id-E_0)\phi(R P^2)B\W\|
 \\
 \leq\con \frac{1}{R} \|(E_{R^{-1/4}}-E_0) B\W\| + \con \frac{1}{R^{3/2}} \|(\id-E_{R^{-1/4}}) B\W\|\,,
\end{multline}
where we used the following facts: \mbox{$E_0\phi(RP^2)=0$}, function $\phi$ is bounded and $\sup_{u>R^{1/2}}|\phi(u)|<\con\, R^{-1/2}$. Both terms on the rhs are integrable, the first one by the spectral condition.

The spectral condition \eqref{eq:spectral_condition} is fulfilled in particular in the vacuum representation of free massless theory by the Weyl operators -- we show this in Appendix~\ref{app:spectral_condition_weyl}.\footnote{Note that in the vacuum representation of the massive free theory, the spectral condition formulated in that context \cite{he14}, \cite{dy05} is trivially fulfilled by the Weyl operators. In massless case this fact is not so evident.}

\paragraph{} 
\label{hrasst}
Let the spectral condition \eqref{eq:spectral_condition} be satisfied for all $B_j$ ($j=1,\ldots,n$). Then for $f_j$ such that $\|f_j(\cdot,l)\|_\infty$ have disjoint supports there exist limits
\begin{equation}\label{eq:lim_B_many}
 \lim_{R\to\infty}B^+_{1,R}[f_1]\ldots B^+_{n,R}[f_n]\W =\lim_{R\to\infty}B'^+_{1,R}[f_1]\ldots B'^+_{n,R}[f_n]\W\,.
\end{equation}
These limits depend only on the one-operator asymptotic vectors
\begin{multline}\label{eq:lim_B_one}
 \lim_{R\to\infty}B^+_{R}[f]\W = \lim_{R\to\infty}B'^+_{R}[f]\W
 \\
 =(2\pi)^2 i \lim_{R\to\infty} \ti{g}(RP^-) \ti{f}(1,P^0\hat{P}^+)B\W\,
 =
 2\pi i E_0  \ti{f}(1,P) B\W\,,
\end{multline}
which are independent of the choice of timelike vector $t$. The second equality follows from \eqref{eq:B'_Omega} and the third -- from the following fact:
if $\dsp\lim_{n\rightarrow\infty} f_n(x)=f(x)$ pointwise and the sequence $\|f_n\|_\infty$ is bounded, then $f_n(A)\rightarrow f(A)$ strongly for each selfadjoint operator $A$ (c.f. Theorem VIII.5 (d) in \cite{rs72}). The structure is nontrivial if, and only if, $E_0\neq0$ and there exist operators $B$ which interpolate between $\W$ and $E_0\Hc$.

To prove the existence of the limits \eqref{eq:lim_B_many} we use Cook method, i.e.\ we show that
 $\big\|\frac{d}{dR}\big(B'^+_{1,R}[f_1]\ldots B'^+_{n,R}[f_n]\W\big)\big\|$ is integrable. After commuting all operators with derivative $\frac{d}{dR}B'^+_{j,R}[f_j]$ to the right we obtain the terms of the form
\begin{multline}
 \Big\| B'^+_{1,R}[f_1]\ldots \breve{k} \ldots B'^+_{n,R}[f_n] \frac{d}{dR}B'^+_{k,R}[f_k] \W\Big\|\\
 \leq  \big\| B'^+_{1,R}[f_1]\ldots \breve{k} \ldots B'^+_{n,R}[f_n] (1+\la P^0)^{-n+1}\big\|\, \Big\| (1+\la P^0)^{n-1} \frac{d}{dR}B'^+_{k,R}[f_k] \W\Big\|\,,
\end{multline}
which are integrable by the result of last paragraph and Eq.\,\eqref{eq:norm_G_many}. The terms containing commutators are bounded by $\con\, R^{-1-\delta}$ due to the estimates  \eqref{eq:norm_G_many} and \eqref{eq:bound_com_2d}. The statement on the limits with operators without primes is shown inductively with the use of the estimate \eqref{eq:diff_B_and_B'}.

\paragraph{}
\label{hrasst_loc}

In this paragraph, following the idea due to Buchholz \cite{bu75,bu77,bu90}, we show the existence of the limits of operators $B^\pm_R[f](1+\la P^0)^{-1}$ and \mbox{$B'^\pm_R[f](1+\la P^0)^{-1}$} as $R\rightarrow\infty$ in the framework of local quantum physics \cite{ha92}. We assume that the algebra of field operators $\F$ is the global algebra of the net $\mathcal{O}\mapsto\F(\mathcal{O})$ of local algebras of fields $\F(\mathcal{O})$ localized in bounded regions of spacetime~$\mathcal{O}$. The net $\mathcal{O}\mapsto\F(\mathcal{O})$  acts irreducibly on the Hilbert space and fulfils the following axioms: (1) $B(x) \in \F(\mathcal{O}+x)$ for $B\in \F(\mathcal{O})$ (covariance), (2) $[B_1,B_2]=0$ for $B_j\in\F(\mathcal{O}_j)$ if the regions $\mathcal{O}_1$, $\mathcal{O}_2$ are spatially separated (local commutativity), (3) $\F(\mathcal{O}_1)\subset\F(\mathcal{O}_2)$ if $\mathcal{O}_1\subset\mathcal{O}_2$ (isotony). The spectral assumption formulated in paragraph~\ref{hrspec} is not needed in this paragraph.

Chose any operator $B$ localized in some relatively compact set $\mathcal{O}_B$ and function $d\in \Sc_\ep$ such that $s\mapsto\|d(s,\cdot)\|_\infty=\sup_{l\in \lct}|d(s,l)|$ has compact support. Let $a\in M$ be any point in Minkowski spacetime such that the set $\mathcal{O}_{B,d}=\mathcal{O}_B+\{s t\mid \|d(s,\cdot)\|_\infty\neq 0\}$ is contained in the past-directed lightcone with the vertex at $a$. Then, as noted by Buchholz \cite{bu75,bu77}, for any $A\in\F(a+V_+)$ and sufficiently large $r$ the localization regions of operators $A$ and $B[r,d]$ become spatially separated. Thus, for large $R$ we have $[A,B[g_R,d]] = 0$.

Using the idea described above and the bound on the norm of the operators $B^k_\pm[g_R,d](1+\lambda P^0)^{-1}$ given in Theorem~\ref{limit} (ii) we obtain asymptotic creation/annihilation operators.

\begin{thm}\label{thm:lim_op_local}
Let $B\in\F(\mathcal{O}_B)$ be operator localized in bounded region $\mathcal{O}_B$ and $f=d^{(4)}$ for some function $d\in\Sdeg{2}_\ep$, such that $\dsp s\mapsto\|d(s,\cdot)\|_\infty=\sup_{l\in \lct}|d(s,l)|$ has compact support. Then the limit $\dsp\mathrm{s}\!-\!\lim_{R\to\infty}B^\pm_R[f] (1+\lambda P^0)^{-1}$ exists.
\end{thm}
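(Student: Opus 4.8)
The plan is to exploit the uniform norm bound of Theorem~\ref{limit}~(ii), which makes the family $B^\pm_R[f](1+\la P^0)^{-1}$ equibounded, so that it suffices to establish strong convergence on a dense set. Under the standing assumptions of this section (vacuum $\W$, energy positivity, locality) the vacuum is cyclic for the cone algebras $\F(a+V_+)$ by the Reeh--Schlieder property, so I would test convergence on vectors $A\W$ with $A\in\F(a+V_+)$, choosing $a$ as in paragraph~\ref{hrasst_loc} so that the exact commutation $[A,B[g_R,d]]=0$ holds for all large $R$. The energy-damping factor is carried along throughout: since $P^0\W=0$ and the commutators $[P^0,A]=-i\p_0A$ are bounded for the smooth local $A$, the resolvent may be shuffled past local operators at the cost of lower-order terms of the same type, and a graph-norm approximation of $\mathrm{dom}(1+\la P^0)$ by local vectors transfers the convergence of $B^+_R[f]A\W$ back to strong convergence of $B^\pm_R[f](1+\la P^0)^{-1}$; I will not belabour these routine domain manipulations.

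The first key observation is that on the vacuum the nonlocal positive-frequency component reduces to a local field. Indeed, by positivity of the energy spectrum the annihilation component lowers energy and the vacuum has least energy, so $B^4_-[g_R,d]\W=0$, while $B^4_++B^4_-=\p_0^4B$ is local; hence $B^+_R[f]\W=(\p_0^4B)[g_R,d]\,\W$. Moreover $B^+_R[f]\W$ converges strongly \emph{without} any spectral condition: by \eqref{eq:diff_B_and_B'} it may be replaced by $B'^+_R[f]\W=i(2\pi)^2\ti g(RP^-)\ti f(1,P^0\hat{P}^+)B\W$, and since $g$ is smooth with compact support, $\ti g$ is Schwartz, so $\ti g(RP^-)\to\ti g(0)E_0$ strongly by dominated convergence in the spectral measure of $P^-$. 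This reproduces the one-operator limit $2\pi iE_0\,\ti f(1,P)B\W$ of \eqref{eq:lim_B_one}. It is precisely this replacement of Cook's integrability requirement by plain convergence that removes the need for the spectral condition \eqref{eq:spectral_condition} in the local case.

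Next I would use the exact commutation to reduce a general vector to the vacuum case plus a remainder. Writing $B^4_+[g_R,d]=\tfrac12(\p_0^4B)[g_R,d]+\tfrac12(B^4_+-B^4_-)[g_R,d]$ and using that the local summand commutes with $A$ for large $R$, a short computation combined with $B^4_-[g_R,d]\W=0$ gives
\[
 B^+_R[f]A\W=A\,B^+_R[f]\W-B^4_-[g_R,d]A\W,\qquad B^4_-[g_R,d]A\W=[B^4_-[g_R,d],A]\W .
\]
The first term converges by the previous paragraph, so everything hinges on the convergence of the ``annihilation remainder'' $[B^4_-[g_R,d],A]\W$.

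This remainder is the main obstacle. The naive route---bounding $\|[B^4_-[g_R,d],A]\|$ by integrating the $\kappa$-type estimate of Theorem~\ref{GB}~(ii) over the cylinder---fails, because the points $st+rl$ with $s>0$ are timelike to the localization region of $A$, where $D_\kappa\equiv1$ gives no decay. Here the norm bound of Theorem~\ref{limit}~(ii) supplies the decisive simplification over Buchholz's original estimates: it already yields $\|B^4_-[g_R,d]A\W\|=O(1)$. To upgrade boundedness to convergence I would apply Cook's method to the remainder itself, using $\tfrac{d}{dR}B^4_-[g_R,d]A\W=\tfrac1R[B^4_-[h_R,d],A]\W$ (with $\int h=0$), and control the integrand by combining the exact-commutation geometry---which confines the non-vanishing contributions to a $\tau$-region bounded independently of $R$---with the uniform bound of Theorem~\ref{limit}~(ii) and the disjointness/clustering estimate of Theorem~\ref{limit}~(iii). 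Establishing this integrability is the delicate step. Granting it, $B^+_R[f]A\W$ converges on the dense family $\{A\W\}$, equiboundedness then yields the strong limit of $B^\pm_R[f](1+\la P^0)^{-1}$, and the interchangeability \eqref{eq:diff_B_and_B'} gives the same conclusion for the primed operators.
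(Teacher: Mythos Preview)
Your decomposition $B^+_R[f]A\W=A\,B^+_R[f]\W-B^4_-[g_R,d]A\W$ is correct and the vacuum limit is fine, but the treatment of the ``annihilation remainder'' is a genuine gap, not a routine technicality. The tools you invoke do not control it: Theorem~\ref{limit}~(iii) concerns commutators of \emph{two} cone-smeared fields with \emph{disjoint angular} supports and says nothing about $[B^4_-[g_R,d],A]$ for a fixed local $A$; the ``exact-commutation geometry'' you appeal to holds for the local $B[g_R,d]$ but not for the nonlocal $B^4_-[g_R,d]$, so there is no $\tau$-region confinement; and the uniform bound of Theorem~\ref{limit}~(ii) only gives $\|[B^4_-[h_R,d],A]\W\|=O(1)$, hence $\tfrac1R\|[B^4_-[h_R,d],A]\W\|$ is not integrable. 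You are essentially asking for decay of a commutator between a positive/negative-frequency part and a strictly local operator, which is exactly the hard nonlocality that the argument must avoid, not assume.

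The paper's proof sidesteps this altogether by reversing the order of operations. It first treats the \emph{local} smeared operator (schematically $B^3[g_R,d]$) and uses the exact commutation $[A,B^3[g_R,d]]=0$ for large $R$ with no remainder; convergence on $A\W$ then follows from the vacuum limit alone. The resolvent is handled by the adjoint trick $\|(1+\la P^0)^{-1}B^3[g_R,d]\|=\|(B^*)^3[\overline{g_R},\overline{d}](1+\la P^0)^{-1}\|$ together with the commutator identity to swap sides---cleaner than the graph-norm manoeuvre you sketch. Only \emph{after} the strong limit of the local object is established does one recover the $\pm$-component, via the representation
\[
 B^4_+[g_R,d](1+\la P^0)^{-1}=\int\phi(s)\,C^3(ts)[g_R,d](1+\la P^0)^{-1}\,ds,\qquad C=B+B^3,\quad \ti\phi(\w)=\frac{\theta(\w)(-i\w)}{1+(-i\w)^3}\,,
\]
where each $C^3(ts)$ is local and $\phi$ is absolutely integrable; dominated convergence then finishes. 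The point is that the nonlocal frequency splitting is realized as a time-smearing with an $L^1$ kernel of already-convergent local objects, so no ``annihilation remainder'' ever appears.
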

\begin{proof}
It follows from  Theorem~\ref{asmomg}~(ii) (equation \eqref{BBrpg}) that
\begin{equation}
 B^3[g_R,d](1+\lambda P^0)^{-1} = \int \ti{g}(Rp^\mp)\ti{d}(p^0,\hp^\pm)\wch{B^2_\pm}(p)(1+\lambda P^0)^{-1} dp + O_{\|.\|}(R^{-\gamma_3})\,.
\end{equation}
Using the fact that $U(x)\Omega=\Omega$ and then Theorem VIII.5 (d) in \cite{rs72} we find that
\begin{equation}
 \lim_{R\to\infty}B^3[g_R,d]\W = (2\pi)^2 \lim_{R\to\infty} \ti{g}(RP^-)\ti{d}(P^0,\hat{P}^+)B^2\W\, =
 2\pi E_0 \ti{d}(P^0,\hat{P}^+) B^2\W\,.
\end{equation}
Hence for any $A\in\F(a+V_+)$
\begin{equation}\label{eq:local_lim_B_R_A}
 \lim_{R\to\infty}B^3[g_R,d]A\W=  2\pi A E_0 \ti{d}(P^0,\hat{P}^+) B^2\W\,.
\end{equation}
Under our assumptions the set
\begin{equation}
 \textrm{span}\{A \W\in\Hc, A\in \F(a+V_+) \}
\end{equation}
is a dense subspace of the Hilbert space \cite{bu75}. It follows from Theorem~\ref{limit}~(ii) that $\|(1+\lambda P^0)^{-1} B^3[g_R,d]\|=\|(B^*)^3[\overline{g_R},\overline{d}](1+\lambda P^0)^{-1}\|<\infty$. This, together with \eqref{eq:local_lim_B_R_A}, implies the existence of the limit $\dsp\mathrm{s}\!-\!\lim_{R\to\infty}(1+\lambda P^0)^{-1} B^3[g_R,d]$. The existence of the limit $\dsp\mathrm{s}\!-\!\lim_{R\to\infty} B^3[g_R,d](1+\lambda P^0)^{-1}$ follows from the identity
\begin{multline}
 B^3[g_R,d](1+\lambda P^0)^{-1}
  \\ =(1+\lambda P^0)^{-1} B^3[g_R,d]+ i\lambda(1+\lambda P^0)^{-1}B^4[g_R,d](1+\lambda P^0)^{-1}\,.
\end{multline}

Since under present assumptions $f=d^{(4)}$, we have
\begin{multline}\label{eq:thm_loc_B_F}
 B^+_R[f](1+\lambda P^0)^{-1}=B^4_+[g_R,d](1+\lambda P^0)^{-1}
 \\
 =\int \phi(s)\, C^3(ts)[g_R,d](1+\lambda P^0)^{-1} ds\,,
\end{multline}
where $C=B+B^3$ and
\begin{equation}
 \ti{\phi}(\w)=\frac{\theta(\w)(-i\w)}{1+(-i\w)^3}.
\end{equation}
By Lemma~14 in Appendix A of \cite{he14'} we have $|\phi(s)|\leq \con\,(\la+|s|)^{-2}$. Thus $\phi$ is absolutely integrable. For any $s$ operator $C(ts)$ is local so, as shown above, the limit $\mathrm{s}\!-\!\lim_{R\to\infty} C^3(ts)[g_R,d](1+\lambda P^0)^{-1}$ exists and the integrand in the last line of \eqref{eq:thm_loc_B_F} converges pointwise. As $\phi$ is integrable and
\begin{equation}
 \|C^3(ts)[g_R,d](1+\lambda P^0)^{-1}\|=\|C^3[g_R,d](1+\lambda P^0)^{-1}\|\leq \con\,,
\end{equation}
the limit $\dsp\mathrm{s}\!-\!\lim_{R\to\infty} B^+_R[f](1+\lambda P^0)^{-1}$ exists by Lebesgue's theorem. Similar reasoning may be used to show the existence of $\dsp\mathrm{s}\!-\!\lim_{R\to\infty} B^-_R[f](1+\lambda P^0)^{-1}$.
\end{proof}

The above result together with Proposition~\ref{outBB} imply the existence of scattering states \eqref{eq:lim_B_many} without using the spectral condition under the following assumptions: (1) operators $B_j$ are local, (2) $f_j=d_j^{(4)}$ and (3) $\|d_j(s,\cdot)\|_\infty$ have compact supports.

\paragraph{} 
\label{hreta}
To obtain the Fock structure of asymptotic states we introduce operators $B^\pm_{R,\eta}[f]=B^3_\pm[g^\eta_R,b]$, with $g^\eta_R$ as defined by Eq.\,\eqref{geta}. Note that $g_R=g^\eta_R$ and $B_R[f]=B_{R,\eta}[f]$ for $\eta=1$. For $\eta\in(0,1\>$ there is
\begin{equation}\label{eq:lim_B_many_eta}
\lim_{R\to\infty}B^{+}_{1,R,\eta}[f_1]\ldots B^{+}_{n,R,\eta}[f_n]\W=\lim_{R\to\infty}B^+_{1,R}[f_1]\ldots B^+_{n,R}[f_n]\W.
\end{equation}
First, we show this for $n=1$. Using Theorem~\ref{asmomg}~(ii) (equation \eqref{BBrpgg}) and the invariance of the vacuum under the action of $U(x)$ we get
\begin{multline}
 B^+_{R,\eta}[f]\W = B^3_+[g^\eta_R,b] \W
 \\
 =(2\pi)^2 i \left(\ti{g^\eta_R}(P^-) \ti{f}(1,P^0\hat{P}^+) + \ti{g^\eta_R}(P^+)\ti{f}(1,P^0\hat{P}^-)\right)B\W
 + O_{\|.\|}(R^{-\beta})
\end{multline}
for some $\beta>0$. The function $\ti{g^\eta_R}(u)=\exp(i (R-w) u)\ti{g}(w u)$, \mbox{$w=\la (R/\la)^\eta$}, is bounded and converges pointwise as $R\rightarrow\infty$ to $\frac{1}{2\pi}$ for $\w=0$ and $0$ for $\w\neq 0$. Thus, using Theorem VIII.5 (d) in \cite{rs72} we get
\begin{equation}\label{eq:lim_B_one_eta}
 \lim_{R\to\infty}B^+_{R,\eta}[f]\W =2\pi i E_0  \ti{f}(1,P) B\W = \lim_{R\to\infty}B^+_R[f]\W \,.
\end{equation}
Next, we note that the bounds \eqref{eq:norm_G_many}, \eqref{eq:bound_com_2} and \eqref{eq:bound_com_3} remain valid if we replace any of the operators $B^\pm_{j,R}$ with $B^\pm_{j,R,\eta}$. Using these bounds one may easily adapt the proof of Lemma 2.4 of \cite{dy05} to show \eqref{eq:lim_B_many_eta}.

\paragraph{} 
\label{hrasum}
For the derivation of the Fock structure we need an additional assumption on clustering property of commutators. We denote by $E_\W^\bot$ the projection onto the subspace orthogonal to the vector $\W$ and introduce the function $K$:
\begin{multline}\label{4psi}
 K(x_1-x_2,x_3-x_4,\tfrac{1}{2}(x_1+x_2-x_3-x_4)) \\
 =(\W,B_{12}(x_1-x_2)E_\W^\bot  U\big(-\tfrac{1}{2}(x_1+x_2-x_3-x_4)\big)B_{34}(x_3-x_4)\W)\\
 =(\W,[B_1(x_1),B_2(x_2)] E_\W^\bot[B_3(x_3),B_4(x_4)]\W)\,,
\end{multline}
where $B_{ij}(z)=[B_i(z/2),B_j(-z/2)]$.
\begin{ass}\label{cluster}
Let $B_i\in\F$, $i=1,\ldots,4$, and $N$ be any positive integer. Then for large enough, positive $d$, and
\begin{equation}
 |y_1|\leq d\,,\quad |y_2|\leq d\,,\quad |\vec{y}|\geq|y^0|+c_1d\,,
\end{equation}
the following estimate holds
\begin{equation}\label{clusterest}
 |K(y_1,y_2,y)|\leq c_2\frac{d^M}{(|\vec{y}|-|y^0|)^\nu}+c_3d^{-N}\,,
\end{equation}
and the positive constants $c_i$, $M$ and $\nu$ do not depend on $d$.

The assumption is covariant: if it holds in any particular reference system, it is valid in all other, with some other constants $c_i$.
\end{ass}

\begin{pr}[\cite{he14}]\label{clusterchi}
Assumption~\ref{cluster} is closed with respect to smearing of fields $B$ with Schwartz functions; more precisely, it remains valid, with some other constants $c_i$,  under replacement $B_i\rightarrow B_i(\chi_i)$.
\end{pr}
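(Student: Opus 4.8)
The plan is to express the four-point function associated with the smeared operators $B_i(\chi_i)$ as an integral of the original kernel $K$ against the product of the four Schwartz functions, and then to split the integration into a bulk part on which the estimate \eqref{clusterest} applies directly and a tail part which is suppressed by the rapid decay of the $\chi_i$. Since $B_i(\chi_i)(x)=\int B_i(x+z)\,\chi_i(z)\,dz$ and $U(a)\W=\W$ (so that $E_\W^\bot$ commutes with translations), the kernel $K_\chi$ built from the operators $B_i(\chi_i)$ is
\begin{equation*}
 K_\chi(y_1,y_2,y)=\int K\big(y_1+\zeta_1,\,y_2+\zeta_2,\,y+\zeta\big)\,\prod_{i=1}^4\chi_i(z_i)\,dz_i\,,
\end{equation*}
with $\zeta_1=z_1-z_2$, $\zeta_2=z_3-z_4$ and $\zeta=\tfrac12(z_1+z_2-z_3-z_4)$; all three shifts are thus controlled by $\max_i|z_i|$.

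Next I would fix a large constant $C$ and split the $z$-integration into region (a), where $\max_i|z_i|\leq d/C$, and region (b), the remainder. In region (a) one has $|\zeta_1|,|\zeta_2|\leq 2d/C$ and $|\vec{\zeta}|+|\zeta^0|\leq 4d/C$, so taking $d'=2d$ the shifted arguments obey $|y_1+\zeta_1|,|y_2+\zeta_2|\leq d'$ and
\begin{equation*}
 |\vec{y}+\vec{\zeta}|-|y^0+\zeta^0|\geq(|\vec{y}|-|y^0|)-4d/C\,.
\end{equation*}
The point is to prove the smeared bound under a strengthened hypothesis $|\vec{y}|\geq|y^0|+c_1'd$ with a new constant $c_1'\geq 2c_1+4/C$, which is legitimate because all constants are allowed to change. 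This choice guarantees simultaneously that $|\vec{y}+\vec{\zeta}|-|y^0+\zeta^0|\geq c_1 d'$, so that \eqref{clusterest} applies to the shifted point with parameter $d'$, and that $|\vec{y}+\vec{\zeta}|-|y^0+\zeta^0|\geq\con\,(|\vec{y}|-|y^0|)$, so that the separation is degraded only by a bounded factor. Inserting \eqref{clusterest} and using that $\int\prod_i|\chi_i|<\infty$, the region-(a) contribution is bounded by a term of exactly the required form $c_2'\,d^M(|\vec{y}|-|y^0|)^{-\nu}+c_3'\,d^{-N}$.

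For the tail region (b) I would use only the crude uniform bound $|K|\leq 4\,\|B_1\|\,\|B_2\|\,\|B_3\|\,\|B_4\|$, which holds because $K$ is a matrix element of a product of two commutators of bounded operators with $\|E_\W^\bot\|\leq1$ and a unitary between them, and which is insensitive to the shifts of its arguments. Since region (b) forces $|z_i|>d/C$ for at least one $i$ and the $\chi_i$ are Schwartz, $\int_{(b)}\prod_i|\chi_i|\,dz_i$ decays faster than any power of $d$, so this contribution is absorbed into the $c_3'\,d^{-N}$ term for any prescribed $N$. Adding the two regions yields the assertion with new constants. I expect the main obstacle to be the bookkeeping in region (a): one must tune the auxiliary parameters $C$, $d'$ and the enlarged constant $c_1'$ so that the shifted point genuinely lies in the domain of validity of \eqref{clusterest} while the factor $(|\vec{y}+\vec{\zeta}|-|y^0+\zeta^0|)^{-\nu}$ remains comparable to $(|\vec{y}|-|y^0|)^{-\nu}$ rather than degenerating near the boundary of the cone.
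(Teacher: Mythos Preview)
Your argument is correct. The convolution identity for $K_\chi$ follows from translation invariance of $\W$ and the fact that $E_\W^\bot$ commutes with $U(a)$; the splitting into the bulk region $\max_i|z_i|\leq d/C$ and the Schwartz tail is the natural strategy, and your bookkeeping with $d'=2d$ and the enlarged constant $c_1'\geq 2c_1+4/C$ is sound. In particular, in region~(a) the inequality $|\vec y+\vec\zeta|-|y^0+\zeta^0|\geq(1-4/(Cc_1'))\,(|\vec y|-|y^0|)$ together with $Cc_1'>4$ gives the comparability you need, so the worry you flag about degeneration near the cone boundary does not materialize. One small point worth making explicit: the phrase ``for large enough $d$'' in Assumption~\ref{cluster} is preserved because $d'=2d$ is large whenever $d$ is.

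The paper itself does not supply a proof of this proposition; it simply quotes it from \cite{he14} (see the attribution in the statement and the remark following it referring to Proposition~12 in Appendix~B of that reference). Your argument is the standard one and is essentially what one expects the cited proof to contain.
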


Assumption~\ref{cluster} is fulfilled, in particular, for local and almost local fields, as shown in Proposition 12 in Appendix B of \cite{he14}.

\paragraph{} 
\label{hr2psi}
For sufficiently small $\eta$, there is
\begin{equation}\label{eq:two}
 \lim_{R\to\infty}B_{1,R,\eta}^{+}[f_1]^*B_{2,R,\eta}^{+}[f_2]\W=(2\pi)^2\big(\ti{f}_1(1,P)B_1\W,E_0 \ti{f}_2(1,P) B_2\W\big)\,\W\,.
\end{equation}
The projection of this equality onto $\W$ follows from Eq.\,\eqref{eq:lim_B_one_eta}. Since the energy transfer of the operator $B^+_{1,R}[f_1]^*$ is contained in $(-\infty,0\>$, it holds $B^+_{1,R}[f_1]^*\W\in E(\{0\})\Hc$, where $E(\cdot)$ is the spectral projection of the four-momentum operators. As $f_1=b_1^{(3)}$, we have
\begin{equation}
  B^+_{1,R}[f_1]^*\Omega=E(\{0\}) B^+_{1,R}[f_1]^*\Omega=-i E(\{0\}) [P^0, B^+_{1,R}[b_1^{(2)}]^*]\Omega=0
\end{equation}
and the operator $B^+_{1,R}[f_1]^*$ annihilates the vacuum. Therefore, the relation \eqref{eq:two} will be true, if
\begin{equation}\label{eq:two_ort}
 \lim_{R\rightarrow \infty}\big\|E_\W^\bot\big[B^+_{1,R,\eta}[f_1]^*,B^+_{2,R,\eta}[f_2]\big]\W\big\|=0\,.
\end{equation}
Note that $B^+_{j,R,\eta}[f] = B_j[g^\eta_R,f_+]$ where
\begin{equation}
 f_+(s,l)=\int \theta(\w) \ti{f}(\w,l) e^{-i\w s}\,d\w\,.
\end{equation}
In general, if $f =b^{(n)}$ for some $b\in \Sdeg{-2+n}_\ep$, $\ep>0$, then $f_+ \in \Sdeg{-2}_{n}$ (use Lemma~14 in Appendix A of \cite{he14'}). Thus, under our assumptions $f_+\in \Sdeg{-2}_{3}$.

Since  $\supp g\subseteq\<\tau_1,\tau_2\>\subset(0,\infty)$, we have
\begin{equation}
 \supp g^\eta_R\subseteq \<R_1,R_2\>=\<R+(\tau_1-1)w(R), R+(\tau_2-1)w(R)\>.
\end{equation}
Therefore, the identity \eqref{eq:two_ort} is the consequence of the following lemma which is proved in Appendix~\ref{app:proof_lemma_fock}.
\begin{lem}\label{lem:fock}
For sufficiently small $\eta$
\begin{equation}
 \lim_{\begin{smallmatrix}R\to\infty\\r_1,r_2\in\<R_1,R_2\>\end{smallmatrix}}\big\|E_\W^\bot\big[B_{1}[r_1,f_{1+}]^*,B_{2}[r_2,f_{2+}]\big]\W\big\|=0\,.
\end{equation}
\end{lem}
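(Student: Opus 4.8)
The goal is to prove Lemma~\ref{lem:fock}, namely that the commutator $[B_1[r_1,f_{1+}]^*,B_2[r_2,f_{2+}]]$, projected off the vacuum and applied to $\W$, vanishes in norm as $R\to\infty$ uniformly over $r_1,r_2\in\<R_1,R_2\>$. My plan is to exploit the clustering Assumption~\ref{cluster} together with the definition of the matrix element $K$ in Eq.\,\eqref{4psi}. The key observation is that the squared norm in question can be written as a double integral over the smearing parameters: writing $X=E_\W^\bot[B_1[r_1,f_{1+}]^*,B_2[r_2,f_{2+}]]\W$, one has $\|X\|^2=(X,X)$, which after expanding the two smeared commutators becomes an integral against $K$ of the form
\begin{equation*}
 \|X\|^2=\left(\tfrac{r_1r_2}{2\pi}\right)^2\!\!\int \ov{K}(\cdots)\,\ov{f_{1+}}\,\ov{f_{2+}}\,f_{1+}'\,f_{2+}'\,\prod ds\,d\W_t(l)\,,
\end{equation*}
where the arguments of $K$ are built from the points $s_i t + r_i l_i$ on the respective cylinders and $E_\W^\bot$ sits in the middle exactly as in the middle line of \eqref{4psi}. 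First I would set up this integral representation carefully, so that the clustering bound \eqref{clusterest} can be inserted.

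Once the integral representation is in place, the strategy is geometric. The two smearing points live on cylinders of radii $r_1,r_2\in\<R_1,R_2\>$, and because $f_{1+},f_{2+}$ are supported (in the angular variable) on disjoint patches of $\lct$, the spatial separation of the four points entering $K$ grows linearly in $R$. I would identify the parameters $y_1,y_2,y$ of $K$ with the appropriate differences of the four cylinder points: $y_1$ and $y_2$ are the ``small'' separations within each commutator pair (controlled by the retarded-time spread, hence by $d\sim|s_i|$ or $\sim w$), while $y$ is the ``large'' separation between the two pairs, whose spacelike part $|\vec y|-|y^0|$ grows like $R$ times the angular separation $\xi$ induced by the disjoint supports. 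The decay $f_{i+}\in\Sdeg{-2}_3$ (with $\ep=3>2$) controls the integration over retarded times $s_i$ and guarantees that the effective $d$ in \eqref{clusterest} stays bounded relative to the growing separation, provided $\eta$ is small enough that the cylinder-width $w=\la(R/\la)^\eta$ grows slower than $R$.

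The main obstacle, and the reason the hypothesis ``for sufficiently small $\eta$'' appears, is balancing these two scales. The clustering estimate \eqref{clusterest} reads $|K|\le c_2 d^M/(|\vec y|-|y^0|)^\nu + c_3 d^{-N}$; to make both terms small I must simultaneously take $|\vec y|-|y^0|\to\infty$ (which forces $d$ not too large) and exploit the freedom in $d$ and $N$. Since $|\vec y|-|y^0|\sim R\xi$ and the natural localization scale is $d\sim w+|s_1|+|s_2|$, the first term behaves like $w^M/(R\xi)^\nu$, which requires $\eta<\nu/M$ (up to the $s$-integration handled by the decay of $f_{i+}$), while the second term $c_3 w^{-N}\to0$ for any positive power once $N$ is chosen large. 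Thus the delicate point is choosing $\eta$ small enough that the polynomial growth $w^M$ is beaten by $(R\xi)^\nu$; I expect this to be where the smallness-of-$\eta$ condition is used, exactly as in the analogous massive construction. Having established pointwise decay of the integrand uniformly in $r_1,r_2\in\<R_1,R_2\>$, together with an integrable majorant supplied by the $\Sc$-norms of $f_{1+},f_{2+}$, I would conclude by dominated convergence that $\|X\|^2\to0$, proving the lemma.
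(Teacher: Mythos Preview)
Your approach contains a fundamental geometric error. First, the lemma carries no hypothesis of disjoint angular supports for $f_{1+},f_{2+}$: it is invoked to establish the two-point relation \eqref{eq:two}, which must hold for arbitrary $f_1,f_2$ (in particular $f_1=f_2$); the disjoint-support condition from paragraph~\ref{hrasst} pertains to the multi-particle vectors, not here. Second, even if disjoint supports were assumed, your identification of which separations are ``small'' and which are ``large'' is backwards. In $\|X\|^2$ the four points entering $K(y_1,y_2,y)$ come from two copies of the \emph{same} commutator: $l_1,l_3$ range over $\supp f_{1+}$ and $l_2,l_4$ over $\supp f_{2+}$. Thus $y_1$ pairs an $f_1$-direction with an $f_2$-direction \emph{within} one commutator --- disjoint supports would force $|y_1|\sim R$, not $|y_1|\lesssim w$ --- whereas $y$ separates the centers of the two copies, and since $l_1,l_3$ (resp.\ $l_2,l_4$) can coincide, $y$ can vanish. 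No angular gap makes $|\vec y|-|y^0|$ grow, so the clustering bound cannot be applied as you describe, and there is no integrable majorant for dominated convergence.

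The paper's proof proceeds instead by a region decomposition in the single-commutator variables. When $r\xi_{12}\geq\tau w$ or $|\Delta s_{12}|\geq\tau w$ one bounds $\|X\|$ directly (not its square) using the $\kappa$-type commutator decay, respectively the $s$-decay of $f_{i+}\in\Sc^{-2}_3$; clustering is not used there. Only in the remaining region, where $|y_1|\lesssim w$ genuinely holds so that the hypotheses of Assumption~\ref{cluster} can be met with $d\sim w$, does one square the norm and invoke \eqref{clusterest}. A further split in $y$ is then required: the region where $y$ is not sufficiently spacelike is handled not by decay of $K$ but by the small \emph{measure} $O(w^2/r^2)$ of the near-diagonal set $\{r\xi_{14}\lesssim w\}$. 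The smallness of $\eta$ is fixed by the balances $w^6/r^2\to0$ and $w^{M+4}/r^{\min\{\nu,2\}}\to0$, not by your proposed $w^M/(R\xi)^\nu$ with $\xi$ bounded away from zero.
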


\paragraph{} 
The Fock structure of the scalar product of asymptotic states
\begin{equation}\label{eq:fock}
 \lim_{R\to\infty}(B^+_{1,R,\eta}[f_1]\ldots B^+_{k,R,\eta}[f_n]\W,B^+_{k+1,R,\eta}[f_{k+1}]\ldots B^+_{n,R,\eta}[f_n]\W)
\end{equation}
can be obtained by transferring the operators $B^+_{j,R,\eta}[f_j]$ from the left to the adjoints on the right, commuting them to the far right and using \eqref{eq:two_ort} (this technique is described thoroughly in \cite{dy05}). Note that to prove this we used neither the spectral assumption nor the locality of the fields. Thus the method might be applied to asymptotic states defined in both paragraphs~\ref{hrasst} and~\ref{hrasst_loc}.

\section{Conclusions and outlook}

In the setting of (in general) nonlocal fields satisfying some mild decay conditions we have established a link between their null asymptotic behavior on the one hand, and their energy-momentum spectral properties in a neighborhood of the lightcone, on the other. These properties include, in particular, the condition of infrared-regularity, i.e.\ appropriate vanishing in momentum space for $p=0$. If the standard asymptote exists as a limit, it defines a quantum field satisfying the wave equation -- again an infrared regular field. The IR-regularity of the problem is reflected in the unique decomposition of the limit fields into positive and negative energy-transfer parts, which therefore have physical interpretation of creators and annihilators of some particle-like, zero-mass excitations.

In the more specific context of vacuum representation the scheme was applied for the derivation of a nonlocal massless version of the Haag-Ruelle theory with the resulting Fock space of asymptotic states. Strictly local setting is a special case, which simplifies Buchholz's analysis.

We mention that the general scheme also works in a nonlocal algebraic model proposed earlier by one of us as a candidate for the description of long-range structure of quantum electrodynamics \cite{he98} (see also \cite{he11} for more information and references): IR-regular fields (in the sense defined here) present in this model may be reconstructed from their null asymptotes. However, the model contains more general fields satisfying wave equation, whose IR behavior, although remaining under control in this model, prevents the application of the methods described in the present paper. Thus physical interpretation of the model in terms of asymptotic particles is not complete. In our view this is also to be expected in prospective full quantum electrodynamics which would not arbitrarily cut infrared regime. Null asymptotic analysis of more IR-singular--nonlocal fields, both in the model, as on more general grounds, is thus an interesting problem for future investigations.

\section*{Acknowledgements}
Pawe{\l} Duch acknowledges the support of the Polish Ministry of Science and Higher Education, grant number 7150/E-338/M/2013.

\section*{Appendix}
\setcounter{section}{0}
\renewcommand{\thesection}{\Alph{section}}

\section{An estimate}\label{sec:convolution}

\begin{lem}\label{lem:bound_convolution}
Let $f_j\in C^\infty(\mR)$ be such that $|f_j(s)|\leq \con/(\la+|s|)^{1+\ep}$,  $j=1,2$. If $\ep>0$ then
\begin{equation}\label{eq:app_bound_ff}
 \Bigg|\int\limits_{|s_1-s_2|\geq S}\! f_1(s_1) f_2(s_2)\, ds_1ds_2\,\Bigg| \leq \frac{\con}{(\la+|S|)^{\ep}}\,.
\end{equation}
\end{lem}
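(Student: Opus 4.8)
The plan is to reduce the double integral to a product of elementary one-dimensional power-law integrals by decoupling the integration region. First I would pass to absolute values: by the hypothesis $|f_j(s)|\leq\con/(\la+|s|)^{1+\ep}$ the modulus of the left-hand side of \eqref{eq:app_bound_ff} is at most
\[
 \con\int\limits_{|s_1-s_2|\geq S}\frac{ds_1\,ds_2}{(\la+|s_1|)^{1+\ep}(\la+|s_2|)^{1+\ep}}\,.
\]
I would then note that it suffices to treat $S\geq0$, which is the only case actually needed (in the proof of Theorem~\ref{limit}, case (iii), one applies the lemma with $S=br\xi\geq0$).

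The one step that carries the whole argument is the following elementary observation about the domain: on the region $|s_1-s_2|\geq S$ at least one of $|s_1|,|s_2|$ must exceed $S/2$, since otherwise the triangle inequality would force $|s_1-s_2|\leq|s_1|+|s_2|<S$. Consequently the integration region is contained in $\{|s_1|\geq S/2\}\cup\{|s_2|\geq S/2\}$. Enlarging each of these two pieces by letting the remaining variable range over all of $\mR$ (which only increases a nonnegative integrand), the double integral is bounded by a sum of two products of one-dimensional integrals; by the symmetry $s_1\leftrightarrow s_2$ the two summands coincide, so it remains to estimate
\[
 2\Big(\int_{\mR}\frac{ds}{(\la+|s|)^{1+\ep}}\Big)\Big(\int_{|s|\geq S/2}\frac{ds}{(\la+|s|)^{1+\ep}}\Big)\,.
\]

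Both factors are now routine. The first integral converges precisely because $1+\ep>1$, giving the finite constant $2\la^{-\ep}/\ep$. The second evaluates to $(2/\ep)(\la+S/2)^{-\ep}$, and since $\la+S/2\geq\tfrac12(\la+S)$ one has $(\la+S/2)^{-\ep}\leq2^{\ep}(\la+S)^{-\ep}$, which restores the denominator asserted in the statement. Collecting the constants yields the claimed bound $\con/(\la+|S|)^{\ep}$. I do not expect a genuine obstacle here: the only two points requiring a little care are the region-decoupling inclusion (the heart of the estimate) and the harmless comparison $\la+S/2\asymp\la+S$ used to pass from $\la+S/2$ back to $\la+|S|$; everything else is a standard evaluation of convergent power-law integrals.
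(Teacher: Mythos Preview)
Your proof is correct and follows essentially the same route as the paper: the key step in both is the observation that $|s_1-s_2|\geq S$ forces $|s_1|\geq S/2$ or $|s_2|\geq S/2$, after which the double integral factorizes and the elementary power-law tail estimate gives $(\la+|S|)^{-\ep}$. Your version is simply more explicit about the endgame (the constant bookkeeping and the comparison $\la+S/2\asymp\la+S$), which the paper absorbs into the $\con$.
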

\begin{proof}
In the region $|s_1-s_2|\geq S$ either $|s_1|\geq S/2$ or $|s_2|\geq S/2$, thus, the lhs of \eqref{eq:app_bound_ff} is bounded by
\begin{equation}
\int\limits_{|s_1|\geq S/2} \frac{\con\, ds_1ds_2}{(\la+|s_1|)^{1+\ep}(\la+|s_2|)^{1+\ep}} \leq \frac{\con}{(\la+|S|)^{\ep}}\,.
\end{equation}
\end{proof}

\section{On Fourier transform on a sphere}\label{regwave}

In this section the setting is the $3$-dimensional Euclidean space (positive metric). We denote $\vec{M}f(\vec{l})=\vl\times\vec{\p}f(\vl)$, where $\vec{\p}$ is the contravariant derivative vector with respect to $\vl$. Note that in positive oriented, orthonormal basis there is
$\dsp M^i=\tfrac{1}{2}\sum_{j,k=1}^3\vep^{ijk}L_{jk}$ and $\dsp\vec{M}\cdot\vec{M}=\tfrac{1}{2}\sum_{j,k=1}^3(L_{jk})^2=\Delta_S$ -- the Laplace operator on the unit sphere.
\begin{pr}
Let $f(\hn)$ be a smooth function on the sphere $S^2$. Then
\begin{gather}
 \int e^{-i\vq\cdot\hn}f(\hn)\,d\W(\hn)=\frac{2\pi i}{|\vq\,|}\Big(e^{-i|\vq|}f(\hq)-e^{i|\vq|}f(-\hq)\Big)+(\F_3R)(\vq)\,,\label{expS}\\[1ex]
 (\F_3R)(\vq)=i\int e^{-i\vq\cdot\hn}\frac{\vq\times\hn}{|\vq\times\hn|^2}\cdot\vec{M}f(\hn)d\W(\hn)\,,\label{Rq}\\[1ex]
 R(\vz)=\frac{-1}{|\vz|^2-1}\int\theta(\vz\cdot\hn-1)\Delta_Sf(\hn)d\W(\hn)\,,\label{Rz}
\end{gather}
It follows that
\begin{equation}\label{Rzest}
 |R(\vz)|\leq 2\pi\|\Delta_Sf\|_\infty\,\frac{\theta(|\vz|-1)}{|\vz|(|\vz|+1)}\,.
\end{equation}
\end{pr}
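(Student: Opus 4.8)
The plan is to prove the three displayed identities \eqref{expS}, \eqref{Rq}, \eqref{Rz} in sequence, since each feeds into the next, and then to read off the bound \eqref{Rzest} as an elementary estimate. The whole statement is really a stationary-phase-type expansion for the Fourier transform of a smooth function on $S^2$, where the leading term captures the contributions of the two stationary points $\hn=\pm\hq$ (the north and south poles relative to the direction $\hq$), and $R$ collects the remainder.

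\medskip
\noindent\textbf{Step 1 (the expansion \eqref{expS} and \eqref{Rq}).} First I would isolate the leading term by an integration-by-parts argument that exploits the angular-momentum operator $\vec M$. The key algebraic identity is that, away from the poles $\hn=\pm\hq$, the phase $e^{-i\vq\cdot\hn}$ can be written as a tangential derivative: since $\vec M$ acts only tangentially to the sphere, one computes $\vec M\, e^{-i\vq\cdot\hn} = -i(\vq\times\hn)\,e^{-i\vq\cdot\hn}$, so that
\begin{equation*}
 \frac{\vq\times\hn}{|\vq\times\hn|^2}\cdot\vec M\,e^{-i\vq\cdot\hn}
 = -i\,e^{-i\vq\cdot\hn}\,.
\end{equation*}
Inserting this into $\int e^{-i\vq\cdot\hn}f\,d\W$ and integrating by parts (using that $\vec M$ is, up to sign, skew-adjoint on $L^2(S^2)$ and carries no boundary term on the closed sphere) transfers the derivative onto $f$ and produces exactly the integral in \eqref{Rq}, \emph{plus} the contributions localized at the poles where the vector field $(\vq\times\hn)/|\vq\times\hn|^2$ is singular. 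Those pole contributions are the leading term: evaluating the residue-like singularities at $\hn=\pm\hq$ (e.g.\ by excising small caps around the poles, integrating by parts on the complement, and taking the cap radius to zero) yields the factor $\tfrac{2\pi i}{|\vq|}\big(e^{-i|\vq|}f(\hq)-e^{i|\vq|}f(-\hq)\big)$. The sign and the $2\pi/|\vq|$ normalization should be checked against the explicit case $f\equiv1$, where the left side is $\int e^{-i\vq\cdot\hn}d\W = 4\pi\,\sin|\vq|/|\vq|$ and the remainder $R$ must vanish because $\vec Mf=0$.

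\medskip
\noindent\textbf{Step 2 (the closed form \eqref{Rz}).} Next I would identify $R$ itself (not just its transform). The natural route is to recognize \eqref{Rq} as $(\F_3R)(\vq)$ and to compute the inverse $3$-transform. A cleaner alternative is to perform a \emph{second} integration by parts in \eqref{Rq}: writing $\vq\times\hn/|\vq\times\hn|^2\cdot\vec M$ as an operator and again using skew-adjointness of $\vec M$, one can move one more angular derivative onto the phase, converting $\vec M f$ into $\Delta_S f = \vec M\cdot\vec M f$ and simultaneously reducing the singular denominator. The $\theta(\vz\cdot\hn-1)$ and the prefactor $-1/(|\vz|^2-1)$ in \eqref{Rz} suggest that, after passing to position space $\vz$, the relevant kernel is the indicator of a spherical cap $\{\hn:\vz\cdot\hn>1\}$; this is consistent with the remainder being supported where $|\vz|\ge1$. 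I would verify \eqref{Rz} by confirming that its $3$-transform reproduces \eqref{Rq}, the cleanest check again being the homogeneity/scaling in $\vz$ and direct evaluation against a spherical harmonic $f=Y_{\ell m}$, for which $\Delta_S f=-\ell(\ell+1)f$ makes both sides explicit.

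\medskip
\noindent\textbf{Step 3 (the bound \eqref{Rzest}).} This is immediate once \eqref{Rz} is in hand. The integral $\int\theta(\vz\cdot\hn-1)\Delta_Sf(\hn)\,d\W(\hn)$ is bounded in absolute value by $\|\Delta_Sf\|_\infty$ times the area of the cap $\{\hn:\vz\cdot\hn\ge1\}$, which is nonempty only when $|\vz|\ge1$ (hence the factor $\theta(|\vz|-1)$) and has area $2\pi(1-1/|\vz|)=2\pi(|\vz|-1)/|\vz|$. Dividing by $|\,|\vz|^2-1|=(|\vz|-1)(|\vz|+1)$ cancels the $(|\vz|-1)$ factor and leaves precisely $2\pi\|\Delta_Sf\|_\infty\,\theta(|\vz|-1)/\big(|\vz|(|\vz|+1)\big)$, which is \eqref{Rzest}.

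\medskip
\noindent The main obstacle I anticipate is Step 1: controlling the pole contributions rigorously. The vector field $(\vq\times\hn)/|\vq\times\hn|^2$ blows up like $1/\sin\vartheta$ near the two poles, so the integration by parts is only valid on the sphere with small caps removed, and one must show that the boundary terms from the cap edges converge, as the caps shrink, to the stated pole values while the interior integral converges to \eqref{Rq}. Getting the oscillatory factors $e^{\mp i|\vq|}$ and the exact coefficient $2\pi i/|\vq|$ correct from this limiting procedure is the delicate point; the consistency check $f\equiv1$ giving $4\pi\sin|\vq|/|\vq|$ with $R\equiv0$ is the most reliable guard against sign and normalization errors.
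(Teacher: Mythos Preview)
Your overall three-step plan matches the paper's logic, and Step~3 is exactly right. The differences are in execution of Steps~1 and~2.

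\textbf{Step 1.} Your cap-excision scheme would work, but the paper avoids it entirely by choosing spherical coordinates $(\vth,\vph)$ with polar axis along $\hq$. Then $d\W=\sin\vth\,d\vth\,d\vph$ and the identity $\sin\vth\,e^{-i\vq\cdot\hn}=(-i/|\vq|)\,\p_\vth e^{-i|\vq|\cos\vth}$ reduces the whole thing to a single one-variable integration by parts in $\vth\in[0,\pi]$. The boundary terms at $\vth=0,\pi$ are the pole contributions \emph{directly}, with the correct coefficients and no limiting procedure; the remaining integral is $(i/|\vq|)\int e^{-i\vq\cdot\hn}\p_\vth f\,d\vth\,d\vph$, which becomes \eqref{Rq} after writing $\p_\vth=\dfrac{\hq\times\hn}{|\hq\times\hn|}\cdot\vec{M}$ and restoring $d\W$. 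Your worry that ``getting the oscillatory factors and the exact coefficient correct from this limiting procedure is the delicate point'' disappears completely in coordinates. Note also that in your abstract version, skew-adjointness of $\vec M$ alone is not quite enough: moving $\vec M$ past the singular prefactor $(\vq\times\hn)/|\vq\times\hn|^2$ could in principle produce an extra divergence term; it happens to vanish here (equivalently, $\frac{1}{\sin\vth}\p_\vth$ is formally skew with respect to $d\W$), but that needs to be said.

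\textbf{Step 2.} Here your proposal is too vague to constitute a proof. The paper does \emph{not} do a second integration by parts in $\vq$-space; it first inverts the $3$-Fourier transform of \eqref{Rq} using the distributional identity
\[
 \frac{1}{(2\pi)^2}\int e^{i\vq\cdot\vz}\,e^{-i\vq\cdot\hn}\,\frac{\vq_\bot}{|\vq_\bot|^2}\,d^3q
 = i\,\delta(\vz\cdot\hn-1)\,\frac{\vz_\bot}{|\vz|^2-1}\,,
\]
where $\vq_\bot$ is the component of $\vq$ orthogonal to $\hn$ (and one uses $\dfrac{\vq\times\hn}{|\vq\times\hn|^2}=\dfrac{\vq_\bot}{|\vq_\bot|^2}\times\hn$). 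This gives $R(\vz)$ as an integral of $\delta(\vz\cdot\hn-1)(\vz\times\hn)\cdot\vec Mf$, and \emph{then} one recognizes $(\vz\times\hn)\,\delta(\vz\cdot\hn-1)=-\vec M\,\theta(\vz\cdot\hn-1)$ and moves that $\vec M$ onto $f$ to produce $\Delta_Sf$ and the form \eqref{Rz}. So the ``second integration by parts'' you anticipated does occur, but only \emph{after} passing to position space; doing it in $\vq$-space would not yield the step function $\theta(\vz\cdot\hn-1)$ or the prefactor $(|\vz|^2-1)^{-1}$. Your suggestion to check \eqref{Rz} by testing on spherical harmonics is a reasonable sanity check but not a derivation.
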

\begin{proof}
Let $(\vth,\vph)$ be the standard spherical angles for $\hn$ with respect to an orthonormal basis with the third vector along $\vq$.  In the integral on the lhs of Eq.\,\eqref{expS} we write $\sin\vth e^{-i\vq\cdot\hn}=(-i/|\vq|)\p_\vth e^{-i\vq\cdot\hn}$ and integrate by parts with respect to $\vth$. The boundary terms give the first two terms on the rhs of Eq.\,\eqref{expS} and the remaining integral has the form $(i/|\vq|)\int e^{-i\vq\cdot\hn}\p_\vth f(\hn)d\vth d\vph$. Now noting simple identity
$\p_\vth=\dfrac{\hq\times\hn}{|\hq\times\hn|}\cdot\vec{M}$ and using $|\hq\times\hn|=\sin\vth$ we arrive at \eqref{Rq}.

To calculate the inverse Fourier transform of this function we note that $\dfrac{\vq\times\hn}{|\vq\times\hn|^2}=\dfrac{\vq_\bot}{|\vq_\bot|^2}\times\hn$, where $\vq_\bot$ is the component of $\vq$ orthogonal to $\hn$. Using standard methods one finds that
\begin{equation}
 \frac{1}{(2\pi)^2}\int e^{i\vq\cdot\vz} e^{-i\vq\cdot\hn}\frac{\vq_\bot}{|\vq_\bot|^2}d^3q=i\delta(\vz\cdot\hn-1)\frac{\vz_\bot}{|\vz_\bot|^2}
 =i\delta(\vz\cdot\hn-1)\frac{\vz_\bot}{|\vz|^2-1}
\end{equation}
in the distributional sense. Therefore,
\begin{multline}
 R(\vz)=\frac{-1}{|\vz|^2-1}\int\delta(\vz\cdot\hn-1)(\vz\times\hn)\cdot\vec{M}f(\hn)d\W(\hn)\\
 =\frac{1}{|\vz|^2-1}\int\vec{M}\theta(\vz\cdot\hn-1)\cdot\vec{M}f(\hn)d\W(\hn)\,.
\end{multline}
Transferring $\vec{M}$ to $f$ we arrive at \eqref{Rz}.
\end{proof}

\section{Spaces $L^{p,1}$}\label{spacesp1}

For $p\geq1$, we define seminorms on the space of measurable functions on Minkowski space $M$
\begin{equation}\label{nnorm}
 \|\rho\|_{p,1}=\int\|\rho(x^0,.)\|_p\,dx^0\,,
\end{equation}
where the sign $\|.\|_p$ under the integral denotes the $L^p(\mR^3,d^3x)$-norm. If $h_3$ is a function on the $3$-space and $h_0$ a function of $x^0$, then the use of H\"older's inequality on $3$-space or in $x^0$ respectively shows that
\begin{equation}\label{30hold}
 \|h_3\rho\|_1\leq\|h_3\|_q\|\rho\|_{p,1}\,,\qquad \|h_0\rho\|_{p,1}\leq\|h_0\|_q\|\rho\|_p\,,
\end{equation}
where $q^{-1}+p^{-1}=1$. In particular, if $\rho=0$ almost everywhere, then by the second inequality above $\|h_0\rho\|_{p,1}=0$ for all characteristic functions of bounded sets $h_0$, which implies $\|\rho\|_{p,1}=0$. Conversely, if $\|\rho\|_{p,1}=0$, then by the first inequality $\|h_3\rho\|_1=0$ for all characteristic functions of bounded sets $h_3$, which implies $\rho=0$ almost everywhere. Thus classes of functions coinciding almost everywhere, with finite seminorms \eqref{nnorm}, form normed spaces, which we denote $L^{p,1}$.

Spaces $L^{p,1}$ are complete. Namely, let $\vph_n$ be a Cauchy sequence. By the first of inequalities \eqref{30hold} $h_3\vph_n$ is a Cauchy sequence in $L^1$ for any characteristic function of a bounded set $h_3$. Therefore, by completeness of $L^1$, there exists limit $\vph(x)=\lim_n\vph_n(x)$ almost everywhere. Choose $\ep>0$ and $N$ such that $\|\vph_k-\vph_n\|_{p,1}\leq\ep$ for all $k,n\geq N$. Then for $n\geq N$, by the use of Fatou's lemma, we have
\begin{multline}
 \|\vph-\vph_n\|_{p,1}=\int\Big(\int\lim_{k\to\infty}|(\vph_k-\vph_n)(x)|^pd^3x\Big)^{1/p}dx^0 \\
 \leq\int\Big(\varliminf_{k\to\infty}\int|(\vph_k-\vph_n)(x)|^pd^3x\Big)^{1/p}dx^0
 \leq\varliminf_{k\to\infty}\|\vph_k-\vph_n\|_{p,1}\leq\ep\,,
\end{multline}
which closes the proof.

For convolution of functions from spaces $L^{p,1}$ an analog of Young's inequality is true. Let $1+p^{-1}=q^{-1}+r^{-1}$. Then
\begin{equation}\label{young}
  \|\vph*\psi\|_{p,1}\leq\|\vph\|_{q,1}\|\psi\|_{r,1}\,.
\end{equation}
This is shown with the use of Young's inequality in $3$-space (${\stackrel{(3)}{*}}$ denotes the $3$-space convolution):
\begin{multline}
 \mathrm{lhs}=\int\Big\|\int\vph(x^0-y^0,.){\stackrel{(3)}{*}}\psi(y^0,.)dy^0\Big\|_pdx^0\\
 \leq\int\int\|\vph(x^0,.){\stackrel{(3)}{*}}\psi(y^0,.)\|_pdy^0dx^0
 \leq\int\|\vph(x^0,.)\|_qdx^0\int\|\psi(y^0,.)\|_rdy^0\\=\mathrm{rhs}\,.
\end{multline}
In particular, for $r=1$ we find
\be\label{young1}
 \|\vph*\psi\|_{p,1}\leq\|\vph\|_{p,1}\|\psi\|_1\,.
\ee
In similar way it is also easy to show that for the $3$-space convolution with a~function $h_3$ on the $3$-space there is
\be\label{young13}
\|\vph{\stackrel{(3)}{*}}h_3\|_{p,1}\leq\|\vph\|_{p,1}\|h_3\|_1\,.
\ee

\section{Proof of the estimate \eqref{chiest} of $\|\chi_R^\pm\|_{q,1}$}\label{nest}

We represent $\wh{\chi_R^\pm}(p)=\wh{\vph^\pm_R}(p)\wh{\rho^\pm_R}(p)$, with
\begin{align}
 \wh{\vph^\pm_R}(p)&=\theta(\pm p^0)|p^0|^\delta|\vp|^{-1} \ti{g}(Rp^\pm)\big(1+R^2|\vp|^2\big)^2\,,\\
 \wh{\rho^\pm_R}(p)&=\ti{f}(p^0,\hp^\mp)\big(1+R^2|\vp|^2\big)^{-2}\,.
\end{align}
For $\chi^\pm_R=(2\pi)^{-2}\vph^\pm_R*\rho^\pm_R$ one obtains with the use of inequality \eqref{young}:
\begin{equation}\label{chfr}
 \|\chi^\pm_R\|_{q,1}\leq(2\pi)^{-2}\|\vph^\pm_R\|_{u,1}\|\rho^\pm_R\|_{v,1}\,,
\end{equation}
where $q=6/(6-\kappa')$, $u=15/(12-\kappa')$ and $v=10/(12-\kappa')$; as $\kappa'>2$, we have $u>3/2$ and $v>1$. The inverse transforms of $\wh{\vph^\pm_R}$ and $\wh{\rho^\pm_R}$ may be written as
\be
 \vph^\pm_R(x)=R^{-3-\delta}\frac{1}{(2\pi)^2}\int\theta(\pm q^0)|q^0|^\delta|\vec{q}|^{-1}\ti{g}(\pm(|q^0|+|\vec{q}|))(1+|\vec{q}|^2)^2e^{-iq\cdot(x/R)}dq\,,
\ee
\be
 \rho^\pm_R(x)=R^{-3}\frac{1}{(2\pi)^2}\int f(x^0,t\mp\hat{q})(1+|\vec{q}|^2)^{-2}e^{i\vec{q}\cdot(\vx/R)}d^3q\,.
\ee
We now apply to these transforms Lemma~14 in Appendix A in \cite{he14'}: for $\vph^\pm_R(x)$ separately\footnote{To show that this is possible one needs some work. The crucial property which enables such use of the lemma, is that singularities of derivatives of $\vph_R(q)$ factorize in $q^0$ and $|\vq|$ and independently satisfy the assumptions of this lemma.} in $q^0$ and $\vec{q}$, and for $\rho^\pm_R(x)$ in $\vec{q}$. Moreover, we use the decay rate of $f(s,l)$ (and its intrinsic derivatives on the cone) in $s$. In this way we obtain the estimates
\begin{align}
 |\vph^\pm_R(x)|&\leq \frac{\con\, R^{-3-\delta}}{(1+(|x^0|/R))^{1+\delta}(1+(|\vx|/R))^2}\,,\\
 |\rho^\pm_R(x)|&\leq\frac{\con\,R^{-3}}{(1+|x^0|)^{1+\ep}(1+(|\vx|/R))^3}\,.
\end{align}
Using this in \eqref{chfr}, one obtains $\|\chi^\pm_R\|_{q,1}\leq \con\,R^{-(\kappa'-2)/2-\delta}$.

\section{Spectral condition in free theory}
\label{app:spectral_condition_weyl}

\begin{pr}
Let $\Hc$ be the bosonic Fock space of positive energy vacuum representation of free massless field, with the single-particle Hilbert space $\Hc_1$ equipped with the standard scalar product
\begin{equation}
 (J_1,J_2) = \frac{1}{(2\pi)^3}\int \theta(p^0) \delta(p^2) \overline{\wh{J}_1(p)} \wh{J}_2(p)\, d^4 p\,.
\end{equation}
and creation/anihilation operators denoted by $a^*(J)$ and $a(J)$.
Then for any Schwartz function $J$
\begin{equation}
 \|(E_\mu-E_0)W(J) \W\| = O(\mu)\,,
\end{equation}
where $\W$ is the vacuum state, $W(J)=e^{-i (a(J) +a^*(J))}$ is the Weyl operator and projections $E_\mu$ were defined in paragraph~\ref{hrspec} of Section~\ref{hrtheory}.
\end{pr}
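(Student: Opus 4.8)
The plan is to work in the explicit Fock representation and exploit the coherent-state form of $W(J)\W$. Since $[a(J),a^*(J)]=\|J\|^2$, the Baker--Campbell--Hausdorff identity gives $W(J)\W=e^{-\|J\|^2/2}\sum_{n\geq0}\frac{(-i)^n}{n!}a^*(J)^n\W$, so the $n$-particle component $\Psi_n=e^{-\|J\|^2/2}\frac{(-i)^n}{n!}a^*(J)^n\W$ has momentum-space wave function proportional to $\widehat{J}(p_1)\cdots\widehat{J}(p_n)$ and the Poisson norm $\|\Psi_n\|=e^{-\|J\|^2/2}\|J\|^n/\sqrt{n!}$. The first observation is that $(E_\mu-E_0)$ annihilates the $n=0$ and $n=1$ terms: the vacuum has $P=0$ and a one-particle state has $P$ null, so in both cases $P^2=0$ and the vector lies in $E_0\Hc$. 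Hence only $n\geq2$ contributes and the whole problem reduces to estimating $\sum_{n\geq2}\|(E_\mu-E_0)\Psi_n\|^2$.

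Next I would reduce each $n$-particle term to a single two-particle estimate. Write $d\varrho(p)=(2\pi)^{-3}\frac{d^3p}{2|\vp|}$ for the single-particle measure, so that $\|J\|^2=\int|\widehat{J}|^2\,d\varrho$; on the $n$-particle subspace the spectral measure of $P$ is the image of $\prod_i d\varrho(p_i)$ under $(p_1,\dots,p_n)\mapsto P=\sum_i p_i$ (with $p_i^0=|\vp_i|$). Since $E_\mu-E_0=E(\{0<P^2\leq\mu^2\})$,
\begin{equation*}
 \|(E_\mu-E_0)\Psi_n\|^2=e^{-\|J\|^2}\frac{1}{n!}\int_{0<P^2\leq\mu^2}\prod_{i=1}^n|\widehat{J}(p_i)|^2\,d\varrho(p_i)\,.
\end{equation*}
Because every $p_i$ is forward null, $P^2=2\sum_{i<j}p_i\cdot p_j$ is a sum of nonnegative terms, so the constraint $P^2\leq\mu^2$ forces in particular $2p_1\cdot p_2\leq\mu^2$. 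Dropping all other pair constraints and integrating $p_3,\dots,p_n$ over the whole cone (each factor giving $\|J\|^2$) bounds the integral by $\|J\|^{2(n-2)}$ times the single two-particle integral $\mathcal{I}(\mu)=\int_{2p_1\cdot p_2\leq\mu^2}|\widehat{J}(p_1)|^2|\widehat{J}(p_2)|^2\,d\varrho(p_1)\,d\varrho(p_2)$.

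The technical heart, and the step I expect to be the main obstacle, is the claim $\mathcal{I}(\mu)=O(\mu^2)$. For forward null vectors $p_1\cdot p_2=|\vp_1||\vp_2|(1-\cos\theta)$, where $\theta$ is the angle between $\vp_1$ and $\vp_2$, so at fixed $|\vp_1|,|\vp_2|$ the constraint confines $\hp_2$ to a spherical cap of solid angle $\pi\mu^2/(|\vp_1||\vp_2|)$; the Jacobian factors $|\vp_1||\vp_2|$ supplied by the two radial measures cancel this denominator, and the residual integrals $\int\frac{|\widehat{J}(\vp)|^2}{|\vp|^2}\,d^3p$ and $\int_0^\infty\sup_{\hat{n}}|\widehat{J}(k\hat{n})|^2\,dk$ both converge because $J$ is Schwartz, giving $\mathcal{I}(\mu)\leq\con\,\mu^2$. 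One must check separately that the deep-infrared corner $|\vp_1||\vp_2|\lesssim\mu^2$, where the cap would saturate the whole sphere and the small-cap formula fails, contributes only $O(\mu^4\log(1/\mu))$; this follows from the integrability of $|\vp_1||\vp_2|\,d|\vp_1|\,d|\vp_2|$ over that corner and is subdominant. The care needed in handling this light-cone measure near $\vp=0$, so as to extract the exact $\mu^2$ scaling rather than a spurious logarithm, is the delicate point.

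Finally I would assemble the estimate. The previous two steps give $\|(E_\mu-E_0)\Psi_n\|^2\leq\con\,\mu^2\,e^{-\|J\|^2}\|J\|^{2(n-2)}/n!$, and summing over $n\geq2$ against the exponential series,
\begin{equation*}
 \|(E_\mu-E_0)W(J)\W\|^2\leq\con\,\mu^2\,e^{-\|J\|^2}\sum_{n\geq2}\frac{\|J\|^{2(n-2)}}{n!}\leq\con\,\mu^2\,,
\end{equation*}
which yields $\|(E_\mu-E_0)W(J)\W\|=O(\mu)$ as claimed.
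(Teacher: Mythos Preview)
Your proof is correct and follows essentially the same route as the paper: coherent-state expansion, observation that the $n=0,1$ terms are annihilated by $E_\mu-E_0$, reduction to the two-particle integral $\mathcal{I}(\mu)=\int_{2p_1\cdot p_2\leq\mu^2}|\widehat{J}(p_1)|^2|\widehat{J}(p_2)|^2\,d\varrho(p_1)\,d\varrho(p_2)$, and then the cap estimate $\mathcal{I}(\mu)=O(\mu^2)$ via the change of variables $\eta=p_1\cdot p_2/(2\omega_1\omega_2)$.

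The only organizational difference worth recording is in how the reduction to two particles is carried out. You work term by term in $n$ and use the combinatorial inequality $P^2=2\sum_{i<j}p_i\cdot p_j\geq 2p_1\cdot p_2$ to drop constraints, then resum the Poisson series. The paper instead factors the whole tail as $\phi(a^*(J))\,a^*(J)^2\W$ with $\phi(x)=(e^{-ix}-1+ix)/x^2$, observes that $\phi(a^*(J))$ has energy--momentum transfer in $\overline{V_+}$ so that $E_\mu\phi(a^*(J))P_{\Hc_2}=E_\mu\phi(a^*(J))P_{\Hc_2}E_\mu$, and hence bounds $\|E_\mu\phi(a^*(J))a^*(J)^2\W\|\leq\|\phi(a^*(J))P_{\Hc_2}\|\,\|E_\mu a^*(J)^2\W\|$ directly, avoiding the sum over $n$. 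Your inequality $P^2\geq 2p_1\cdot p_2$ is the pointwise shadow of the paper's spectral-transfer observation; the two arguments are equivalent in content, yours being more elementary and the paper's slightly more compact. Your separate treatment of the infrared corner is in fact unnecessary (the bound $\min(1,\mu^2/(4\omega_1\omega_2))\leq\mu^2/(4\omega_1\omega_2)$ already covers it), but it does no harm.
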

\begin{proof}
Let us note that
\begin{equation}
 W(J) \W =  \exp(-\tfrac{1}{2}(J,J)) \exp(-i a^*(J)) \W\,.
\end{equation}
It holds
\begin{multline}
 \|(E_\mu-E_0)\exp(-i a^*(J)) \W\|
 \\
 =\|(E_\mu-E_0)(\exp(-i a^*(J)) + i a^*(J) - 1)\W\|
 \\
 \leq\| E_\mu  \phi(a^*(J)) a^*(J)^2\W\| \leq \| \phi(a^*(J)) P_{\Hc_{2}}\| \| E_\mu   a^*(J)^2\W\|
\end{multline}
where $\phi(x)=(\exp(-ix)-1+ix)/x^2$. The operator $\phi(a^*(J))$ is defined in terms of power series expansion of $\phi$ on the two-particle subspace $\Hc_{2}$ of the Fock space $\Hc$ and $P_{\Hc_{2}}$ is projection onto $\Hc_{2}$. The operator $\phi(a^*(J)) P_{\Hc_{2}}$ is bounded because $\|a^*(J)^n P_{\Hc_{2}}\|^2=\frac{(n+2)!}{2}(J,J)^n$. Let us also observe that $E_\mu \phi(a^*(J)) P_{\Hc_{2}}(\id- E_\mu)=0$ since $\phi(a^*(J))\Hc_{2}$ has energy-momentum transfer in $\clc$.

The thesis now follows from the estimate
\begin{multline}
\| E_\mu A(J)^2\W\|^2 = \frac{1}{(2\pi)^6}\int\limits_{2p_1\cdot p_2 \leq\mu^2} |\wh{J}(|\vec{p}_1|,\vec{p}_1)|^2 |\wh{J}(|\vec{p}_2|,\vec{p}_2)|^2\,\frac{d^3 p_1}{2|\vec{p}_1|} \frac{d^3 p_2}{2|\vec{p}_2|}
\\
\leq \con\, \int\limits_{0\leq  \eta \leq\mu^2/(4\w_1\w_2)} \frac{\w_1d\w_1\,\w_2d\w_2\,d\eta}{(1+\w_1)^2(1+\w_2)^2}   \leq \con\,\mu^2\,,
\end{multline}
where $p_j=(|\vec{p}_j|,\vec{p}_j)$, $j=1,2$ and in the second line we have introduced new integration variables $\omega_j=|\vec{p}_j|$,
 $\eta=p_1\cdot p_2/(2\omega_1\omega_2)$ and used the estimate  $|\wh{J}(p)|\leq \con\,(\la+|\vec{p}|)^{-2}$.
\end{proof}

Since Weyl operators are not smooth (in fact they are not even differentiable), we could not use them in the construction in Section~\ref{hrtheory}. However, the above result is also true for the smooth operator $B=W(J)(\chi)$ obtained by smearing Weyl operator with arbitrary Schwartz function $\chi$. It immediately follows that operator $B$ fulfils the spectral condition formulated in paragraph~\ref{hrspec} of Section~\ref{hrtheory}.

\section{Proof of Lemma \ref{lem:fock}}
\label{app:proof_lemma_fock}

We denote $r^2=r_1r_2$, $\Delta r=r_2-r_1$, $\Delta s_{jk}=s_j-s_k$, $\xi^2_{jk}=l_j\cdot l_k/2$ and observe that $|\Delta r|\leq\tau w(R)$, where $\tau=\tau_2-\tau_1$. It is easy to see that for sufficiently large~$R$ there is $w(R)\leq 2w(r)\equiv 2w$, and then $|\Delta r|\leq 2 \tau w$, what we assume from now on.  We have
\begin{multline}
 I = \| E_\Omega^\perp [B_{1}[r_1,f_{1+}]^*,B_{2}[r_2,f_{2+}]] \Omega \| \\[1ex]
 = \frac{r^2}{4\pi}\Big\| \int E_\Omega^\perp [B^*_1(s_1 t + r_1 l_1),B_2(s_2 t + r_2 l_2)] \Omega\, \prod_{i=1}^2f_{i+}(s_i,l_i)d s_i d\Omega_t(l_i) \Big\|\,.
\end{multline}
As $f_{j+}\in \Sdeg{-2}_{3}$, there is $|f_{j+}(s,l)|\leq \con\,(\la+|s|)^{-4}$ for $t\cdot l=1$, $j=1,2$. To prove the thesis of Lemma, we split the above integral into the integrals over the following regions:\\
(i) $r\xi_{12} \geq \tau w$.
In this region $I_{(i)}\leq  \con\, w^{-(\kappa-2)}$ -- this is easily shown by adapting the proof of case (iii) of Theorem~\ref{limit}, letting $\xi_0=\tau w/r$, $d=2\tau w/r$.\\
(ii) $r\xi_{12} < \tau w$,  $|\Delta s_{12}| \geq \tau  w$.
 This contribution is bounded by
 \begin{equation}
  I_{(ii)}\leq \con\, r^2 \int\limits_0^{(\tau w/r)^2}\! d \xi^2_{12}  \int\limits_{|\Delta s_{12}|\geq \tau w} \frac{d s_1 d s_2}{(\la+|s_1|)^{4}(\la+|s_2|)^{4}} \leq \con\, w^{-1},
 \end{equation}
 where we used Lemma~\ref{lem:bound_convolution} to estimate the $d s_1 d s_2$ integral.\\
(iii) $r\xi_{12} < \tau w$, $|\Delta s_{12}| < \tau w$. Here we consider the squared contribution $I^2_{(iii)}$ and note that
\[
 I^2_{(iii)}\leq
 \con\, r^4\int K(y_1,y_2,y)\prod_{i=1}^4\frac{ds_i\, d\Omega_t(l_i)}{(\la+|s_i|)^{4}},
\]
where
\begin{gather*}
 y_1 = \Delta s_{12} t + r_1 l_1 - r_2 l_2\,,\quad y_2 = \Delta s_{34} t + r_1 l_3 - r_2 l_4\,,\\
 y=\frac{1}{2}((\Delta s_{13} + \Delta s_{24})t + r_1(l_1-l_3) + r_2(l_2-l_4))
\end{gather*}
and the above integral is over the region
\begin{equation}\label{setiii}
 r\xi_{12} < \tau w\,,\ r\xi_{34} < \tau w\,,\ |\Delta s_{12}|<\tau w\,,\ |\Delta s_{34}|<\tau w\,.
\end{equation}
Therefore, we can estimate further
\begin{equation}
 I^2_{(iii)}\leq
 \con\,w^4 \int \sup_{l_2,l_3}|K(y_1,y_2,y)|\prod_{i=1}^4\frac{ds_i}{(\la+|s_i|)^{4}}\,d\Omega_t(l_1)d\Omega_t(l_4)\,,
\end{equation}
where supremum is over the set restricted by the first two relations in \eqref{setiii}.
We split the set determined by \eqref{setiii} further into regions:\\
(a) $|y^0|\geq 2\tau w$. We have $|\Delta s_{13} + \Delta s_{24}| > 4 \tau w$ and $|\Delta s_{13} - \Delta s_{24}| < 2 \tau w$. Thus, $|\Delta s_{13}|,|\Delta s_{24}|\geq \tau w$, and estimating $K(y_1,y_2,y)$ by a constant we find
\begin{equation*}
 I^2_{(a)}\leq \con\, w^4 \int\limits_{\begin{smallmatrix}|\Delta s_{13}|\geq\tau w\\|\Delta s_{24}|\geq \tau w\end{smallmatrix}} \prod_{i=1}^4\frac{ds_i}{(\la+|s_i|)^{4}}
 \leq \con\, w^{-2}.
\end{equation*}
The last inequality follows from Lemma~\ref{lem:bound_convolution} applied to the $d s_1 d s_3$ and $d s_2 d s_4$ integrals.\\
(b) $|y^0|<2\tau w$, $|\vec{y}|\leq (C+2) \tau w$, where $C$ is some constant to be fixed later. Since
\begin{equation}\label{eq:fock_vec_y^2}
 |\vec{y}|^2 = r_1^2 \xi_{13}^2 + r_2^2 \xi_{24}^2+r_1 r_2 (\xi_{14}^2 + \xi_{32}^2 -\xi_{12}^2 -\xi_{34}^2)\,,
\end{equation}
in this region $r^2\xi_{14}^2 \leq |\vec{y}|^2  +r_1 r_2 (\xi_{12}^2 +\xi_{34}^2) \leq (C+3)^2\tau^2 w^2\equiv \al^2w^2$ and
 \begin{equation*}
 I^2_{(b)}\leq \con \,w^4\int\limits_0^{(\al w/r)^2}\!d\xi^2_{14}
 \leq \con\, \frac{w^6}{r^2}\,.
\end{equation*}
This vanishes in the limit $r\rightarrow\infty$ if $\eta<1/3$.\\
(c) $|y^0| < 2 \tau w$, $|\vec{y}|>(C+2) \tau w$.
In this region $|\vec{y}| - |y^0| \geq C\tau w$.
Using the identities $|\vec{y}_1|^2 = (\Delta r)^2 + 4 r^2 \xi_{12}^2$,
$|\vec{y}_2|^2 = (\Delta r)^2 + 4 r^2 \xi_{34}^2$ we obtain estimates
$|\vec{y}_j|\leq 3 \tau w$ and $|y_j| \leq |y_j^0| + |\vec{y}_j| \leq 4\tau w$, $j=1,2$.
Therefore, for $C\geq4c_1$ the terms of Assumption~\ref{cluster} are satisfied with $d\equiv d(r)=4\tau w(r)$ and we can use the estimate~\eqref{clusterest}. Moreover, using \eqref{eq:fock_vec_y^2} we find for $C\geq4$:
\[
 4|\vec{y}|^2\geq 2\left(|\vec{y}|^2 - r^2 (\xi_{12}^2 +\xi_{34}^2) + r^2 \xi_{14}^2 \right) > 2(5^2 \tau^2w^2 + r^2 \xi_{14}^2)
  \geq(5\tau w + r \xi_{14})^2\,,
\]
which implies $|\vec{y}| - |y^0|\geq \frac{1}{2}(\tau w + r \xi_{14})$.
Thus, for $C\geq\max\{4,4c_1\}$ we obtain the estimate
\begin{multline}
 I^2_{(c)}\leq \con\, w^4 \int_{0}^1 d \xi^2_{14}
 \left( c_2\, \frac{d^M}{\left( \frac{1}{2} (\tau w + r \xi_{14})\right)^{\nu}} + c_3 d^{-N} \right)
 \\
 \leq w^4 \left(\con\, w^M r^{-\nu'}+ \con\, w^{-N} \right)\,,
\end{multline}
where $\nu'=\min\{\nu,2\}$. $I^2_{(c)}$ vanishes in the limit if $\eta<\frac{\nu'}{M+4}$ and $N>4$.

\frenchspacing

\end{document}